\documentclass[a4paper,twocolumn,10pt,accepted=2020-10-21]{quantumarticle}
\pdfoutput=1
\usepackage{amsmath, amssymb, amsthm}
\usepackage{mathtools}
\usepackage{color}
\usepackage{braket}
\usepackage{comment}
\usepackage{bbm}
\usepackage[breaklinks=true]{hyperref}
\usepackage{url}
\usepackage[utf8]{inputenc}
\usepackage[english]{babel}
\usepackage[T1]{fontenc}
\usepackage[numbers,sort&compress]{natbib}

\DeclareMathOperator{\tr}{Tr}

\DeclareMathOperator*{\argmin}{argmin}
\DeclareMathOperator{\id}{id}

\allowdisplaybreaks[4]

\theoremstyle{plain}
\newtheorem{theorem}{Theorem}
\newtheorem{lemma}[theorem]{Lemma}
\newtheorem{corollary}[theorem]{Corollary}
\newtheorem{proposition}[theorem]{Proposition}
\theoremstyle{definition}
\newtheorem{definition}[theorem]{Definition}
\newtheorem{example}{Example}

\theoremstyle{remark}
\newtheorem{conjecture}{Conjecture}
\newtheorem{remark}[conjecture]{Remark}

\begin{document}

\title{General Quantum Resource Theories: Distillation, Formation and Consistent Resource Measures}

\author{Kohdai Kuroiwa}
\email{kkuroiwa@uwaterloo.ca}
\affiliation{Institute for Quantum Computing and Department of Physics and Astronomy, University of Waterloo, 200 University Avenue West, Waterloo, Ontario, N2L 3G1, Canada}
\affiliation{Department of Applied Physics, Graduate School of Engineering, The University of Tokyo, 7--3--1 Hongo, Bunkyo-ku, Tokyo 113--8656, Japan}
\author{Hayata Yamasaki}
\email{hayata.yamasaki@gmail.com}
\affiliation{Photon Science Center, Graduate School of Engineering, The University of Tokyo, 7--3--1 Hongo, Bunkyo-ku, Tokyo 113--8656, Japan}
\affiliation{Institute for Quantum Optics and Quantum Information --- IQOQI Vienna, Austrian Academy of Sciences, Boltzmanngasse 3, 1090 Vienna, Austria}
\maketitle

\begin{abstract}
Quantum resource theories (QRTs) provide a unified theoretical framework for understanding inherent quantum-mechanical properties that serve as resources in quantum information processing, but resources motivated by physics may possess structure whose analysis is mathematically intractable, such as non-uniqueness of maximally resourceful states, lack of convexity, and infinite dimension. 
We investigate state conversion and resource measures in general QRTs under minimal assumptions to figure out universal properties of physically motivated quantum resources that may have such mathematical structure whose analysis is intractable. 
In the general setting, we prove the existence of maximally resourceful states in one-shot state conversion. 
Also analyzing asymptotic state conversion, we discover \textit{catalytic replication} of quantum resources, where a resource state is infinitely replicable by free operations. 
In QRTs without assuming the uniqueness of maximally resourceful states, we formulate the tasks of distillation and formation of quantum resources, and introduce distillable resource and resource cost based on the distillation and the formation, respectively. 
Furthermore, we introduce \textit{consistent resource measures} that quantify the amount of quantum resources without contradicting the rate of state conversion even in QRTs with non-unique maximally resourceful states. 
Progressing beyond the previous work showing a uniqueness theorem for additive resource measures, we prove the corresponding uniqueness inequality for the consistent resource measures; that is, consistent resource measures of a quantum state take values between the distillable resource and the resource cost of the state. 
These formulations and results establish a foundation of QRTs applicable in a unified way to physically motivated quantum resources whose analysis can be mathematically intractable.
\end{abstract}

\section{Introduction}\label{Section1}
Advantages in quantum information processing compared to conventional classical information processing arise from various inherent properties of quantum states.
A framework for systematically investigating quantum-mechanical properties is essential for better understandings of quantum mechanics and quantum information processing.
Quantum resource theories (QRTs)~\cite{Chitambar2018} give such a framework, in which the quantum properties are regarded as resources for overcoming restrictions on operations on quantum systems; 
especially, manipulation and quantification of resources are integral parts of QRTs.  
QRTs have covered numerous aspects of quantum properties such as 
entanglement~\cite{Horodecki2009, Rain2001, Brandao2011},  
coherence~\cite{Streltsov2017, Marvian2016, Aberg2006, Baumgratz2014, Yadin2016, Chitambar2016a, Chitambar2016}, 
athermality~\cite{Janzing2000, Horodecki2013, Horodecki2013a,Halpern2017,Halpern2020}, 
magic states~\cite{Veitch2014, Howard2017},
asymmetry~\cite{Gour2008, Gour2009, Marvian2013,Marvian2014,Marvian2014a},
purity~\cite{Gour2015}, 
non-Gaussianity~\cite{Takagi2018,Lami2018,Albarelli2018,Yamasaki2019},
and non-Markovianity~\cite{Wakakuwa2017,Wakakuwa2019}. 
Recently, QRTs for a general resource have been studied to figure out common structures shared among known QRTs and to understand the quantum properties systematically~\cite{Horodecki2013b, Brandao2015, Korzekwa2019,Liu_CL2019,Liu_ZW2019a,Takagi2019b,Takagi2019a,Fang2020,Regula2020,Vijayan2019}. 

However, general QRTs are not necessarily mathematically tractable to analyze, and simply extending the formulation of a known QRT such as bipartite entanglement is insufficient.  
For example, maximal resources in the QRT of magic states~\cite{Veitch2014} and the QRT of coherence with physically incoherent operations (PIO)~\cite{Chitambar2016a} are not unique. 
Gaussian states~\cite{Eisert2005,Holevo2007}, quantum discord~\cite{Bera2017}, and quantum Markov chain~\cite{Gudder2008} are quantum-mechanical properties emerging in a non-convex quantum state space. Gaussian operations~\cite{Eisert2005,Holevo2007} are conventionally defined on a non-convex state space while existing QRTs of non-Gaussianity~\cite{Takagi2018,Lami2018,Albarelli2018,Yamasaki2019} are formulated on a convex state space.
Furthermore, the state spaces of QRTs of non-Gaussianity are infinite-dimensional, and analysis of QRTs on finite-dimensional quantum systems is not necessarily applicable to infinite-dimensional systems.
The analyses of these physically motivated quantum properties are mathematically intractable.

To analyze general quantum properties including those mentioned in the previous paragraph, we investigate manipulation and quantification of quantum resources in general QRTs that are physically motivated but hard to analyze.
We do not make mathematical assumptions such as the existence of a unique maximal resource, a convex state space, and a finite-dimensional state space.

In this paper, we take a position that free operations determine free states. A free operation is an element in a subset of quantum operations. The set of free operations describes what is possible for free when we operate on a quantum system. A quantum state that may not be obtained by free operations is regarded as a resource state, while a quantum state freely obtained by free operations is called a free state.
Convertibility of quantum states under free operations introduces a mathematical concept of order, \textit{preorder}, of the states in terms of resourcefulness.
A maximally resourceful state is a special resource state at the top of this ordering, regarded as a unit of the resource. The existence of maximally resourceful states is essential for quantifying quantum resources. 
Due to the generality of our formulation, the existence of a maximally resourceful state is not obvious, but we prove that a maximally resourceful state always exists by introducing compactness in our framework. 
Furthermore, we analyze the set of free states, and clarify a condition where a maximally resourceful state is not free. 

To investigate manipulation of a quantum resource in general QRTs, we analyze one-shot and asymptotic state conversion in the general framework of QRTs, rather than specific resources.
We discover a type of quantum resource with a counter-intuitive property, 
which is a resource state that is not free to generate but can be replicated infinitely by free operations using a given copy catalytically.
While catalytic conversion of quantum resources is originally found in entanglement theory~\cite{Jonathan1999}, our discovery provides another form of catalytic property of quantum resources.  
We call this resource state a \textit{catalytically replicable state}.
In addition, we formulate resource conversion tasks in general QRTs, namely, \textit{distillation} and \textit{formation} of a resource~\cite{Bennett1996}, and introduce general definitions of the \textit{distillable resource} and the \textit{resource cost} through these tasks, which generalize those defined for bipartite entanglement~\cite{Bennett1996,Hayden2001}, coherence~\cite{Winter2016}, and athermality~\cite{Horodecki2013}.
Formulation of the distillation and the formation of a resource is not straightforward when the QRT has non-unique maximally resourceful states. To overcome this issue, we formulate the distillable resource as how many resources can be extracted from the state in the worst-case scenario, and the resource cost as how many resources are needed to generate the state in the best-case scenario. Under this formulation, we identify a condition of the distillable resource being smaller than the resource cost. 

A resource measure is a tool for quantifying resources. In the QRT of bipartite entanglement, it is known that a resource measure satisfying certain properties given in Ref.~\cite{Donald2002} is lower-bounded by the distillable resource and upper-bounded by the resource cost, which we call the uniqueness inequality.
In this paper, we show that the uniqueness inequality holds for a general QRT under the same properties even in infinite-dimensional cases, but at the same time show that these properties applicable to the QRT of bipartite entanglement are too strong 
to be satisfied in known QRTs such as magic states~\cite{Veitch2014}.
Motivated by this issue, we introduce a concept of \textit{consistent resource measures}, which provide quantification of quantum resources without contradicting the rate of asymptotic state conversion.
We prove that the uniqueness inequality also holds for the consistent resource measure and observe that this uniqueness inequality is more widely applicable than the uniqueness inequality previously proved through the axiomatic approach.
Moreover, we show that the regularized relative entropy of resource serves as a consistent resource measure, generalizing the existing results in reversible QRTs~\cite{Brandao2015}.

These formulations and results establish a framework of general QRTs that are applicable even to physically motivated restrictions on quantum operations whose analysis is mathematically intractable.
We here point out that whether a given QRT is physically motivated or not is indeed a subjective issue, depending on what operation we assume to be free and what quantum property we regard as a resource.
Remarkably, one can use our general results on manipulation and quantification of quantum resources regardless of whether the resources are physical or not from the subjective viewpoint; in particular, our general results on QRTs are applicable to whatever QRTs that one may consider to be ``physical''.
The significance of our contribution is the full generality of the results, which opens a way to stop the controversy over what QRTs are physically meaningful, so that anyone can suitably enjoy the benefit of the general framework of QRTs for quantitatively analyzing quantum properties of interest.
These results lead to a theoretical foundation for further understandings of quantum-mechanical phenomena through a systematic approach based on QRTs.

The rest of this paper is organized as follows.
In Sec.~\ref{Preliminary}, we recall descriptions of infinite-dimensional quantum mechanics and provide a framework of general QRTs. In Sec.~\ref{Section2}, we investigate maximally resourceful states and free states in general QRTs. In Sec.~\ref{Section3}, we analyze manipulation of quantum states in general QRTs, especially, asymptotic state conversion. In Sec.~\ref{Section4}, we focus on the distillation of a resource from a quantum state and the formation of a quantum state from a resource, and prove the uniqueness inequality. In Sec.~\ref{Section5}, we investigate the quantification of a resource, introducing and analyzing a consistent resource measure.
Our conclusion is given in Sec.~\ref{Section6}.

\section{Preliminaries}\label{Preliminary}
We provide preliminaries to quantum resource theories (QRTs) that we analyze in this paper.
In Sec.~\ref{Prelim_a}, we present notations on describing quantum mechanics on infinite-dimensional quantum systems that QRTs in this paper cover.
In Sec.~\ref{Prelim_b}, we recall a formulation of QRTs.
In Sec.~\ref{prelim_c}, we recall the definition of a preorder, and introduce a preorder introduced by free operations in QRTs.
The readers who are interested in QRTs on finite-dimensional quantum systems and are familiar with finite-dimensional quantum mechanics can skip Sec.~\ref{Prelim_a}, while we may use notions summarized in Sec.~\ref{Prelim_a} to show our result in Sec.~\ref{Subsection2_c}.

\subsection{Quantum Mechanics on Infinite-Dimensional Quantum Systems}\label{Prelim_a}

We provide mathematical notations of quantum mechanics that cover infinite-dimensional quantum systems.
Notice that some inherent properties of quantum mechanics, such as non-Gaussianity~\cite{Takagi2018,Lami2018,Albarelli2018,Yamasaki2019}, are easier to formulate on an infinite-dimensional quantum system than its approximation by a finite-dimensional quantum system. 
As for proofs of mathematical facts that we use in the following, see, \textit{e.g.}, Refs.~\cite{ReedandSimon1981,Takesaki2002}.

To represent a (finite- and infinite-dimensional) quantum system, we use a complex Hilbert space $\mathcal{H}$, \textit{i.e.}, a complex inner product space that is also a complete metric space with respect to the distance function induced by the inner product.
We represent a multipartite system as a tensor product of the Hilbert spaces representing its subsystems.
We may write an orthonormal basis (\textit{i.e.}, a complete orthonormal system) of $\mathcal{H}$ as
\begin{equation}
\label{eq:basis}
B_\mathcal{H}\coloneqq{\left\{\Ket{k}\right\}}_k.
\end{equation}
In cases where $\mathcal{H}$ represents a $D$-dimensional system, $B_\mathcal{H}$ is a finite set with cardinality $D$, while $B_\mathcal{H}$ can be an uncountable set in this paper.

We use a subset of operators on $\mathcal{H}$, in particular, the von Neumann algebra, to describe quantum mechanics on the system represented by $\mathcal{H}$, since $\mathcal{H}$ can be infinite-dimensional.
Let $\mathcal{L}\left(\mathcal{H}\right)$ denote the set of linear operators on $\mathcal{H}$.
Let $\mathcal{B}\left(\mathcal{H}\right)\subset\mathcal{L}\left(\mathcal{H}\right)$ denote the set of bounded operators, that is, for any $A\in\mathcal{B}\left(\mathcal{H}\right)$, the operator norm $\left\|A\right\|_\infty\coloneqq\sup\left\{{\left\|A\Ket{v}\right\|}_\mathcal{H}:\Ket{v}\in\mathcal{H},\,{\left\|\Ket{v}\right\|}_\mathcal{H}\leqq 1\right\}$ is bounded, where ${\left\|\cdot\right\|}_\mathcal{H}$ denotes the norm induced by the inner product of $\mathcal{H}$.

To define trace-class operators,
we use the trace defined as $\tr T\coloneqq\sum_{\Ket{k}\in B_\mathcal{H}}\Braket{k|T|k}$, where $B_\mathcal{H}$ is an orthonormal basis of $\mathcal{H}$ defined in~\eqref{eq:basis}, each term on the right-hand side denotes the inner product of $T\Ket{k}$ and $\Ket{k}$ on $\mathcal{H}$, and if $B_\mathcal{H}$ is an infinite set, the summation on the right-hand side means the limit of a net, \textit{i.e.}, a generalization of sequence.
While a sequence $(a_n:n\in\mathbb{N})$ is indexed by a natural number, that is, a countably infinite and totally ordered set, a net $a_n$ is indexed by $n$ in a directed set, a generalization of totally ordered sets, while this directed set can be uncountable.
In the definition of the trace, the net is indexed by any finite subset $B_\mathcal{H}^\prime\subset B_\mathcal{H}$ and defined as $\sum_{\Ket{k}\in B_\mathcal{H}^\prime}\Braket{k|T|k}$, which approaches to the limit $\tr T$ as $B_\mathcal{H}^\prime$ gets larger.
Note that $\tr T$ is independent of the choice of $B_\mathcal{H}$.
Let $\mathcal{T}\left(\mathcal{H}\right)\subset\mathcal{B}\left(\mathcal{H}\right)$ denote the set of trace-class operators; that is, for any $T\in\mathcal{T}\left(\mathcal{H}\right)$, we have finite and hence well-defined $\tr\left|T\right|$, where $\left|T\right|\coloneqq\sqrt{T^\dag T}$, and $T^\dag$ denotes the conjugation of $T$.
For any $T\in\mathcal{T}\left(\mathcal{H}\right)$, the trace norm of $T$ is defined as
\begin{equation}
  \label{eq:trace_norm}
  \left\|T\right\|_1\coloneqq\tr\left|T\right|.
\end{equation}

To define the von Neumann algebra, we need to discuss convergence of bounded operators mathematically. To discuss convergence of bounded operators, we need a topology defined for $\mathcal{B}\left(\mathcal{H}\right)$, and we use the ultraweak operator topology.
The ultraweak operator topology of $\mathcal{B}\left(\mathcal{H}\right)$ is a topology where any sequence $A_1,A_2,\ldots\in \mathcal{B}\left(\mathcal{H}\right)$, or more generally, any net $A_i$, converges to $A$ if and only if $\tr\left[TA_i\right]$ converges to $\tr\left[TA\right]$ for any $T\in\mathcal{T}\left(\mathcal{H}\right)$.
A von Neumann algebra $\mathcal{M}$ on $\mathcal{B}\left(\mathcal{H}\right)$ is a subset of $\mathcal{B}\left(\mathcal{H}\right)$ (or $\mathcal{B}\left(\mathcal{H}\right)$ itself) that contains the identity operator $\mathbbm{1}$ on $\mathcal{H}$, is closed under linear combination, product, and conjugation, and is also closed in terms of the ultraweak operator topology.

A noncommutative von Neumann algebra can be used for describing a quantum system, while a commutative von Neumann algebra for a classical system.
To describe quantum mechanics on $\mathcal{H}$, we use a set of operators represented as a von Neumann algebra $\mathcal{M}$ on $\mathcal{B}\left(\mathcal{H}\right)$.
For example, for any finite-dimensional Hilbert space $\mathcal{H}$, the algebra of all the linear operators $\mathcal{L}\left(\mathcal{H}\right)=\mathcal{B}\left(\mathcal{H}\right)$ is a von Neumann algebra, which suffices to describe the finite-dimensional quantum mechanics.
More generally, for any $\mathcal{H}$ that can be infinite-dimensional, the algebra of all the bounded operators $\mathcal{B}\left(\mathcal{H}\right)$ is a von Neumann algebra.
In this paper, a system $\mathcal{H}$ is always accompanied with a set of operators $\mathcal{M}$ therein, where we may implicitly consider $\mathcal{M}=\mathcal{B}\left(\mathcal{H}\right)$ unless stated otherwise.

Given that a quantum state associates a measurement of an observable with a probability of a measurement outcome,
we introduce a quantum state using a linear functional from an operator to a scalar.
In particular, for a system $\mathcal{H}$ with $\mathcal{M}$, a state is defined as a linear functional $f_\psi:\mathcal{M}\to\mathbb{C}$ that is positive semidefinite $f_\psi\left(M^\dag M\right)\geqq 0,\,\forall M\in\mathcal{M}$,
satisfies the normalization condition $f_\psi\left(\mathbbm{1}\right)=1$,
and is also normal, \textit{i.e.}, continuous in terms of the ultraweak operator topology.
We require this continuity in order for $f_\psi$ to be in the predual $\mathcal{M}_\ast$ of $\mathcal{M}$, \textit{i.e.}, the space whose dual ${\left(\mathcal{M}_\ast\right)}^\ast$ equals (can be identified with) $\mathcal{M}$.
Given this duality and under the condition of $\mathcal{M}=\mathcal{B}\left(\mathcal{H}\right)$, we identify $f_\psi$ with the operator $\psi\in\mathcal{T}\left(\mathcal{H}\right)$ that satisfies for any $M\in\mathcal{M}$
\begin{equation}
  \label{eq:state_identification}
  \tr\left[\psi M\right]=f_\psi\left(M\right).
\end{equation}
Note that we have one-to-one correspondence between $f_\psi$ and $\psi$ if $\mathcal{M}=\mathcal{B}\left(\mathcal{H}\right)$.
This operator $\psi$ is the density operator representing the state $f_\psi$,
and let $\mathcal{D}\left(\mathcal{H}\right)\coloneqq\left\{\psi\in\mathcal{T}\left(\mathcal{H}\right):\psi\geqq 0, \tr\psi=1\right\}$ denote the set of density operators on $\mathcal{H}$ with $\mathcal{M}=\mathcal{B}\left(\mathcal{H}\right)$.
For simplicity, we may call $\psi$ a quantum state, rather than $f_\psi$.

We introduce a quantum channel on a system $\mathcal{H}$ with $\mathcal{M}=\mathcal{B}\left(\mathcal{H}\right)$ in the Heisenberg picture as a completely positive and unital linear map on $\mathcal{M}$, which correspondingly yields the definition of the channel in the Schr\"{o}dinger picture as a completely positive and trace-preserving linear map of density operators on $\mathcal{H}$.
Given two systems $\mathcal{H}^{\left(\mathrm{in}\right)}$ with $\mathcal{M}^{\left(\mathrm{in}\right)}$ and $\mathcal{H}^{\left(\mathrm{out}\right)}$ with $\mathcal{M}^{\left(\mathrm{out}\right)}$ representing the spaces of the input and the output respectively, a channel $\tilde{\mathcal{E}}:\mathcal{M}^{\left(\mathrm{out}\right)}\to\mathcal{M}^{\left(\mathrm{in}\right)}$ in the Heisenberg picture is defined as a linear map that is completely positive
\begin{align}
  &\sum_{j,k=0}^{n-1}M_j^{\left(\mathrm{in}\right)}{}^\dag\tilde{\mathcal{E}}\left(M_j^{\left(\mathrm{out}\right)}{}^\dag M_k^{\left(\mathrm{out}\right)}\right)M_k^{\left(\mathrm{in}\right)}\geqq 0,\nonumber\\
  &\forall M_0^{\left(\mathrm{in}\right)},\ldots,M_{n-1}^{\left(\mathrm{in}\right)}\in\mathcal{M}^{\left(\mathrm{in}\right)},\nonumber\\
  &\forall M_0^{\left(\mathrm{out}\right)},\ldots,M_{n-1}^{\left(\mathrm{out}\right)}\in\mathcal{M}^{\left(\mathrm{out}\right)},\nonumber\\
  &\forall n\in\mathbb{N},
\end{align}
unital $\tilde{\mathcal{E}}\left(\mathbbm{1}^{\left(\mathrm{out}\right)}\right)=\mathbbm{1}^{\left(\mathrm{in}\right)}$,
and normal, \textit{i.e.}, continuous in terms of the ultraweak operator topologies of $\mathcal{M}^{\left(\mathrm{out}\right)}$ and $\mathcal{M}^{\left(\mathrm{in}\right)}$,
where $\mathbbm{1}^{\left(\mathrm{out}\right)}$ and $\mathbbm{1}^{\left(\mathrm{in}\right)}$ are the identity operators on $\mathcal{H}^{\left(\mathrm{out}\right)}$ and $\mathcal{H}^{\left(\mathrm{in}\right)}$, respectively.
In the same way as the identification~\eqref{eq:state_identification} of a functional $f_\psi$ of a state with the density operator $\psi$ of the state, under the conditions of $\mathcal{M}^{\left(\mathrm{in}\right)}=\mathcal{B}\left(\mathcal{H}^{\left(\mathrm{in}\right)}\right)$ and $\mathcal{M}^{\left(\mathrm{out}\right)}=\mathcal{B}\left(\mathcal{H}^{\left(\mathrm{out}\right)}\right)$, we identify a channel $\tilde{\mathcal{E}}:\mathcal{M}^{\left(\mathrm{out}\right)}\to\mathcal{M}^{\left(\mathrm{in}\right)}$ in the Heisenberg picture with the channel $\mathcal{E}:\mathcal{T}\left(\mathcal{H}^{\left(\mathrm{in}\right)}\right)\to\mathcal{T}\left(\mathcal{H}^{\left(\mathrm{out}\right)}\right)$ in the Schr\"{o}dinger picture that satisfies for any $M^{\left(\mathrm{out}\right)}\in\mathcal{M}^{\left(\mathrm{out}\right)}$
\begin{equation}
  \label{eq:heisenberg_schrodinger}
  \tr\left[\mathcal{E}\left(\psi\right)M^{\left(\mathrm{out}\right)}\right]=\left(f_\psi\circ\tilde{\mathcal{E}}\right)\left(M^{\left(\mathrm{out}\right)}\right),
\end{equation}
where $\psi$ and $f_\psi$ are related as~\eqref{eq:state_identification}.
Note that $\mathcal{E}$ is a completely positive and trace-preserving (CPTP) linear map by definition, and if $\mathcal{M}^{\left(\mathrm{in}\right)}=\mathcal{B}\left(\mathcal{H}^{\left(\mathrm{in}\right)}\right)$ and $\mathcal{M}^{\left(\mathrm{out}\right)}=\mathcal{B}\left(\mathcal{H}^{\left(\mathrm{out}\right)}\right)$, we have one-to-one correspondence between $\tilde{\mathcal{E}}$ and $\mathcal{E}$.

The set of channels from an input system $\mathcal{H}^{\left(\mathrm{in}\right)}$ with $\mathcal{M}^{\left(\mathrm{in}\right)}$ to an output system $\mathcal{H}^{\left(\mathrm{out}\right)}$ with $\mathcal{M}^{\left(\mathrm{out}\right)}$ is denoted by $\mathcal{C}\left(\mathcal{H}^{\left(\mathrm{in}\right)}\to\mathcal{H}^{\left(\mathrm{out}\right)}\right)$, which we may write $\mathcal{C}\left(\mathcal{H}\right)$ if $\mathcal{H}=\mathcal{H}^{\left(\mathrm{in}\right)}=\mathcal{H}^{\left(\mathrm{out}\right)}$.
In this paper, we use the Schr\"{o}dinger picture with $\mathcal{M}^{\left(\mathrm{in}\right)}=\mathcal{B}\left(\mathcal{H}^{\left(\mathrm{in}\right)}\right)$ and $\mathcal{M}^{\left(\mathrm{out}\right)}=\mathcal{B}\left(\mathcal{H}^{\left(\mathrm{out}\right)}\right)$; that is, $\mathcal{C}\left(\mathcal{H}^{\left(\mathrm{in}\right)}\to\mathcal{H}^{\left(\mathrm{out}\right)}\right)$ is the set of the CPTP linear maps, while it would be possible to use the Heisenberg picture otherwise. We represent quantum operations as channels, while it is possible to include measurements in our formulation as channels from a quantum input system to a classical output system.

To discuss compactness of a set of states,
we need further definitions of topologies for the set of states.
A compact set in terms of some topology is a set where for any net in the set, there exists a subnet that converges in terms of the topology, where a subnet generalizes a subsequence in the same way as the net generalizing the sequence.
A closed set in terms of some topology is a set where for any net that converges in terms of this topology, its limit point is in the set.
Note that a compact set in a Hausdorff space is a closed set, which holds in our case.
Several different topologies can be defined for the set of states.
The weak operator topology of $\mathcal{T}\left(\mathcal{H}\right)$ is a topology where any net $T_i$ in $\mathcal{T}(\mathcal{H})$ converges to $T$ if and only if $\tr\left[T_i A\right]$ converges to $\tr\left[TA\right]$ for any $A\in\mathcal{B}\left(\mathcal{H}\right)$.
The trace norm topology of $\mathcal{T}\left(\mathcal{H}\right)$ is a topology where any net $T_i$ in $\mathcal{T}(\mathcal{H})$ converges to $T$ if and only if ${\left\|T_i-T\right\|}_1$ converges to $0$.
The trace norm topology is stronger than the ultraweak operator topology, and the ultraweak operator topology is stronger than the weak operator topology, while these topologies are the same in the finite-dimensional case.
In general, whether a set is compact or not may depend on the choice of the topology, but we show in Appendix~\ref{Appendix_A} that the compactness of a set of states in terms of these topologies are equivalent.
Thus, it suffices to consider the trace norm topology when we discuss compactness of a set of states.
Then, in the trace norm topology, compactness is equivalent to sequential compactness, and hence we may use a sequence rather than a net to discuss compactness of a set of states.
Note that if $\mathcal{H}$ is finite-dimensional, $\mathcal{D}\left(\mathcal{H}\right)$ is compact, while $\mathcal{D}\left(\mathcal{H}\right)$ for an infinite-dimensional system is not compact in terms of the trace norm topology.

To discuss convergence and compactness of channels, we need a topology defined for $\mathcal{C}\left(\mathcal{H}^{\left(\mathrm{in}\right)}\to\mathcal{H}^{\left(\mathrm{out}\right)}\right)$, and we use the bounded weak (BW) topology.
The bounded weak topology of $\mathcal{C}\left(\mathcal{H}^{\left(\mathrm{in}\right)}\to\mathcal{H}^{\left(\mathrm{out}\right)}\right)$ is the weakest topology such that for any $f_\psi\in\mathcal{M}^{\left(\mathrm{in}\right)}_\ast$ and $M^{\left(\mathrm{out}\right)}\in\mathcal{M}^{\left(\mathrm{out}\right)}$, a map $\mathcal{S}_{\psi,M^{\left(\mathrm{out}\right)}}:\mathcal{C}\left(\mathcal{H}^{\left(\mathrm{in}\right)}\to\mathcal{H}^{\left(\mathrm{out}\right)}\right)\to\mathbb{C}$ given by
$\mathcal{S}_{\psi,M^{\left(\mathrm{out}\right)}}\left(\mathcal{E}\right)=\left(f_\psi\circ\tilde{\mathcal{E}}\right)\left(M^{\left(\mathrm{out}\right)}\right)$
is continuous, where $\mathcal{E}$ and $\tilde{\mathcal{E}}$ are related as~\eqref{eq:heisenberg_schrodinger}.
Note that if $\mathcal{H}^{\left(\mathrm{in}\right)}$ and $\mathcal{H}^{\left(\mathrm{out}\right)}$ are finite-dimensional, $\mathcal{C}\left(\mathcal{H}^{\left(\mathrm{in}\right)}\to\mathcal{H}^{\left(\mathrm{out}\right)}\right)$ is compact, while $\mathcal{C}\left(\mathcal{H}^{\left(\mathrm{in}\right)}\to\mathcal{H}^{\left(\mathrm{out}\right)}\right)$ for infinite-dimensional systems is not compact in terms of the BW topology.

\subsection{Framework of Quantum Resource Theories}\label{Prelim_b}

In this section, we provide a formulation of quantum resource theories (QRTs) starting from free operations with minimal assumptions. In the definition, we consider a compact set of the free operations.
We also present justification of the compactness by examples from the perspective of indistinguishability.
To represent the state set of interest in a QRT, \textit{e.g.}, the set of pure states on finite-dimensional $\mathcal{H}$,
we consider a compact set of quantum states chosen as desired
\begin{equation}
  \mathcal{S}\left(\mathcal{H}\right) \subseteq \mathcal{D}\left(\mathcal{H}\right).
\end{equation}
Note that the quantum system $\mathcal{H}$ can be infinite-dimensional as we have introduced in Sec.~\ref{Prelim_a}.

Free operations in our formulation are introduced as follows~\cite{Chitambar2018}. Let $\mathcal{O}$ be a mapping that takes two quantum systems $\mathcal{H}^{\left(\mathrm{in}\right)}$ and $\mathcal{H}^{\left(\mathrm{out}\right)}$ and outputs a compact set of completely positive and trace-preserving (CPTP) maps from $\mathcal{S}\left(\mathcal{H}^{\left(\mathrm{in}\right)}\right)$ to $\mathcal{S}\left(\mathcal{H}^{\left(\mathrm{out}\right)}\right)$.
This set is denoted by
\begin{equation}
\mathcal{O}\left(\mathcal{H}^{\left(\mathrm{in}\right)} \to \mathcal{H}^{\left(\mathrm{out}\right)}\right)\subseteq\mathcal{C}\left(\mathcal{H}^{\left(\mathrm{in}\right)} \to \mathcal{H}^{\left(\mathrm{out}\right)}\right).
\end{equation}
A map contained in $\mathcal{O}\left(\mathcal{H}^{\left(\mathrm{in}\right)} \to \mathcal{H}^{\left(\mathrm{out}\right)}\right)$ is called a \textit{free operation} from $\mathcal{H}^{\left(\mathrm{in}\right)}$ to $\mathcal{H}^{\left(\mathrm{out}\right)}$. If the input space and output space are the same quantum system $\mathcal{H}$, we write the set of free operations from $\mathcal{H}$ to $\mathcal{H}$ as $\mathcal{O}\left(\mathcal{H}\right)\subseteq\mathcal{C}\left(\mathcal{H}\right)$.
We consider a compact set because two arbitrarily close CPTP maps are indistinguishable by any protocol in a task of channel discrimination~\cite{Kitaev1997}, as we will discuss below by examples.
We assume that $\mathcal{O}$ satisfies the following axioms of QRTs:
\begin{enumerate}
  \item Let $\mathcal{H}_1$, $\mathcal{H}_2$ and $\mathcal{H}_3$ be arbitrary quantum systems. For any $\mathcal{M}\in\mathcal{O}\left(\mathcal{H}_{1} \to \mathcal{H}_2\right)$ and $\mathcal{N}\in\mathcal{O}\left(\mathcal{H}_{2} \to \mathcal{H}_3\right)$, it holds that $\mathcal{N}\circ\mathcal{M}\in\mathcal{O}\left(\mathcal{H}_{1} \to \mathcal{H}_3\right)$, where $\circ$ represents the composition.
  \item Let $\mathcal{H}^{\left(\mathrm{in}\right)}_1$, $\mathcal{H}^{\left(\mathrm{out}\right)}_1$, $\mathcal{H}^{\left(\mathrm{in}\right)}_2$ and $\mathcal{H}^{\left(\mathrm{out}\right)}_2$ be arbitrary quantum systems. For any $\mathcal{M}\in\mathcal{O}(\mathcal{H}^{\left(\mathrm{in}\right)}_1 \to \mathcal{H}^{\left(\mathrm{out}\right)}_1)$ and $\mathcal{N}\in\mathcal{O}(\mathcal{H}^{\left(\mathrm{in}\right)}_2 \to \mathcal{H}^{\left(\mathrm{out}\right)}_2)$, it holds that $\mathcal{M}\otimes\mathcal{N}\in\mathcal{O}(\mathcal{H}^{\left(\mathrm{in}\right)}_1\otimes\mathcal{H}^{(\textrm{in})}_{2} \to \mathcal{H}^{\left(\mathrm{out}\right)}_1\otimes\mathcal{H}^{(\textrm{out})}_{2})$.
  \item Let $\mathcal{H}$ be an arbitrary quantum system. Then, it holds that $\id\in\mathcal{O}\left(\mathcal{H}\right)$, where $\id$ is an identity map.
  \item Let $\mathcal{H}$ be an arbitrary quantum system. Then, it holds that $\tr\in\mathcal{O}\left(\mathcal{H}\to\mathbb{C}\right)$, where $\tr$ is the trace. Note that due to the above conditions, it is necessary that the partial trace is also free.
\end{enumerate}
The meanings of these axioms are as follows:
\begin{enumerate}
  \item We always have access to free operations and can use free operations as many times as necessary.
  \item We can arbitrarily apply free operations to a quantum system regardless of what free operations are applied to another quantum system.
  \item Doing nothing is free.
  \item Ignorance is free.
\end{enumerate}
\begin{remark}[Operations Not Satisfying the Axioms]
There can be classes of operations that do not satisfy the axioms stated above. 
For example, Refs.~\cite{Brandao2008} and \cite{Brandao2015} consider $\epsilon$-resource non-generating operations. 
However, the composition of two $\epsilon$-resource non-generating operations is not necessarily an $\epsilon$-resource non-generating operation, which implies the set of $\epsilon$-resource non-generating operations does not satisfy the first axiom. 
Hence, we do not employ this class of operations as free operations since they are not free to use multiple times.
In addition, Ref.~\cite{Contreras-Tejada2019} considers separability preserving (SEPP) operations. However, the set of SEPP operations is not closed under tensor product, and hence does not satisfy the second axiom.
We do not use these operations as free operations since they are not free to apply to multiple quantum systems simultaneously.
\end{remark}

In the definition above, we use a compact set as the set of free operations.
Some classes of operations that are conventionally used as free operations do not satisfy this compactness,
such as local quantum operations and classical communication (LOCC) in the QRT of bipartite entanglement~\cite{Chitambar2014}.
However, in this case, we take a position that the closure of LOCC, \textit{i.e.}, a compact superset of LOCC, can be considered to be free in the sense that any channel in the closure of LOCC is indistinguishable from a channel implementable in the setting of LOCC, as discussed in Example~\ref{ex:locc}.
In the same way, Example~\ref{ex:circuit} shows that we conventionally consider any unitary transformation to be implementable by the Clifford+$T$ gate set in the sense that any unitary can be approximated with arbitrary precision by this gate set.
Note that the compactness of the set of free operations is essential for guaranteeing the existence of maximally resourceful states as we will see in Sec.~\ref{Subsection2_c}.

\begin{example}[LOCC and Closure of LOCC]\label{ex:locc}
  In the case of the QRT of entanglement, LOCC is conventionally considered to be physically implementable operations,
  but our formulation of QRTs may use the closure of LOCC in this case as a compact set of free operations instead of LOCC\@.
  In particular, let
  $\mathcal{O}_\mathrm{LOCC}\left(\mathcal{H}^{\left(\mathrm{in}\right)}\to\mathcal{H}^{\textrm{out}}\right)$
  be the set of LOCC from
  $\mathcal{H}^{\left(\mathrm{in}\right)}$
  to
  $\mathcal{H}^{\left(\mathrm{out}\right)}$.
  It is known that
  $\mathcal{O}_\mathrm{LOCC}\left(\mathbb{C}^{4}\to\mathbb{C}^{4}\right)$
  is not closed; that is,
  $\overline{\mathcal{O}_\mathrm{LOCC}\left(\mathbb{C}^{4}\to\mathbb{C}^{4}\right)}
  \neq \mathcal{O}_\mathrm{LOCC}\left(\mathbb{C}^{4}\to\mathbb{C}^{4}\right)$~\cite{Chitambar2014}. In this case, we use $\mathcal{O}\left(\mathbb{C}^{4}\to\mathbb{C}^{4}\right)=\overline{\mathcal{O}_\mathrm{LOCC}\left(\mathbb{C}^{4}\to\mathbb{C}^{4}\right)}$ as the set of free operation because for any CPTP map $\mathcal{N}\in\overline{\mathcal{O}_\mathrm{LOCC}\left(\mathbb{C}^{4}\to\mathbb{C}^{4}\right)}\setminus\mathcal{O}_\mathrm{LOCC}\left(\mathbb{C}^{4}\to\mathbb{C}^{4}\right)$ and any $\epsilon>0$, we can construct a CPTP map $\tilde{\mathcal{N}}\in\mathcal{O}_\mathrm{LOCC}\left(\mathbb{C}^{4}\to\mathbb{C}^{4}\right)$ that is indistinguishable from $\mathcal{N}$ up to an $\epsilon$ probability by any protocol in a task of channel discrimination~\cite{Kitaev1997}.
\end{example}

In the next example, 
we consider a situation of universal quantum computation where any finite-depth quantum circuit composed of a universal gate set is implementable as the free operations.

\begin{example}\label{ex:circuit}
For any finite-dimensional quantum systems $\mathcal{H}^{\left(\mathrm{in}\right)}$ and $\mathcal{H}^{\left(\mathrm{out}\right)}$, 
we define $\mathcal{O}'\left(\mathcal{H}^{\left(\mathrm{in}\right)}\to\mathcal{H}^{\left(\mathrm{out}\right)}\right)$ as the set of CPTP maps that can be realized by a finite-depth circuit composed of the identity gate, the partial trace, the Hadamard gate $H$, the controlled-NOT gate, the $\pi/8$ phase gate $T$, appending $\ket{0}$, and the measurement in the computational basis.
Here, a finite-depth circuit refers to a $d$-depth circuit for some integer $d\geqq0$, 
and operations conditioned on measurement outcomes are allowed.
Conventionally, combination of these operations can be considered to be universal because  $\mathcal{O}'\left(\mathcal{H}^{\left(\mathrm{in}\right)}\to\mathcal{H}^{\left(\mathrm{out}\right)}\right)$ is dense in the set of all the channels $\mathcal{C}\left(\mathcal{H}^{\left(\mathrm{in}\right)}\to\mathcal{H}^{\left(\mathrm{out}\right)}\right)$, but $\mathcal{O}'\left(\mathcal{H}^{\left(\mathrm{in}\right)}\to\mathcal{H}^{\left(\mathrm{out}\right)}\right)$ is different from $\mathcal{C}\left(\mathcal{H}^{\left(\mathrm{in}\right)}\to\mathcal{H}^{\left(\mathrm{out}\right)}\right)$
since a finite number of these gates cannot exactly implement qubit rotation at an arbitrary angle~\cite{Kitaev1997, nielsen_chuang_2010}.
In this case, our framework may use the closure of the set of operations implemented by all the $d$-depth circuits for any $d$ as the set of the free operations; that is, we take the set of free operations in this case as the set of all the channels $\mathcal{O}\left(\mathcal{H}^{\left(\mathrm{in}\right)}\to\mathcal{H}^{\left(\mathrm{out}\right)}\right)=\mathcal{C}\left(\mathcal{H}^{\left(\mathrm{in}\right)}\to\mathcal{H}^{\left(\mathrm{out}\right)}\right)$.
\end{example}

We do not assume the convexity of the set of free operations in our framework. Convex QRTs are a class of QRTs where the set of free operations is convex. 
For instance, the QRT of bipartite entanglement~\cite{Chitambar2014}, the QRT of coherence~\cite{Winter2016} and the QRT of magic states~\cite{Veitch2014} are known as convex QRTs. 
We can achieve a convex combination of operations using classical randomness. 
In general, randomness is regarded as a resource~\cite{Groisman2005,Anshu2018}, and randomness generation~\cite{Pir2019} is indeed a promising application of noisy intermediate-scale quantum (NISQ) devices~\cite{Preskill2018}; therefore, we also consider non-convex QRTs in our framework, such as the following example.

\begin{example}[Non-Convex QRT]
The QRT of non-Markovianity~\cite{Wakakuwa2017,Wakakuwa2019} is known as a non-convex QRT, where the set of free operations is not convex. 
\end{example}

\subsection{Preorder in Quantum Resource Theories}~\label{prelim_c}
In this section, we recall the definition of a \textit{preorder} in order theory, 
and provide the definition of the preorder of resourcefulness of quantum states introduced by free operations.

Suppose that $S$ is a set and $\succeq$ is a binary relation on $S$.
Then, the relation $\succeq$ is called a \textit{preorder} if it holds that
\begin{align}
	a &\succeq a, \\
	a &\succeq b , b\succeq c \Rightarrow a \succeq c
\end{align}
for all $a,b,c \in S$.
This preorder introduces \textit{maximal elements} and \textit{minimal elements} 
in the set $S$. 
If $a \in S$ satisfies
\begin{equation}
	b \succeq a \Rightarrow a \succeq b
\end{equation}
for all $b \in S$, $a$ is called a maximal element. 
Similarly, if $a \in S$ satisfies
\begin{equation}
	a \succeq b \Rightarrow b \succeq a
\end{equation}
for all $b \in S$, $a$ is called a minimal element.
Intuitively, an element $a \in S$ is maximal/minimal if $a$ is the largest/smallest 
among all the elements that can be compared with $a$.

For any quantum system $\mathcal{H}$, free operations introduce a preorder $\succeq$ on $\mathcal{S}\left(\mathcal{H}\right)$.
This preorder is introduced in terms of the exact one-shot state conversion under free operations. 
Given two states $\phi,\psi \in \mathcal{S}\left(\mathcal{H}\right)$, the exact one-shot state conversion from $\phi$ to $\psi$ is a task of transforming a single $\phi$ exactly into a single $\psi$ by a free operation $\mathcal{N}\in\mathcal{O}\left(\mathcal{H}\right)$.  
Formally, we write
\begin{equation}\label{def:preorder}
  \phi\succeq\psi
\end{equation}
if there exists a free operation $\mathcal{N}\in\mathcal{O}\left(\mathcal{H}\right)$ such that
\begin{equation}\label{one-shot_conversion}
  \mathcal{N}\left(\phi\right)=\psi.
\end{equation}
This relation $\succeq$ is indeed a preorder because it holds that
\begin{align}
\phi &\succeq \phi, \\
\phi &\succeq \psi , \psi\succeq \sigma \Rightarrow \phi \succeq \sigma
\end{align}
for any states $\phi,\psi,\sigma\in\mathcal{S}\left(\mathcal{H}\right)$.

With respect to this preorder, two states $\phi,\psi \in \mathcal{S}\left(\mathcal{H}\right)$ are said to be equivalent if both $\phi\succeq\psi$ and $\phi\preceq\psi$ hold. 
If $\phi$ and $\psi$ are equivalent, we write
\begin{equation}
  \phi\sim\psi. 
\end{equation}

\section{Maximally Resourceful States and Free States}\label{Section2}
In this section, we analyze properties of maximally resourceful states and free states in general quantum resource theories (QRTs). 
In Sec.~\ref{Subsection2_c}, we provide the definition of maximally resourceful states based on the preorder mentioned in Sec.~\ref{prelim_c}, and prove the existence of maximally resourceful states in QRTs under our formulation in Sec.~\ref{Prelim_b}.
In Sec.~\ref{Subsection2_b}, we provide the definition of free states, and give a condition under which a maximally resourceful state is not free.

\subsection{Maximally Resourceful States}\label{Subsection2_c}
In this section, we analyze maximally resourceful states defined by the preorder of resourcefulness of states mentioned in Sec.~\ref{prelim_c}. 
The existence of maximally resourceful states is not trivial in general, although it is desired for quantification of the resource. 
We prove the existence of maximally resourceful states in any QRT that satisfies the four axioms and the compactness given in Sec.~\ref{Prelim_b}.

The preorder of states introduces maximal elements in the set of states in terms of order theory. 
Given a quantum system $\mathcal{H}$,
we let $\mathcal{G}\left(\mathcal{H}\right)$ denote the set of the maximal states of $\mathcal{S}\left(\mathcal{H}\right)$ in terms of the preorder defined as~\eqref{def:preorder}, that is,
\begin{equation}
  \mathcal{G}\left(\mathcal{H}\right)\coloneqq\left\{\phi\in\mathcal{S}\left(\mathcal{H}\right):\forall\psi\in\mathcal{S}\left(\mathcal{H}\right),\psi\succeq\phi\Rightarrow\phi\succeq\psi\right\}.
\end{equation}
The elements of $\mathcal{G}\left(\mathcal{H}\right)$ are called \textit{maximally resourceful states}. Note that there may be several non-equivalent maximally resourceful states, which are not comparable with each other. 
Here, we recall two QRTs that have two or more non-equivalent maximally resourceful states. 
\begin{example}[QRTs with Non-Equivalent Maximally Resourceful States]\label{example_QRT_noneq_max}
The first example is the QRT of magic for qutrits~\cite{Veitch2014}.
In this QRT, there exist two non-equivalent maximally resourceful states, which are called the Norrell state and the Strange state.
The second example is the QRT of coherence with physically incoherent operations (PIO)~\cite{Chitambar2016a}.
In the QRT of coherence with PIO, a free operation cannot change diagonal elements of a quantum state represented in the standard basis. 
Therefore, there exist infinitely many non-equivalent maximally resourceful states, which have different diagonal elements from each other. 
\end{example}

We here prove that a maximally resourceful state always exists in any QRT satisfying axioms and the compactness of the set of states discussed in Sec.~\ref{Prelim_b}. Maximally resourceful states are regarded as a unit of resource~\cite{Bennett1996,Baumgratz2014,Bravyi2012}. For example, in the QRT of bipartite entanglement, the amount of entanglement of the Bell state is defined as one ebit. Therefore, it is crucial for QRTs to have a maximally resourceful state. In general, whether a maximally resourceful state exists is not obvious. For example, a maximally entangled state does not necessarily exist in a QRT of bipartite entanglement for an infinite-dimensional system with a non-compact set of free operations such as LOCC while a unique maximally entangled state exists for a finite-dimensional system. Theorem~\ref{prop5} shows that for any given state, there exists a maximally resourceful state that is more resourceful than the state, which ensures the existence of maximally resourceful states in our framework.

\begin{theorem}[Existence of a Maximally Resourceful State]\label{prop5}
Let $\mathcal{H}$ is a quantum system. For any state $\psi \in \mathcal{S}\left(\mathcal{H}\right)$, there exists a state $\phi \in \mathcal{G}\left(\mathcal{H}\right)$ that upper-bounds $\psi$; that is, $\psi \preceq \phi$.
\end{theorem}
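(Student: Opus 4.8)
The plan is to locate, inside the set $U_\psi \coloneqq \left\{\phi \in \mathcal{S}\left(\mathcal{H}\right) : \phi \succeq \psi\right\}$ of states dominating $\psi$, a maximal element with respect to the preorder $\succeq$, and then to observe that such a state is automatically maximal in all of $\mathcal{S}\left(\mathcal{H}\right)$. The latter observation is immediate from transitivity: if $\phi \in U_\psi$ is maximal in $U_\psi$ and some $\chi \in \mathcal{S}\left(\mathcal{H}\right)$ satisfies $\chi \succeq \phi$, then $\chi \succeq \psi$ by transitivity, so $\chi \in U_\psi$, whence maximality of $\phi$ in $U_\psi$ forces $\phi \succeq \chi$; thus $\phi \in \mathcal{G}\left(\mathcal{H}\right)$. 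Since $\psi \succeq \psi$, the set $U_\psi$ is nonempty, so it remains to produce a maximal element of $U_\psi$ via Zorn's lemma. Because $\succeq$ is only a preorder, I would first pass to the quotient $U_\psi / \sim$ by the equivalence $\phi \sim \phi' \iff \phi \succeq \phi' \wedge \phi' \succeq \phi$, on which $\succeq$ descends to a genuine partial order; a maximal class then yields a maximal state on choosing any representative.

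The core of the argument is verifying the chain hypothesis of Zorn's lemma: every chain $C \subseteq U_\psi$ has an upper bound in $U_\psi$. I would regard $C$ as a net indexed by itself, directed by $\succeq$ (a totally preordered set is directed). By the compactness of $\mathcal{S}\left(\mathcal{H}\right)$ (which is equivalent across the relevant topologies by the preliminaries), this net has a subnet converging in trace norm to some $\chi$, and compactness gives closedness, so $\chi \in \mathcal{S}\left(\mathcal{H}\right)$. I claim $\chi \succeq \phi_0$ for every $\phi_0 \in C$, which makes $\chi$ the desired upper bound (and $\chi \succeq \psi$ follows by transitivity, so $\chi \in U_\psi$). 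Fix $\phi_0$. Cofinally along the convergent subnet one has $\phi \succeq \phi_0$, so choose free operations $\mathcal{N}_\phi \in \mathcal{O}\left(\mathcal{H}\right)$ with $\mathcal{N}_\phi\left(\phi\right) = \phi_0$. Using the compactness of $\mathcal{O}\left(\mathcal{H}\right)$ in the BW topology, I would extract a further subnet with $\mathcal{N}_\phi \to \mathcal{N} \in \mathcal{O}\left(\mathcal{H}\right)$ while still $\phi \to \chi$.

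It then remains to pass to the limit in $\mathcal{N}_\phi\left(\phi\right) = \phi_0$ to conclude $\mathcal{N}\left(\chi\right) = \phi_0$. For any $M \in \mathcal{B}\left(\mathcal{H}\right)$ I would split
\[
\tr\left[\mathcal{N}_\phi\left(\phi\right) M\right] - \tr\left[\mathcal{N}\left(\chi\right) M\right] = \tr\left[\mathcal{N}_\phi\left(\phi - \chi\right) M\right] + \left(\tr\left[\mathcal{N}_\phi\left(\chi\right) M\right] - \tr\left[\mathcal{N}\left(\chi\right) M\right]\right).
\]
The first term is bounded by $\left\|M\right\|_\infty \left\|\mathcal{N}_\phi\left(\phi - \chi\right)\right\|_1 \leqq \left\|M\right\|_\infty \left\|\phi - \chi\right\|_1 \to 0$, using that CPTP maps are contractive in trace norm, while the second term tends to $0$ by the defining convergence of the BW topology with the state $\chi$ and observable $M$ held fixed. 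Since $\tr\left[\mathcal{N}_\phi\left(\phi\right) M\right] = \tr\left[\phi_0 M\right]$ is constant along the net, this yields $\tr\left[\mathcal{N}\left(\chi\right) M\right] = \tr\left[\phi_0 M\right]$ for all $M$, i.e.\ $\mathcal{N}\left(\chi\right) = \phi_0$ and hence $\chi \succeq \phi_0$. Zorn's lemma then supplies a maximal element of $U_\psi$, completing the argument. The main obstacle is exactly this limit-exchange step: it is where \emph{both} the compactness of the state set (to produce $\chi$ in trace norm) and the compactness of the operation set (to produce a free $\mathcal{N}$) are indispensable, and where one must combine trace-norm contractivity with the weak continuity built into the BW topology to establish the required joint continuity of $\left(\mathcal{E},\rho\right)\mapsto\mathcal{E}\left(\rho\right)$.
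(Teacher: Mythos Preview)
Your argument is correct and close in spirit to the paper's, but the packaging differs in two ways worth noting. First, the paper does not invoke Zorn's lemma directly; instead it cites an order-theoretic fact (a compact preordered space in which every upper closure $\mathcal{U}_\psi=\{\phi:\phi\succeq\psi\}$ is closed has, above each point, a maximal element) and then proves closedness of $\mathcal{U}_\psi$. Your chain-condition verification is effectively a hands-on proof of that same closedness (applied at each $\phi_0$ in the chain), so the two routes converge on the identical technical core: passing to the limit in $\mathcal{N}_i(\phi_i)=\text{const}$ when $\phi_i\to\phi$ in trace norm and $\mathcal{N}_i\to\mathcal{N}$ in the BW topology. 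Second, and more interestingly, your splitting of that limit is cleaner than the paper's. You fix the limit state $\chi$ and write
\[
\tr[\mathcal{N}_\phi(\phi)M]-\tr[\mathcal{N}(\chi)M]=\tr[\mathcal{N}_\phi(\phi-\chi)M]+\bigl(\tr[\mathcal{N}_\phi(\chi)M]-\tr[\mathcal{N}(\chi)M]\bigr),
\]
so the second bracket vanishes directly by the definition of BW convergence at the \emph{single} state $\chi$. The paper instead fixes the limit channel $\mathcal{N}$ and must control $|\tr[\mathcal{N}_{n(i)}(\phi_{n(i)})A]-\tr[\mathcal{N}(\phi_{n(i)})A]|$ with the state still varying, which forces an additional $\epsilon$-net argument (total boundedness of $\mathcal{S}(\mathcal{H})$) to upgrade BW convergence to uniform-in-state convergence. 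Your decomposition sidesteps that entirely, at no cost in generality.
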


\begin{proof}
It is known that a compact space $\mathcal{X}$ with a preorder $\succeq$ has a maximal element if the upper closure
$\mathcal{U}_x \coloneqq \left\{y \in \mathcal{X} | x \preceq y\right\}$ is closed for any $x \in \mathcal{X}$~\cite{Berger2003} (Proposition VI-1.6.{(i)}).
Thus, it suffices to show $\mathcal{U}_\psi \coloneqq \left\{\phi \in \mathcal{S}\left(\mathcal{H}\right) | \psi \preceq \phi\right\}$ is closed in terms of the weak operator topology, or equivalently closed in terms of the trace norm topology due to Lemma~\ref{l1} in Appendix~\ref{Appendix_A}.
We take a sequence ${(\phi_n)}_{n \in \mathbb{N}}$
in $\mathcal{U}_\psi$ convergent to $\phi \in \mathcal{S}\left(\mathcal{H}\right)$ in terms of the trace norm topology
and prove $\phi \in \mathcal{U}_\psi$.
By the definition of the preorder $\preceq$,
for each $n \in \mathbb{N}$, there exists a free operation $\mathcal{N}_n \in \mathcal{O}\left(\mathcal{H}\right)$
such that $\psi =  \mathcal{N}_n (\phi_n )$.
Since $\mathcal{O}\left(\mathcal{H}\right)$ is compact in terms of the bounded weak (BW) topology, there exists a subnet
${\left(\mathcal{N}_{n(i)}\right)}_{i \in I}$ convergent to some $\mathcal{N} \in \mathcal{O}\left(\mathcal{H}\right)$ with respect to the BW topology.
In the following, we show $\psi = \mathcal{N} (\phi)$ to prove the theorem.

Take an arbitrary $\epsilon >0 $ and an arbitrary $A\in \mathcal{B}\left(\mathcal{H}\right)\setminus\{0\}$, which satisfies ${\left\|A\right\|}_\infty>0$.
Since $ \mathcal{S}\left(\mathcal{H}\right)$ is compact in terms of the trace norm topology,
there exists a finite subset $\left\{\chi_k:k\in\left\{1,\ldots,N_\epsilon\right\}\right\}$ of $ \mathcal{S}\left(\mathcal{H}\right)$ such that for any $\chi\in\mathcal{S}(\mathcal{H})$
\begin{equation}
  \label{eq:compact_subset}
  \min_{k\in\left\{1,\ldots,N_\epsilon\right\}}{\left\| \chi - \chi_k \right\|}_1 < \frac{\epsilon}{{\|A\|}_\infty }.
\end{equation}
By definition of the convergence $\mathcal{N}_{n(i)} \xrightarrow{\mathrm{BW}} \mathcal{N}$ in terms of the BW topology with respect to $i$,
there exists $i_{\epsilon, A} \in I$ such that for any $i \geqq i_{\epsilon, A}$
\begin{equation}
\label{eq:bw_convergence_N}
  \max_{k\in\left\{1,\ldots,N_\epsilon\right\}}\left|  \tr \left( \mathcal{N}_{n(i)} \left(\chi_k\right) A \right) - \tr \left(\mathcal{N} \left(\chi_k \right) A\right) \right|
  < \epsilon.
\end{equation}
Thus, for any $i \geqq i_{\epsilon , A} $ and any $\chi\in\mathcal{S}(\mathcal{H})$, we have
\begin{align}
  \label{eq:tr_N_chi_bound}
&\left| \tr\left( \mathcal{N}_{n(i)} \left(\chi\right) A\right) - \tr \left( \mathcal{N}\left(\chi\right)A\right) \right| \nonumber\\
&=\left| \tr\left( \mathcal{N}_{n(i)} \left(\chi-\chi_k\right) A\right) +  \tr\left( \mathcal{N}_{n(i)} \left(\chi_k\right) A\right) \right.\nonumber\\
&\quad\quad\left. - \tr \left( \mathcal{N}\left(\chi-\chi_k\right)A\right) - \tr \left( \mathcal{N}\left(\chi_k\right)A\right) \right| \nonumber\\
&\leqq\left| \tr\left( \mathcal{N}_{n(i)} \left(\chi-\chi_k\right) A\right)\right| + \left|\tr \left( \mathcal{N}\left(\chi-\chi_k\right)A\right)\right|\nonumber\\
&\quad\quad  +  \left|\tr\left( \mathcal{N}_{n(i)} \left(\chi_k\right) A\right)-\tr \left( \mathcal{N}\left(\chi_k\right)A\right) \right|\nonumber\\
&<\left| \tr\left( \mathcal{N}_{n(i)} \left(\chi-\chi_k\right) A\right)\right| + \left|\tr \left( \mathcal{N}\left(\chi-\chi_k\right)A\right)\right|+\epsilon,
\end{align}
where $\chi_k$ is an element in the finite subset $\left\{\chi_k\right\}$ of $ \mathcal{S}\left(\mathcal{H}\right)$ in~\eqref{eq:compact_subset} satisfying
\begin{equation}
  {\left\| \chi - \chi_k \right\|}_1 < \frac{\epsilon}{{\| A\|}_\infty},
\end{equation}
and we use~\eqref{eq:bw_convergence_N} in the last line.
With
\begin{equation}
 {\left\|\mathcal{N}_{n(i)}\right\|}_\infty\coloneqq\sup\left\{{\left\|\mathcal{N}_{n(i)}(T)\right\|}_1:T\in\mathcal{T}(\mathcal{H}),{\|T\|_1}\leqq 1\right\}
\end{equation}
denoting the operator norm of the linear map $\mathcal{N}_{n(i)}:\mathcal{T}(\mathcal{H})\to\mathcal{T}(\mathcal{H})$,
we have
\begin{align}
  \label{eq:N_n_i_bound}
  &\left| \tr\left( \mathcal{N}_{n(i)} \left(\chi-\chi_k\right) A\right)\right|\nonumber\\
  &\leqq{\left\| \mathcal{N}_{n(i)} \left(\chi-\chi_k\right)\right\|}_1\cdot {\left\|A\right\|}_\infty\nonumber\\
  &\leqq{\left\|\mathcal{N}_{n(i)}\right\|}_\infty\cdot {\left\|\chi-\chi_k\right\|}_1\cdot {\left\|A\right\|}_\infty\nonumber\\
  &< 1\cdot \frac{\epsilon}{{\left\|A\right\|}_\infty}\cdot{\left\|A\right\|}_\infty=\epsilon,
\end{align}
where the last inequality follows from the fact that any CPTP map $\mathcal{N}_{n(i)}$ satisfies
\begin{equation}
{\left\|\mathcal{N}_{n(i)}\right\|}_\infty=\sup{\left\|\mathcal{N}_{n(i)}(T)\right\|}_1\leqq \sup{\|T\|}_1\leqq 1.
\end{equation}
In the same way as~\eqref{eq:N_n_i_bound} by substituting $\mathcal{N}_{n(i)}$ with $\mathcal{N}$, it holds that
\begin{equation}
  \label{eq:N_bound}
  \left| \tr\left( \mathcal{N} \left(\chi-\chi_k\right) A\right)\right|<\epsilon.
\end{equation}
Therefore, applying~\eqref{eq:N_n_i_bound} and~\eqref{eq:N_bound} to~\eqref{eq:tr_N_chi_bound}, for any $i \geqq i_{\epsilon , A} $ and any $\chi\in\mathcal{S}(\mathcal{H})$, we have
\begin{equation}
\label{eq1}
\left| \tr\left( \mathcal{N}_{n(i)} \left(\chi\right) A\right) - \tr \left( \mathcal{N}\left(\chi\right)A\right) \right| < \epsilon+\epsilon+\epsilon
= 3 \epsilon.
\end{equation}

Consequently, for any $ i \geqq i_{\epsilon , A}$,
we obtain
\begin{align}
       \label{eq2}
       &\left| \tr \left(\left(\psi-\mathcal{N}\left(\phi\right)\right) A\right)\right| \nonumber \\
       &=\left| \tr \left(\psi A\right) - \tr \left( \mathcal{N}\left(\phi\right)A\right)\right| \nonumber \\
       &=
       \left| \tr \left( \mathcal{N}_{n(i)} \left(\phi_{n(i)} \right)A\right) - \tr \left( \mathcal{N} \left(\phi\right)A\right) \right|
       \nonumber \\
       & \leqq 
       \left| \tr \left(\mathcal{N}_{n(i)} \left(\phi_{n(i)}  \right)A\right) -\tr \left( \mathcal{N}\left( \phi_{n(i)} \right)A\right) \right| \nonumber \\
       &\quad\quad+
       \left|  \tr \left( \mathcal{N}\left( \phi_{n(i)} \right)A\right)-  \tr \left( \mathcal{N}\left( \phi\right)A\right) \right|
       \nonumber \\
       &< 3 \epsilon +{\left\|\mathcal{N}\right\|}_\infty\cdot{\left\| \phi_{n(i)} - \phi \right\|}_1\cdot {\left\| A \right\|}_\infty\to 3 \epsilon,
\end{align}
where the last inequality follows from~\eqref{eq1} by substituting $\chi$ with $\phi_{n(i)}$ and from the inequality shown in the same way as~\eqref{eq:N_n_i_bound}
\begin{align}
  &\left|  \tr \left( \mathcal{N}\left( \phi_{n(i)} \right)A\right)-  \tr \left( \mathcal{N}\left( \phi\right)A\right) \right|\nonumber\\
  &\leqq{\left\|\mathcal{N}\right\|}_\infty\cdot{\left\| \phi_{n(i)} - \phi \right\|}_1\cdot {\left\| A \right\|}_\infty,
\end{align}
and the limit in the last line in terms of $i$ yields $\left\| \phi_{n(i)} -\phi \right\|_1 \to 0$.
Since $\epsilon >0$ and $A \in \mathcal{B}\left(\mathcal{H}\right)\setminus\{0\}$ are arbitrary,
this shows $\psi = \mathcal{N} (\phi)$.
\end{proof}

\begin{remark}\label{remark_minimal}
In a similar manner, we can prove that the set of minimal elements
	\begin{equation}\label{minimal}
		\left\{\phi \in \mathcal{S}\left(\mathcal{H}\right): \forall \psi \in \mathcal{S}\left(\mathcal{H}\right), \phi \succeq \psi \Rightarrow \psi \succeq \phi \right\}
	\end{equation}
is not empty as well. 
The set of minimal elements is considered as the set of the least resourceful states.
If the set of free states, which is defined in the following section, is not empty, the set of minimal set is identical to the set of free states.
However, the set of minimal elements and that of free states may be different because the set of free states can be empty for some QRTs as we will show in Example~\ref{ex:GKPcode}. 
\end{remark}

\subsection{Free States}\label{Subsection2_b}
In this section, we analyze properties of \textit{free states}.
A free state is defined as a state that can be generated from any other state by a free operation. Let $\mathcal{F}\left(\mathcal{H}\right)$ denote the set of free states; that is,
\begin{equation}\label{eq:free_state}
	\begin{aligned}
  \mathcal{F}\left(\mathcal{H}\right) \coloneqq 
	&\Big\{\psi\in\mathcal{S}\left(\mathcal{H}\right):\forall \mathcal{H}', \forall\phi\in\mathcal{S}\left(\mathcal{H}'\right), \\
	&\exists \mathcal{N}\in\mathcal{O}\left(\mathcal{H}'\to\mathcal{H}\right) \,\mathrm{s.t.}\, \psi = \mathcal{N}\left(\phi\right)\Big\}.
	\end{aligned}
\end{equation}
A state $\psi \in \mathcal{S}(\mathcal{H})\setminus \mathcal{F}(\mathcal{H})$ that is not free is called a resourceful state or a resource state. Since $\tr$ is a free operation, the set of free states is equal to the set of states that can be generated from the scalar $1 \in \mathcal{S}\left(\mathbb{C}\right)$ as shown in the following proposition.
\begin{proposition}\label{proposition_free_states}
	Let $\mathcal{H}$ be a quantum system. Then, it holds that
	\begin{equation}
		\begin{aligned}
		&\mathcal{F}\left(\mathcal{H}\right) = \\
		&\Big\{\psi\in\mathcal{S}\left(\mathcal{H}\right): \exists \mathcal{N}\in\mathcal{O}\left(\mathbb{C}\to\mathcal{H}\right) \,\mathrm{s.t.}\, \psi = \mathcal{N}\left(1\right)\Big\}.
		\end{aligned}
	\end{equation}
\end{proposition}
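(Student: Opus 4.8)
The plan is to prove the stated equality by establishing the two set inclusions, each of which follows almost directly from the axioms imposed on $\mathcal{O}$. Write $\mathcal{F}'(\mathcal{H}) \coloneqq \{\psi \in \mathcal{S}(\mathcal{H}) : \exists\, \mathcal{N} \in \mathcal{O}(\mathbb{C} \to \mathcal{H}) \text{ s.t. } \psi = \mathcal{N}(1)\}$ for the right-hand side.

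First I would treat the inclusion $\mathcal{F}(\mathcal{H}) \subseteq \mathcal{F}'(\mathcal{H})$, which is immediate from the definition~\eqref{eq:free_state}. Since a free state $\psi$ can be generated from \emph{every} state on \emph{every} system, I simply specialize the universally quantified $\mathcal{H}'$ and $\phi$ to $\mathcal{H}' = \mathbb{C}$ and $\phi = 1 \in \mathcal{S}(\mathbb{C})$; the definition then supplies a free operation $\mathcal{N} \in \mathcal{O}(\mathbb{C} \to \mathcal{H})$ with $\psi = \mathcal{N}(1)$, so $\psi \in \mathcal{F}'(\mathcal{H})$.

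For the reverse inclusion $\mathcal{F}'(\mathcal{H}) \subseteq \mathcal{F}(\mathcal{H})$, suppose $\psi = \mathcal{N}(1)$ for some $\mathcal{N} \in \mathcal{O}(\mathbb{C} \to \mathcal{H})$, and let an arbitrary system $\mathcal{H}'$ and an arbitrary state $\phi \in \mathcal{S}(\mathcal{H}')$ be given. The key observation is that the trace is free, $\tr \in \mathcal{O}(\mathcal{H}' \to \mathbb{C})$, by the fourth axiom, and that the normalization $\tr\phi = 1$ forces the trace channel to send every state $\phi$ to the scalar state $1 \in \mathcal{S}(\mathbb{C})$. Composing and invoking the first axiom, I obtain $\mathcal{N} \circ \tr \in \mathcal{O}(\mathcal{H}' \to \mathcal{H})$ with $(\mathcal{N} \circ \tr)(\phi) = \mathcal{N}(1) = \psi$ for every $\phi$; this produces the free operation required by~\eqref{eq:free_state}, so $\psi \in \mathcal{F}(\mathcal{H})$.

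There is no substantial obstacle here; the statement is essentially a repackaging of axioms~1 and~4. The only point demanding any care is confirming that $1$ is genuinely the unique element of $\mathcal{S}(\mathbb{C})$, so that both the specialization $\phi = 1$ in the forward direction and the output of the trace channel in the reverse direction actually lie in the relevant state set. This holds because the one-dimensional Hilbert space $\mathbb{C}$ admits the scalar $1$ as its only density operator, whence $\mathcal{D}(\mathbb{C}) = \{1\}$ and $\mathcal{S}(\mathbb{C}) = \{1\}$.
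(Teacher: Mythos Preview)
Your proof is correct and follows essentially the same approach as the paper's own proof: both establish the two inclusions, the forward one by specializing to $\mathcal{H}'=\mathbb{C}$ and $\phi=1$, and the reverse one by composing the given $\mathcal{N}$ with the free trace map. Your additional remark that $\mathcal{S}(\mathbb{C})=\{1\}$ is a nice clarification the paper leaves implicit.
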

\begin{proof}
By the definition~\eqref{eq:free_state} of $\mathcal{F}\left(\mathcal{H}\right)$, it trivially holds that 
	\begin{equation}
		\mathcal{F}\left(\mathcal{H}\right) \subseteq \Big\{\psi\in\mathcal{S}\left(\mathcal{H}\right): \exists \mathcal{N}\in\mathcal{O}\left(\mathbb{C}\to\mathcal{H}\right) \,\mathrm{s.t.}\, \psi = \mathcal{N}\left(1\right)\Big\}.
	\end{equation}
To show the converse inclusion, assume that
	\begin{equation}
		\psi \in \Big\{\psi\in\mathcal{S}\left(\mathcal{H}\right): \exists \mathcal{N}\in\mathcal{O}\left(\mathbb{C}\to\mathcal{H}\right) \,\mathrm{s.t.}\, \psi = \mathcal{N}\left(1\right)\Big\}. 
	\end{equation}
Let $\mathcal{N} \in \mathcal{O}\left(\mathbb{C}\to\mathcal{H}\right)$ be a free operation such that $\psi = \mathcal{N}\left(1\right)$. Consider an arbitrary quantum system $\mathcal{H}'$ and an arbitrary state $\phi \in \mathcal{S}\left(\mathcal{H}'\right)$. Since $\tr \in \mathcal{O}\left(\mathcal{H}'\to\mathbb{C}\right)$, it holds that
	\begin{equation}
		\psi = \mathcal{N}\circ\tr\left(\phi\right). 
	\end{equation}
Therefore, $\psi \in \mathcal{F}\left(\mathcal{H}\right)$, which yields the conclusion. 
\end{proof}

The set of free states $\mathcal{F}\left(\mathcal{H}\right)$ may be empty for some $\mathcal{H}$ while the set of minimal elements 
defined in~\eqref{minimal} is not empty as seen in Remark~\ref{remark_minimal}. For example, if the set of free operations $\mathcal{O}\left(\mathbb{C} \to \mathcal{H}\right)$ does not contain any operation for a quantum system $\mathcal{H}$, then $\mathcal{F}\left(\mathcal{H}\right) = \emptyset$. 
The following example gives a more concrete scenario, where we take the logical 2-dimensional space of the Gottesman-Kitaev-Preskill (GKP) code~\cite{Gottesman2001} as $\mathcal{S}\left(\mathbb{C}^2\right)$. In this paper, to investigate constraints and properties of QRTs
in as general a setup as possible, we do not make any assumption on whether  $\mathcal{F}\left(\mathcal{H}\right)$ is empty or not.

\begin{example}[QRT of Non-Gaussianity on GKP Code]~\label{ex:GKPcode}
  The QRT of non-Gaussianity has applications to analyzing continuous-variable quantum computation using the Gottesman-Kitaev-Preskill (GKP) code as shown in Ref.~\cite{Yamasaki2019}. 
The GKP code encodes a qubit into an infinite-dimensional oscillator of an optical mode, and the logical 2-dimensional space can be defined by dividing the Hilbert space of the bosonic mode into a logical qubit and a gauge mode~\cite{Pantaleoni2019}.
  Gaussian operations~\cite{Weedbrook2012} at a physical level suffice to implement logical Clifford gates for the GKP code~\cite{Gottesman2001}.
  Suppose that $\mathcal{S}\left(\mathbb{C}^2\right)$ is the set of logical states in the logical 2-dimensional space of the GKP code. Take the quantum operations on $\mathcal{S}\left(\mathbb{C}^2\right)$ implementable by the Gaussian operations as the free operations. Any physical state of the GKP code is non-Gaussian, and hence in this case, $\mathcal{F}$ only has the trivial element $1$; that is, $\mathcal{F}(\mathcal{H}) = \left\{1\right\}$ if $\dim\mathcal{H}=1$, and $\mathcal{F}(\mathcal{H}) = \emptyset$ otherwise.
\end{example}

The following proposition guarantees that a maximally resourceful state cannot be a free state if a resource state exists. 
\begin{proposition}\label{proposition_free_max}
Let $\mathcal{H}$ be a quantum system. Suppose that the set of resource state is not empty; that is $\mathcal{S}\left(\mathcal{H}\right)\setminus\mathcal{F}\left(\mathcal{H}\right) \neq \emptyset$. Then, it holds that 
	\begin{equation}\label{eq_prop_free_max}
		\mathcal{G}\left(\mathcal{H}\right) \cap \mathcal{F}\left(\mathcal{H}\right) = \emptyset. 
	\end{equation}
\end{proposition}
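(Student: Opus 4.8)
The plan is to argue by contradiction. Suppose there exists a state $\phi\in\mathcal{G}\left(\mathcal{H}\right)\cap\mathcal{F}\left(\mathcal{H}\right)$, and I will derive that some resource state is in fact free, contradicting the hypothesis $\mathcal{S}\left(\mathcal{H}\right)\setminus\mathcal{F}\left(\mathcal{H}\right)\neq\emptyset$.

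The conceptual core of the argument is the observation that a free state sits at the bottom of the preorder. Indeed, if $\phi$ is free, then by the definition~\eqref{eq:free_state} (taking $\mathcal{H}'=\mathcal{H}$) every state $\psi\in\mathcal{S}\left(\mathcal{H}\right)$ can be mapped to $\phi$ by some free operation $\mathcal{N}\in\mathcal{O}\left(\mathcal{H}\right)$, which is exactly the statement $\psi\succeq\phi$. In particular, fixing an arbitrary resource state $\omega\in\mathcal{S}\left(\mathcal{H}\right)\setminus\mathcal{F}\left(\mathcal{H}\right)$ (which exists by assumption), we have $\omega\succeq\phi$.

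Next I would invoke maximality. Since $\phi\in\mathcal{G}\left(\mathcal{H}\right)$, the defining implication $\omega\succeq\phi\Rightarrow\phi\succeq\omega$ forces $\phi\succeq\omega$; that is, there is a free operation $\mathcal{M}\in\mathcal{O}\left(\mathcal{H}\right)$ with $\mathcal{M}\left(\phi\right)=\omega$. On the other hand, because $\phi$ is free, Proposition~\ref{proposition_free_states} provides a free operation $\mathcal{N}\in\mathcal{O}\left(\mathbb{C}\to\mathcal{H}\right)$ with $\mathcal{N}\left(1\right)=\phi$. The closure of free operations under composition (the first axiom) then yields $\mathcal{M}\circ\mathcal{N}\in\mathcal{O}\left(\mathbb{C}\to\mathcal{H}\right)$ with $\left(\mathcal{M}\circ\mathcal{N}\right)\left(1\right)=\mathcal{M}\left(\phi\right)=\omega$, so by Proposition~\ref{proposition_free_states} again $\omega\in\mathcal{F}\left(\mathcal{H}\right)$. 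This contradicts $\omega\notin\mathcal{F}\left(\mathcal{H}\right)$, establishing~\eqref{eq_prop_free_max}.

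There is no genuine analytic obstacle here; unlike Theorem~\ref{prop5}, no compactness or topology is required. The only point demanding care is keeping the direction of the preorder straight: one must recognize that freeness makes $\phi$ a minimal element (everything lies above it), so that combining minimality with maximality collapses $\phi$ to the top of the order and lets one convert a free generation of $\phi$ into a free generation of the less resourceful state $\omega$ through the composition axiom.
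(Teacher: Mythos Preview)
Your proof is correct and follows essentially the same approach as the paper's: both argue by contradiction, use freeness of $\phi$ to get $\omega\succeq\phi$, invoke maximality to get $\phi\succeq\omega$, and conclude $\omega$ is free. You are slightly more explicit than the paper in unpacking the last step via Proposition~\ref{proposition_free_states} and the composition axiom, whereas the paper simply asserts that $\psi$ is free once $\phi\succeq\psi$.
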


\begin{proof}
	The proof is by contradiction. To prove~\eqref{eq_prop_free_max}, assume that $\phi \in \mathcal{G}\left(\mathcal{H}\right) \cap \mathcal{F}\left(\mathcal{H}\right)$. 
Take a resource state 
\begin{equation}\label{prop_free_max_assume}
\psi \in \mathcal{S}\left(\mathcal{H}\right)\setminus\mathcal{F}\left(\mathcal{H}\right).
\end{equation}
Since $\phi \in \mathcal{F}\left(\mathcal{H}\right)$, it holds that $\psi \succeq \phi$. Then, since $\phi \in \mathcal{G}\left(\mathcal{H}\right)$, it holds that $\phi \succeq \psi$. Therefore, $\psi$ is also a free state; that is, $\psi \in \mathcal{F}\left(\mathcal{H}\right)$, which contradicts~\eqref{prop_free_max_assume}. 
\end{proof}

We can observe that some properties of the set of free states $\mathcal{F}\left(\mathcal{H}\right)$ are inherent in the set of the free operations $\mathcal{O}\left(\mathcal{H}\right)$. The compactness of $\mathcal{O}\left(\mathcal{H}\right)$ leads to the compact set of free states $\mathcal{F}\left(\mathcal{H}\right)$. If $\mathcal{O}\left(\mathcal{H}\right)$ is convex, $\mathcal{F}\left(\mathcal{H}\right)$ is also convex.

\section{Asymptotic State Conversion}\label{Section3}
In this section, we characterize the asymptotic state conversion in general quantum resource theories (QRTs). Asymptotic state conversion gives a fundamental limit of large-scale quantum information processing exploiting quantum resources, and it has been widely discussed for known QRTs~\cite{Chitambar2018}. We provide a general definition of a state conversion rate in Sec.~\ref{Subsection3_a}. In terms of the conversion rate, we find a class of resources that cannot be generated from any free state with any free operation but can be replicated infinitely by free operations. We call this state a \textit{catalytically replicable} state. We give the definition and an example of catalytically replicable states in Sec.~\ref{Subsection3_b}. In Sec.~\ref{Subsection3_c}, we formulate relations between asymptotic state conversion and one-shot state conversion that hold in general QRTs, which may have catalytically replicable states. In the following, the ceiling function is denoted by $\lceil{}\cdots{}\rceil$, and the floor function is denoted by $\lfloor{}\cdots{}\rfloor$.

\subsection{Formulation of State Conversion Rate}\label{Subsection3_a}
We recall the concept of asymptotic state conversion and provide possible two definitions of asymptotic state conversion rates. We show the equivalence of these two definitions. 

For two quantum systems $\mathcal{H}_1$ and $\mathcal{H}_2$, and two quantum states $\phi\in\mathcal{S}\left(\mathcal{H}_1\right)$ and $\psi\in\mathcal{S}\left(\mathcal{H}_2\right)$,
asymptotic state conversion from $\phi$ to $\psi$ is a task of transforming infinitely many copies of $\phi$ into as many copies of $\psi$ as possible by a sequence of free operations $\mathcal{N}_1,\mathcal{N}_2,\ldots$ within a vanishing error.
There are two possible ways to define state conversion rates from $\phi$ to $\psi$: how many $\psi$'s can be generated from a single $\phi$, and how many $\phi$'s are necessary to generate a single $\psi$. 
We write the first conversion rate as $r_\textup{conv}\left(\phi\to\psi\right)$, and the second conversion rate as $r'_\textup{conv}\left(\phi\to\psi\right)$. 
As will be shown in Theorem~\ref{thm:conversion_rates}, these two conversion rates are related to each other in such a way that $r'_\textup{conv}\left(\phi\to\psi\right)$ is the inverse of $r_\textup{conv}\left(\phi\to\psi\right)$. Therefore, we consider $r_\textup{conv}\left(\phi\to\psi\right)$ as the asymptotic state conversion rate in this paper. 

More formally, $r_\textup{conv}\left(\phi\to\psi\right)$ is defined as follows. A set of asymptotic achievable rates is defined as 
\begin{equation}
	\label{eq:conversion_rate_set}
  \begin{aligned}
  &\mathcal{R}\left(\phi\to\psi\right)\coloneqq\\
  &\Big\{r\geqq 0:\exists\left(\mathcal{N}_n\in\mathcal{O}\left(\mathcal{H}_1^{\otimes n}\to\mathcal{H}_2^{\otimes \left\lceil rn\right\rceil}\right):n\in\mathbb{N}\right),\\
  &\quad\liminf_{n\to\infty}\left\|\mathcal{N}_n\left(\phi^{\otimes n}\right)-\psi^{\otimes \left\lceil rn\right\rceil}\right\|_1 = 0\Big\},
  \end{aligned}
\end{equation}
where $\phi^{\otimes 0}\coloneqq 1$. 
Intuitively, achievable rate $r \geqq 0$ is a positive number 
for which we can generate $rn$ copies of $\psi$ from $n$ copies of $\phi$.
However, $rn$ is not necessarily an integer in general; 
therefore, in \eqref{eq:conversion_rate_set}, we regard $r$ as an achievable rate 
when we can generate $\lceil rn \rceil$ copies of $\psi$, 
which guarantees that we can obtain $rn$ or more $\psi$'s.
An asymptotic state conversion rate $r_\textup{conv}\left(\phi\to\psi\right)$ is defined as
\begin{equation}
  \label{eq:conversion_rate}
  r_\textup{conv}\left(\phi\to\psi\right)\coloneqq\sup\mathcal{R}\left(\phi\to\psi\right).
\end{equation}
Similarly, we can consider the other definition of a state conversion rate $r'_\textup{conv}\left(\phi\to\psi\right)$. 
Here, we define another set of asymptotic achievable rates
\begin{equation}
	\label{eq:another_conversion_rate_set}
  \begin{aligned}
  &\mathcal{R}'\left(\phi\to\psi\right)\coloneqq\\
  &\Big\{r\geqq 0:\exists\left(\mathcal{N}'_n\in\mathcal{O}\left(\mathcal{H}_1^{\otimes \left\lfloor rn\right\rfloor}\to\mathcal{H}_2^{\otimes n}\right):n\in\mathbb{N}\right),\\
  &\quad\liminf_{n\to\infty}\left\|\mathcal{N}'_n\left(\phi^{\otimes\left\lfloor rn\right\rfloor}\right)-\psi^{\otimes n}\right\|_1 = 0\Big\}. 
  \end{aligned}
\end{equation}
Here, achievable rate $r' \geqq 0$ is a positive number 
for which we can generate $n$ copies of $\psi$ from $r'n$ copies of $\phi$.
However, $r'n$ is not necessarily an integer in general; 
therefore, in \eqref{eq:another_conversion_rate_set}, 
we regard $r'$ as an achievable rate 
when we can generate $n$ copies of $\psi$ from $\lfloor r'n \rfloor$ copies of $\phi$, 
which guarantees that $r'n$ or more copies of $\phi$ suffice to generate $n$ copies of $\psi$.
With respect to this definition of achievable rates, an asymptotic conversion rate $r'_\textup{conv}\left(\phi\to\psi\right)$ is defined as
\begin{equation}
  \label{eq:conversion_rate_cf}
  r'_\textup{conv}\left(\phi\to\psi\right)\coloneqq\inf\mathcal{R}'\left(\phi\to\psi\right),
\end{equation}
where $r'_\textup{conv}\left(\phi\to\psi\right)$ is infinity if the set on the right-hand side is empty. 

These two conversion rates $r_\textup{conv}\left(\phi\to\psi\right)$ and $r'_\textup{conv}\left(\phi\to\psi\right)$ are related to each other as shown in the following theorem. 
Hereafter, we will use $r_\textup{conv}\left(\phi\to\psi\right)$ as the asymptotic states conversion rate rather than $r'_\textup{conv}\left(\phi\to\psi\right)$.
\begin{theorem}[Relation Between Two Conversion Rates]\label{thm:conversion_rates}
Let $\mathcal{H}$ and $\mathcal{H}'$ be quantum systems. For any states $\phi \in \mathcal{S}\left(\mathcal{H}\right)$ and $\psi \in \mathcal{S}\left(\mathcal{H}'\right)$, it holds that,
\begin{equation}
r_\textup{conv}\left(\phi\to\psi\right) = \frac{1}{r'_\textup{conv}\left(\phi\to\psi\right)},
\end{equation}
where we regard $1/0 = \infty$. 
\end{theorem}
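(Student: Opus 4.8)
The plan is to establish the two inequalities $r_\textup{conv}(\phi\to\psi)\le 1/r'_\textup{conv}(\phi\to\psi)$ and $r_\textup{conv}(\phi\to\psi)\ge 1/r'_\textup{conv}(\phi\to\psi)$ separately. In each case I would convert a protocol achieving a rate in one sense into a protocol achieving the reciprocal rate in the other, exploiting two features of the framework: the partial trace is free, so that surplus input copies of $\phi$ or surplus output copies of $\psi$ may always be discarded at no cost; and free operations are closed under composition and tensor product, so that the resulting \emph{discard, convert, discard} map is again a free operation. Errors are controlled throughout by the fact that CPTP maps do not increase trace norm, so that discarding never worsens the approximation produced by the underlying conversion.

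For the first inequality I would fix $r\in\mathcal R(\phi\to\psi)$ with $r>0$, choose witnesses $\mathcal N_n\in\mathcal O(\mathcal H_1^{\otimes n}\to\mathcal H_2^{\otimes\lceil rn\rceil})$ and a subsequence $(n_j)$ along which the error tends to $0$, fix $\delta>0$, and set $r'=1/r+\delta$. For each large $j$ I put $m_j:=\lceil rn_j\rceil$; since then $\lfloor r'm_j\rfloor\ge n_j$, the operation that discards the surplus $\lfloor r'm_j\rfloor-n_j$ input copies and applies $\mathcal N_{n_j}$ lies in $\mathcal O(\mathcal H_1^{\otimes\lfloor r'm_j\rfloor}\to\mathcal H_2^{\otimes m_j})$ and approximates $\psi^{\otimes m_j}$ with error at most that of $\mathcal N_{n_j}$. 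Defining $\mathcal N'_m$ this way along $(m_j)$, and by the analogous construction based on $\lceil m/r\rceil$ input copies for the remaining large $m$, yields $\liminf_m\|\mathcal N'_m(\phi^{\otimes\lfloor r'm\rfloor})-\psi^{\otimes m}\|_1=0$, so $r'\in\mathcal R'(\phi\to\psi)$ and hence $\inf\mathcal R'\le 1/r+\delta$. Letting $\delta\to0$ and then taking the supremum over $r$ gives $\inf\mathcal R'\le 1/\sup\mathcal R$. The reverse inequality is symmetric: starting from $r'\in\mathcal R'(\phi\to\psi)$, witnesses $\mathcal N'_n$, and a target $r<1/r'$, I would run $\mathcal N'_{n_k}$ inside input blocks of size $N_k:=\lfloor r'n_k\rfloor$ and discard the surplus output copies (using $rr'<1$ to guarantee $\lceil rN_k\rceil\le n_k$), obtaining $r\in\mathcal R(\phi\to\psi)$, whence $\sup\mathcal R\ge 1/r'$ and $\sup\mathcal R\ge 1/\inf\mathcal R'$.

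Before all of this I would dispatch the boundary cases using the stated conventions $1/0=\infty$ and $r'_\textup{conv}=\infty$ when $\mathcal R'=\emptyset$: since $\tr$ is free, $0\in\mathcal R(\phi\to\psi)$ always, so $r_\textup{conv}$ is well defined, and the possibilities $r_\textup{conv}\in\{0,\infty\}$ or $r'_\textup{conv}\in\{0,\infty\}$ are handled directly by the same discarding constructions. I expect the main obstacle to be the bookkeeping forced by the ceiling and floor functions together with the $\liminf$: relabelling which quantity is held fixed perturbs the number of copies by $O(1)$, so one cannot land on the exact reciprocal rate block by block but only approach it, which is precisely why the auxiliary slack $\delta$ is introduced and the argument closes only after passing to suprema and infima. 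Moreover one must choose the block indices so that the relabelled protocol meets the subsequence on which the error vanishes infinitely often, and confirm that a valid free operation of the required input/output type exists at every index, with the degenerate case where no such operation exists absorbed into the boundary conventions.
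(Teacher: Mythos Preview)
Your approach is correct and is essentially the paper's: convert a protocol achieving a rate in one sense into one achieving the reciprocal rate in the other by discarding surplus copies via the free partial trace, and control errors by contractivity of CPTP maps under the trace norm.

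The one noteworthy difference is that the paper avoids your auxiliary slack $\delta$ entirely. You anticipate that ``one cannot land on the exact reciprocal rate block by block,'' but in fact one can: the paper relabels by setting $n':=\lceil rn\rceil$ and observes that then $\lfloor n'/r\rfloor\ge n$ (since $n'/r\ge n$ and $n$ is an integer), so discarding $\lfloor n'/r\rfloor-n$ input copies and applying $\mathcal N_n$ yields a free operation in $\mathcal O(\mathcal H_1^{\otimes\lfloor n'/r\rfloor}\to\mathcal H_2^{\otimes n'})$ with exactly the same error. This shows directly that $r\in\mathcal R(\phi\to\psi)\Rightarrow 1/r\in\mathcal R'(\phi\to\psi)$. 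Symmetrically, with $n':=\lfloor rn\rfloor$ one has $\lceil n'/r\rceil\le n$, giving $r\in\mathcal R'(\phi\to\psi)\Rightarrow 1/r\in\mathcal R(\phi\to\psi)$. So the ceiling/floor bookkeeping, which you flag as the main obstacle, in fact resolves itself without any $\delta\to 0$ limit; the paper's proof is correspondingly shorter.
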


\begin{proof}
It suffices to show that 
\begin{equation}\label{eq:R_condition1}
	r \in \mathcal{R}\left(\phi\to\psi\right) \Rightarrow \frac{1}{r} \in \mathcal{R}'\left(\phi\to\psi\right),
\end{equation}
and that
\begin{equation}\label{eq:R_condition2}
	r \in \mathcal{R}'\left(\phi\to\psi\right) \Rightarrow \frac{1}{r} \in \mathcal{R}\left(\phi\to\psi\right).
\end{equation}
First, assume that $r \in \mathcal{R}\left(\phi\to\psi\right)$ to show~\eqref{eq:R_condition1}. Choose a fixed positive real number $\epsilon >0$. Let $n$ be an arbitrary positive integer such that 
\begin{equation}\label{eq:prop1_eq1}
	\left\|\mathcal{N}_n\left(\phi^{\otimes n}\right)-\psi^{\otimes \left\lceil rn\right\rceil}\right\|_1 < \epsilon. 
\end{equation}
Let $n' =  \left\lceil rn\right\rceil$. 
Because $n \leqq \lfloor n'/r\rfloor$, we can define a free operation $\mathcal{M}_{n'}$ as the partial trace over $\lfloor n'/r\rfloor - n$ systems so that 
\begin{equation}\label{eq:prop1_eq2} 
	 \mathcal{M}_{n'}\left(\phi^{\otimes\left\lfloor n'/r\right\rfloor}\right) = \phi^{\otimes n}.
\end{equation}
From~\eqref{eq:prop1_eq1} and~\eqref{eq:prop1_eq2}, it holds that
\begin{equation}
	\left\|\mathcal{N}_n\circ\mathcal{M}_{n'}\left(\phi^{\otimes\left\lfloor n'/r\right\rfloor}\right)-\psi^{\otimes n'}\right\|_1 < \epsilon,
\end{equation}
and $1/r \in \mathcal{R}'\left(\phi\to\psi\right)$ follows. 

On the other hand, assume that $r \in \mathcal{R'}\left(\phi\to\psi\right)$ to show~\eqref{eq:R_condition2}. Choose a fixed positive real number $\epsilon >0$. Let $n$ be an arbitrary positive integer such that 
\begin{equation}\label{eq:prop1_eq3}
	\left\|\mathcal{N}'_n\left(\phi^{\otimes\left\lfloor rn\right\rfloor}\right)-\psi^{\otimes n}\right\|_1 < \epsilon
\end{equation}
Let $n' =  \left\lfloor rn\right\rfloor$. 
Because $n \geqq \lceil n'/r\rceil$, we can define a free operation $\mathcal{M}'_{n'}$ as the partial trace over $n - \lceil n'/r\rceil$ systems so that 
\begin{equation}\label{eq:prop1_eq4}
	 \mathcal{M}'_{n'}\left(\psi^{\otimes n}\right) = \psi^{\otimes \left\lceil n'/r\right\rceil}.
\end{equation}
From~\eqref{eq:prop1_eq3} and~\eqref{eq:prop1_eq4}, it holds that
\begin{equation}
	\begin{aligned}
		&\left\|\mathcal{M}'_{n'}\circ\mathcal{N}'_n\left(\phi^{\otimes n'}\right)-\psi^{\otimes \left\lceil n'/r \right\rceil}\right\|_1 \\
		&= \left\|\mathcal{M}'_{n'}\circ\mathcal{N}'_n\left(\phi^{\otimes n'}\right) - \mathcal{M}'_{n'}\left(\psi^{\otimes n}\right)\right\|_1 \\
		&\leqq \left\|\mathcal{N}'_n\left(\phi^{\otimes n'}\right) - \psi^{\otimes n}\right\|_1\\
		&< \epsilon, 
	\end{aligned}
\end{equation}
and $1/r \in \mathcal{R}\left(\phi\to\psi\right)$ follows. 
\end{proof}

\begin{remark}
	Conventionally, a conversion rate may also be defined as~\cite{Chitambar2018,Streltsov2017}
	\begin{equation}\label{eq:conversion_rate_convention}
		\tilde{r}_\textup{conv}\left(\phi\to\psi\right)\coloneqq\sup\tilde{\mathcal{R}}\left(\phi\to\psi\right), 
	\end{equation}
	where 
	\begin{equation}
		\label{eq:conversion_rate_set_convention}
	  \begin{aligned}
	  &\tilde{\mathcal{R}}\left(\phi\to\psi\right)\coloneqq\\
	  &\Big\{r\geqq 0:\exists\left(\mathcal{N}_n\in\mathcal{O}\left(\mathcal{H}_1^{\otimes n}\to\mathcal{H}_2^{\otimes \left\lfloor rn\right\rfloor}\right):n\in\mathbb{N}\right),\\
	  &\quad\liminf_{n\to\infty}\left\|\mathcal{N}_n\left(\phi^{\otimes n}\right)-\psi^{\otimes \left\lfloor rn\right\rfloor}\right\|_1 = 0\Big\},
	  \end{aligned}
	\end{equation}
	instead of a conversion rate $r_\textup{conv}\left(\phi\to\psi\right)$ defined in \eqref{eq:conversion_rate_set} and \eqref{eq:conversion_rate}.
	In \eqref{eq:conversion_rate_set_convention}, the floor function is used instead of the ceiling function used in \eqref{eq:conversion_rate_set}.

	As we show in the following, these two conversion rates $r_\textup{conv}\left(\phi\to\psi\right)$ and $\tilde{r}_\textup{conv}\left(\phi\to\psi\right)$ are identical; 
	however, considering the meaning of $r_\textup{conv}\left(\phi\to\psi\right)$ discussed below \eqref{eq:conversion_rate}, 
	we adopt $r_\textup{conv}\left(\phi\to\psi\right)$ in this paper. 
	To see that the two conversion rates are identical, 
	first suppose that $r \in \mathcal{R}\left(\phi\to\psi\right)$ for $\phi, \psi \in \mathcal{S}(\mathcal{H})$.
	Since $\lceil rn \rceil \geqq \lfloor rn \rfloor$ and since the partial trace is free, 
	it holds that $r \in \tilde{\mathcal{R}}\left(\phi\to\psi\right)$. 
	Conversely, suppose that $r \in \tilde{\mathcal{R}}\left(\phi\to\psi\right)$. 
	Then, the fact that 
	\begin{equation}
		\lfloor rn \rfloor \geqq \lceil rn \rceil - 1 = \left\lceil \left(r - \frac{1}{n}\right)n \right\rceil
	\end{equation}
	implies that $r -1/n $ can be regarded as a achievable rate with respect to $\mathcal{R}\left(\phi\to\psi\right)$. 
	Since $r - 1/n$ approaches to $r$ as $n$ becomes large, considering the definition \eqref{eq:conversion_rate_set} of $\mathcal{R}\left(\phi\to\psi\right)$, 
	it can be concluded that $r \in \mathcal{R}\left(\phi\to\psi\right)$. 
	Therefore, $r_\textup{conv}\left(\phi\to\psi\right)$ and $\tilde{r}_\textup{conv}\left(\phi\to\psi\right)$ are identical. 

	In the same way, 
	we may replace the floor function in \eqref{eq:another_conversion_rate_set} 
	with the ceiling function to obtain the identical conversion rate, 
	but we adopt the definition \eqref{eq:another_conversion_rate_set} and \eqref{eq:conversion_rate_cf}
	due to the meaning of this conversion rate discussed below \eqref{eq:another_conversion_rate_set}.
\end{remark}

Finally, we recall a useful relation of state conversion rates given in Ref.~\cite{Horodecki2003}. Let $\mathcal{H}_1$, $\mathcal{H}_2$ and $\mathcal{H}_3$ be quantum systems. Let $\rho \in \mathcal{S}\left(\mathcal{H}_1\right)$, $\sigma \in \mathcal{S}\left(\mathcal{H}_2\right)$ and $\omega \in \mathcal{S}\left(\mathcal{H}_3\right)$ be quantum states. Suppose that we first asymptotically generate $\sigma$ from $\rho$, then we generate $\omega$ from $\sigma$ to achieve conversion from $\rho$ to $\omega$. While this protocol generate $\omega$ from $\rho$, the protocol is not necessarily optimal.
In fact, it is known that 
	\begin{equation}\label{conversion_rate_ineq}
	r_\textup{conv}\left(\rho\to\omega\right) \geqq r_\textup{conv}\left(\rho\to\sigma\right)r_\textup{conv}\left(\sigma\to\omega\right).
	\end{equation}
Note that the equality of \eqref{conversion_rate_ineq} does not necessarily hold. 
For example, a bound entangled state cannot generate maximally entangled states, but needs them to be formed~\cite{Horodecki1998}.
On the other hand, it is known that the equality of \eqref{conversion_rate_ineq} holds in special cases such as the conversion between pure states in the QRT of bipartite entanglement~\cite{Bernstein1996,Wilde2017}. 

\subsection{Catalytic Replication of Resource}\label{Subsection3_b}
In this section, we analyze the replication of a resource. One of the fundamental principles of quantum mechanics is the no-cloning theorem~\cite{Wootters1982}, which shows that we cannot clone a quantum state if we do not know the description of the state. The no-cloning theorem gives a fundamental limitation of quantum mechanics, and contributes to understanding what is achievable in quantum mechanics. Similarly, to figure out what is capable in our framework of QRTs, we consider replication of a quantum resource. In the task of the replication, we generate tensor products of a resource state, where the description of the resource state is known but the operation is restricted to the set of free operations. In terms of the asymptotic state conversion, the replication of a resource is regarded as a catalytic state conversion between the same state similarly to catalytic transformation of entanglement~\cite{Jonathan1999}.

We prove that the replication of a resource has only two scenarios: we cannot replicate the resource, or we can replicate the resource infinitely. Furthermore, we find a counter-intuitive example where a resource state is replicable infinitely. Note that the infinite replication of a resource does not necessarily mean that the amount of the resource increases under free operations because the quantification of a resource depends on a resource measure, which will be discussed in detail in Sec.~\ref{Section5}.  

\begin{theorem}[Replication of State]\label{thm_duplicability}
Let $\mathcal{H}$ be a quantum system. For any state $\psi \in \mathcal{S}\left(\mathcal{H}\right)$, $ r_\textup{conv}\left(\psi\to\psi\right)$ is equal to either 1 or $+\infty$. 
\end{theorem}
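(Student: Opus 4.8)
The plan is to combine a trivial lower bound coming from the identity operation with the supermultiplicativity of conversion rates recorded in~\eqref{conversion_rate_ineq}. First I would show $r_\textup{conv}\left(\psi\to\psi\right)\geqq 1$ for every $\psi\in\mathcal{S}\left(\mathcal{H}\right)$: choosing $\mathcal{N}_n=\id\in\mathcal{O}\left(\mathcal{H}^{\otimes n}\to\mathcal{H}^{\otimes n}\right)$, which is free by the third axiom, gives $\left\|\mathcal{N}_n\left(\psi^{\otimes n}\right)-\psi^{\otimes\left\lceil n\right\rceil}\right\|_1=0$, so $1\in\mathcal{R}\left(\psi\to\psi\right)$ and the supremum defining $r_\textup{conv}$ is at least $1$.

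Next I would split on whether $r_\textup{conv}\left(\psi\to\psi\right)$ is finite. If it equals $+\infty$ there is nothing to prove. If it is finite, I abbreviate $r\coloneqq r_\textup{conv}\left(\psi\to\psi\right)$ and specialize~\eqref{conversion_rate_ineq} to the single choice $\rho=\sigma=\omega=\psi$ on the common system $\mathcal{H}$, obtaining
\[
r_\textup{conv}\left(\psi\to\psi\right)\geqq r_\textup{conv}\left(\psi\to\psi\right)\,r_\textup{conv}\left(\psi\to\psi\right),
\]
that is $r\geqq r^2$. Since $r\geqq 1>0$, dividing by $r$ yields $r\leqq 1$, and together with the lower bound $r\geqq 1$ this forces $r=1$. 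Hence $r_\textup{conv}\left(\psi\to\psi\right)\in\left\{1,+\infty\right\}$, which is the claim.

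Conceptually,~\eqref{conversion_rate_ineq} carries the whole argument: concatenating an optimal $\psi\to\psi$ protocol with itself is again a valid (if suboptimal) $\psi\to\psi$ protocol, so any achievable rate $r>1$ would bootstrap to $r^2,r^3,\dots$ and push the supremum to infinity; the only value compatible with a finite supremum is therefore $r=1$. I expect the main obstacle to be merely verifying that~\eqref{conversion_rate_ineq} applies in the degenerate case where the three states coincide and share the system $\mathcal{H}$, and checking that the $+\infty$ case is handled consistently (there $r\geqq r^2$ is vacuous). Were one instead to avoid~\eqref{conversion_rate_ineq} and argue by explicit $k$-fold concatenation, the harder step would be controlling the accumulated trace-norm error and the mismatch between $\left\lceil r^k n\right\rceil$ and iterated ceilings, via the triangle inequality and monotonicity of the trace norm under the free partial trace --- precisely the bookkeeping that~\eqref{conversion_rate_ineq} already absorbs.
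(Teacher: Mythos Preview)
Your argument is correct and considerably more economical than the paper's. You use~\eqref{conversion_rate_ineq} as a black box: specializing $\rho=\sigma=\omega=\psi$ gives $r\geqq r^{2}$, which together with $r\geqq 1$ forces $r\in\{1,+\infty\}$. The paper, by contrast, does \emph{not} invoke~\eqref{conversion_rate_ineq} here; instead it assumes an achievable $r>1$ and explicitly constructs, via a two-stage concatenation $\mathcal{M}_n\circ(\mathcal{N}_n\otimes\id)\circ\mathcal{N}_n$ with an intermediate partial trace, a protocol achieving $2r-1$, carrying through the triangle-inequality and ceiling bookkeeping you mention in your final paragraph. Iterating yields arbitrarily large achievable rates. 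What your route buys is brevity and conceptual clarity, since the error control and integer-rounding issues are already absorbed into~\eqref{conversion_rate_ineq}; what the paper's route buys is a self-contained argument that does not rest on the external reference for~\eqref{conversion_rate_ineq}, and an explicit amplification step (rate $r\mapsto 2r-1$) that may be of independent interest. Since the paper itself states~\eqref{conversion_rate_ineq} just before this theorem and freely uses it elsewhere (e.g.\ Propositions~\ref{prop15} and~\ref{proposition4_b_1}, Theorem~\ref{thm16}), your shortcut is entirely legitimate within the paper's framework.
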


\begin{proof}
It trivially holds that
\begin{equation}
	r_\textup{conv}\left(\psi\to\psi\right) \geqq 1
\end{equation}
because $\id \in \mathcal{O}\left(\mathcal{H}\right)$.

Assume that $r_\textup{conv}\left(\psi\to\psi\right) > 1$, that is, there exists $r>1$ such that $r \in \mathcal{R}\left(\psi\to\psi\right)$. To prove $r_\textup{conv}\left(\psi\to\psi\right) = \infty$, it suffices to show that 
\begin{equation}\label{belong:thm3_sufficient}
	2r-1 \in \mathcal{R}\left(\psi\to\psi\right)
\end{equation}
because if~\eqref{belong:thm3_sufficient} holds, an arbitrarily large rate can be achieved by exploiting~\eqref{belong:thm3_sufficient} repeatedly. 

Choose a fixed positive real number $\epsilon$. There exists a sequence of free operations $\left(\mathcal{N}_{n}\in \mathcal{O}\left(\mathcal{H}^{\otimes n} \to \mathcal{H}^{\otimes \left\lceil rn \right\rceil}\right) : n\in\mathbb{N}\right)$ such that 
\begin{equation}\label{eq:them2_assumption1}
	\left\|\mathcal{N}_{n}\left(\psi^{\otimes n}\right) - \psi^{\otimes \left\lceil rn \right\rceil}\right\|_1 < \frac{\epsilon}{2}
\end{equation}
holds for an infinitely large subset of $\mathbb{N}$.

Define 
\begin{equation}
	r_0 \coloneqq \inf_{n} r_n,
\end{equation}
where 
\begin{equation}
	r_n \coloneqq \max \left\{r' \geqq 0 : \left\lceil r'n \right\rceil = 2\left\lceil rn \right\rceil - n \right\},
\end{equation}
and the infimum is taken over $n$ satisfying~\eqref{eq:them2_assumption1}. 

For $n$ satisfying~\eqref{eq:them2_assumption1}, it holds that
\begin{equation}
	r_n = 2r -1 + \frac{2\alpha_n}{n},
\end{equation}
where $\alpha_n$ is a real number satisfying $0 \leqq \alpha_n < 1$ and $\left\lceil rn \right\rceil = rn + \alpha_n$. The term $2\alpha_n/n$ approaches to zero as $n$ approaches to infinity. Then, it holds that
	\begin{equation}
		\begin{aligned}
	\inf_{n} r_n 
	&= \inf_{n} \left\{2r-1+ \frac{2\alpha_n}{n}\right\}\\
	&= 2r-1. 
		\end{aligned}
	\end{equation}
Therefore, it holds that
	\begin{equation}\label{eq:them2_3}
		r_0 = 2r - 1. 
	\end{equation}
Then, due to~\eqref{belong:thm3_sufficient} and~\eqref{eq:them2_3}, it suffices to show that 
	\begin{equation}
		r_0 \in \mathcal{R}\left(\psi\to\psi\right).
	\end{equation} 

Now, observe that for any $n$, 
\begin{equation}\label{eq:them2_assumption2}
	\left\lceil rn \right\rceil > n
\end{equation}
always holds. This implies that for any $n$, 
\begin{equation}
	2\left\lceil rn \right\rceil - n > \left\lceil rn \right\rceil
\end{equation}
holds. 

For $n$ satisfying~\eqref{eq:them2_assumption1} and~\eqref{eq:them2_assumption2}, we define a free operation $\mathcal{M}_{n}$ as the partial trace over $\left(2\left\lceil rn \right\rceil - n\right) -   \left\lceil r_0n \right\rceil$ systems so that we can obtain
\begin{equation}
		\mathcal{M}_{n}\left(\psi^{\otimes 2\left\lceil rn \right\rceil - n}\right) = \psi^{\otimes \left\lceil r_0n \right\rceil}.
\end{equation}
Therefore, from the triangle inequality, it follows that
\begin{equation}\label{ineq:thm3}
	\begin{aligned}
		&\left\|\mathcal{M}_{n}\circ\left(\mathcal{N}_{n}\otimes \id \right)\circ\mathcal{N}_{n}\left(\psi^{\otimes n}\right) - \psi^{\otimes \left\lceil r_0n \right\rceil} \right\|_1\\
		&\leqq \left\|\mathcal{M}_{n}\circ\left(\mathcal{N}_{n}\otimes \id \right)\circ\mathcal{N}_{n}\left(\psi^{\otimes n}\right)\right.\\
		&\quad\quad\left.-\mathcal{M}_{n}\circ\left(\mathcal{N}_{n}\otimes \id \right)\left(\psi^{\otimes \left\lceil rn \right\rceil}\right)\right\|_1 \\
		&\quad + \left\| \mathcal{M}_{n}\circ\left(\mathcal{N}_{n}\otimes \id \right)\left(\psi^{\otimes \left\lceil rn \right\rceil}\right) - \psi^{\otimes \left\lceil r_0n \right\rceil} \right\|_1, 
	\end{aligned}
\end{equation}
where $\id$ is the identity map on $\left\lceil rn \right\rceil - n$ systems. 
Since the trace distance is non-increasing for quantum operations, it holds that 
\begin{equation}\label{ineq_2:thm3}
	\begin{aligned}
		&\left\|\mathcal{M}_{n}\circ\left(\mathcal{N}_{n}\otimes \id \right)\circ\mathcal{N}_{n}\left(\psi^{\otimes n}\right)\right.\\
		&\quad\left.-\mathcal{M}_{n}\circ\left(\mathcal{N}_{n}\otimes \id \right)\left(\psi^{\otimes \left\lceil rn \right\rceil}\right)\right\|_1 \\
		&\leqq \left\|\mathcal{N}_{n}\left(\psi^{\otimes n}\right)-\psi^{\otimes \left\lceil rn \right\rceil}\right\|_1, 
	\end{aligned}
\end{equation}
and that
	\begin{equation}\label{ineq_3:thm3}
		\begin{aligned}
		&\left\| \mathcal{M}_{n}\circ\left(\mathcal{N}_{n}\otimes \id \right)\left(\psi^{\otimes \left\lceil rn \right\rceil}\right) - \psi^{\otimes \left\lceil r_0n \right\rceil} \right\|_1 \\
		&\leqq \left\|\left(\mathcal{N}_{n}\otimes \id \right)\left(\psi^{\otimes \left(n + \left(\left\lceil rn \right\rceil - n\right)\right)}\right) - \psi^{\otimes \left(2\left\lceil rn \right\rceil - n\right)} \right\|_1\\
		&= \left\|\left(\mathcal{N}_{n}\left(\psi^{\otimes n}\right) - \psi^{\otimes \left\lceil rn \right\rceil}\right)\otimes\psi^{\otimes \left(\left\lceil rn \right\rceil - n\right)} \right\|_1\\
		&= \left\|\mathcal{N}_{n}\left(\psi^{\otimes n}\right)-\psi^{\otimes \left\lceil rn \right\rceil}\right\|_1. 
		\end{aligned}
	\end{equation}
Therefore, by~\eqref{ineq:thm3},~\eqref{ineq_2:thm3}, and~\eqref{ineq_3:thm3}, we obtain
	\begin{equation}
		\begin{aligned}
		&\left\|\mathcal{M}_{n}\circ\left(\mathcal{N}_{n}\otimes \id \right)\circ\mathcal{N}_{n}\left(\psi^{\otimes n}\right) - \psi^{\otimes \left\lceil r_0n \right\rceil} \right\|_1\\
		&\leqq 2\left\|\mathcal{N}_{n}\left(\psi^{\otimes n}\right)-\psi^{\otimes \left\lceil rn \right\rceil}\right\|_1 \\
		&< \epsilon,
	\end{aligned}
\end{equation}
which implies that $r_0 \in \mathcal{R}\left(\psi\to\psi\right)$. 
\end{proof}
Remarkably, we here give an example where $r_\textup{conv}\left(\psi\to\psi\right) = \infty$, but $\psi$ is not a free state, that is, $\psi \not\in \mathcal{F}\left(\mathcal{H}\right)$. In this paper, 
we call a state $\psi$ that satisfies $r_\textup{conv}(\psi \to \psi) = \infty$ and $\psi \notin \mathcal{F}\left(\mathcal{H}\right)$ a \textit{catalytically replicable state}.
A catalytically replicable state is regarded as a form of catalytic property of quantum resources, which are similar to catalytic state conversion in the entanglement theory~\cite{Jonathan1999}. Any free state $\psi$ is a trivial example of $r_\textup{conv}\left(\psi\to\psi\right) = \infty$, but the following example shows that this is not the whole story; that is, $r_\textup{conv}(\psi \to \psi) = \infty$ implies that $\psi$ is free or catalytically replicable. 
\begin{example}[Catalytically Replicable Resource]\label{ex4}
Suppose that $\mathcal{S}\left(\mathbb{C}^{2}\right) = \left\{\ket{0}\bra{0}, \ket{1}\bra{1}\right\}$. Further suppose that the set of free operations $\mathcal{O}$ consists of operations that are realized by circuits composed of the identity gate, the partial trace, the controlled-NOT gate, the preparation of an auxiliary qubit in $\ket{0}$ state. For any integer $n\geqq 0$, the set of free states is
\begin{equation}
  \mathcal{F}\left({\left(\mathbb{C}^2\right)}^{\otimes n}\right) = \left\{\ket{0}\bra{0}^{\otimes n}  \right\}. 
\end{equation}
In this case, whereas $\ket{1}\bra{1} \not\in \mathcal{F}\left(\mathbb{C}^2\right)$, $r_\textup{conv}\left(\ket{1}\bra{1}\to\ket{1}\bra{1}\right) = +\infty$ because we can convert $\ket{1}\bra{1}$ into $\ket{1}\bra{1}^{\otimes n}$ for any $n$ by appending an auxiliary system prepared in $\ket{0}$ and applying controlled-NOT repeatedly.  
\end{example}

\subsection{Relations Between One-Shot State Conversion and Asymptotic State Conversion}\label{Subsection3_c}
In this section, we analyze relations between the asymptotic state conversion and the exact one-shot state conversion. 
The asymptotic state conversion from $\phi$ to $\psi$ is a task transforming infinitely many copies of $\phi$ into many copies of $\psi$ with a vanishing error, while the exact one-shot conversion $\phi$ to $\psi$ is a task transforming a single $\phi$ into a single $\psi$ exactly. 

We prove two propositions both of which give relations between the asymptotic state conversion and the exact one-shot state conversion. The first proposition provides the relation that holds for inequivalent states. On the other hand, the second proposition characterizes the asymptotic conversion rate between two equivalent states.
Firstly, the following proposition shows that the more resourceful a state is, the harder it is to distill the state and the easier it is to form another state from the state. 

\begin{proposition}\label{conversion_order}
Let $\mathcal{H}_1$,$\mathcal{H}_2$ be quantum systems. Let $\phi,\psi \in \mathcal{S}\left(\mathcal{H}_1\right)$ and $\rho \in \mathcal{S}\left(\mathcal{H}_2\right)$ be quantum states. If $\phi \succeq \psi$, then it holds that
\begin{align}
	\label{ineq:lem5_assert1}
	r_\textup{conv}\left(\rho\to\phi\right) &\leqq r_\textup{conv}\left(\rho\to\psi\right)\\
	\label{ineq:lem5_assert2}
	r_\textup{conv}\left(\psi\to\rho\right) &\leqq r_\textup{conv}\left(\phi\to\rho\right).
\end{align}
\end{proposition}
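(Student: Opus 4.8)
The plan is to reduce both inequalities to a single observation: the exact one-shot relation $\phi\succeq\psi$ already forces an asymptotic conversion rate of at least one, i.e.\ $r_\textup{conv}\left(\phi\to\psi\right)\geqq 1$, after which the chaining inequality~\eqref{conversion_rate_ineq} immediately yields both claims.

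First I would establish $r_\textup{conv}\left(\phi\to\psi\right)\geqq 1$. By the definition~\eqref{def:preorder} of the preorder, $\phi\succeq\psi$ provides a free operation $\mathcal{N}\in\mathcal{O}\left(\mathcal{H}_1\right)$ with $\mathcal{N}\left(\phi\right)=\psi$. The second axiom (closure under tensor products) gives $\mathcal{N}^{\otimes n}\in\mathcal{O}\left(\mathcal{H}_1^{\otimes n}\right)$, and since $\mathcal{N}^{\otimes n}\left(\phi^{\otimes n}\right)=\psi^{\otimes n}$ holds exactly, the rate $r=1$ is achievable with zero error for every $n$. Hence $1\in\mathcal{R}\left(\phi\to\psi\right)$, so $r_\textup{conv}\left(\phi\to\psi\right)=\sup\mathcal{R}\left(\phi\to\psi\right)\geqq 1$.

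Next I would apply the chaining inequality~\eqref{conversion_rate_ineq} twice. Taking $\phi$ as the intermediate state on the path $\rho\to\phi\to\psi$ gives
\begin{equation}
r_\textup{conv}\left(\rho\to\psi\right)\geqq r_\textup{conv}\left(\rho\to\phi\right)\,r_\textup{conv}\left(\phi\to\psi\right)\geqq r_\textup{conv}\left(\rho\to\phi\right),
\end{equation}
which is~\eqref{ineq:lem5_assert1}; taking instead the path $\phi\to\psi\to\rho$ gives
\begin{equation}
r_\textup{conv}\left(\phi\to\rho\right)\geqq r_\textup{conv}\left(\phi\to\psi\right)\,r_\textup{conv}\left(\psi\to\rho\right)\geqq r_\textup{conv}\left(\psi\to\rho\right),
\end{equation}
which is~\eqref{ineq:lem5_assert2}. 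Alternatively, and avoiding any appeal to~\eqref{conversion_rate_ineq}, I would prove the inclusions $\mathcal{R}\left(\rho\to\phi\right)\subseteq\mathcal{R}\left(\rho\to\psi\right)$ and $\mathcal{R}\left(\psi\to\rho\right)\subseteq\mathcal{R}\left(\phi\to\rho\right)$ directly: given an achievable protocol for $\rho\to\phi$ at rate $r$, I post-compose it with $\mathcal{N}^{\otimes\lceil rn\rceil}$ (free by the first and second axioms), and since the trace distance is non-increasing under CPTP maps and $\mathcal{N}^{\otimes\lceil rn\rceil}\left(\phi^{\otimes\lceil rn\rceil}\right)=\psi^{\otimes\lceil rn\rceil}$, the new error is bounded by the old one, so the same rate works for $\rho\to\psi$; the other inclusion is symmetric, pre-composing a $\psi\to\rho$ protocol with $\mathcal{N}^{\otimes n}$.

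The argument is essentially routine, and the only step requiring care is the first one, namely that one-shot convertibility lifts to a rate-one asymptotic conversion. This hinges on the tensor-product axiom keeping $\mathcal{N}^{\otimes n}$ free, and on exact one-shot convertibility producing precisely $\psi^{\otimes n}$ rather than an approximation, so that no error accumulates; no compactness or infinite-dimensional subtlety enters at this stage.
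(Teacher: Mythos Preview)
Your proposal is correct. The alternative you sketch at the end---proving the inclusions $\mathcal{R}\left(\rho\to\phi\right)\subseteq\mathcal{R}\left(\rho\to\psi\right)$ and $\mathcal{R}\left(\psi\to\rho\right)\subseteq\mathcal{R}\left(\phi\to\rho\right)$ by post- or pre-composing with $\mathcal{N}^{\otimes k}$---is essentially the paper's own argument. One minor difference: for~\eqref{ineq:lem5_assert2} the paper passes to the reciprocal rate $r'_\textup{conv}$ via Theorem~\ref{thm:conversion_rates} and proves $\mathcal{R}'\left(\psi\to\rho\right)\subseteq\mathcal{R}'\left(\phi\to\rho\right)$ instead, whereas your pre-composition on $\mathcal{R}$ itself is more direct and works equally well (since $\mathcal{N}^{\otimes n}\left(\phi^{\otimes n}\right)=\psi^{\otimes n}$ exactly, the error is unchanged, not merely bounded).

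Your primary route via the chaining inequality~\eqref{conversion_rate_ineq} is a genuinely different packaging. It trades the explicit construction of protocols for a one-line appeal to a result the paper has already recalled from Ref.~\cite{Horodecki2003}, together with the easy observation $r_\textup{conv}\left(\phi\to\psi\right)\geqq 1$. This is shorter and conceptually cleaner; the paper's direct approach is more self-contained, since it does not rely on~\eqref{conversion_rate_ineq} and in effect reproves the relevant special case of it. Either is fine here, and there is no circularity: the paper states~\eqref{conversion_rate_ineq} before Proposition~\ref{conversion_order} as an independent fact.
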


\begin{proof}
To prove~\eqref{ineq:lem5_assert1}, it suffices to show that $\mathcal{R}\left(\rho\to\phi\right) \subseteq \mathcal{R}\left(\rho\to\psi\right)$. Suppose that $r \in \mathcal{R}\left(\rho\to\phi\right)$. Then, there exists a sequence of free operations $\left(\mathcal{N}_n \in \mathcal{O}\left(\mathcal{H}_2^{\otimes n} \to \mathcal{H}_1^{\otimes \left\lceil rn \right\rceil}\right) : n \in \mathbb{N}\right)$ such that for arbitrary $\epsilon > 0$,  
\begin{equation}\label{ineq:lem5_1}
	\left\|\mathcal{N}_n\left(\rho^{\otimes n}\right) - \phi^{\otimes \left\lceil rn \right\rceil} \right\|_1 < \epsilon 
\end{equation}
holds for an infinitely large subset of $\mathbb{N}$.
As $\phi \succeq \psi$, there exists a free operation $\mathcal{N}$ such that 
\begin{equation}
\mathcal{N}\left(\phi\right) = \psi. 
\end{equation}
Define a sequence of free operations $\left(\mathcal{M}_n \in \mathcal{O}\left(\mathcal{H}_2^{\otimes n} \to \mathcal{H}_1^{\otimes \left\lceil rn \right\rceil}\right) : n \in \mathbb{N}\right)$ as
\begin{equation}
	\mathcal{M}_n \coloneqq \mathcal{N}^{\otimes \left\lceil rn \right\rceil} \circ \mathcal{N}_n. 
\end{equation}
Then, for any $n$ satisfying~\eqref{ineq:lem5_1}, 
\begin{equation}
	\begin{aligned}
		&\left\|\mathcal{M}_n\left(\rho^{\otimes n}\right) - \psi^{\otimes \left\lceil rn \right\rceil} \right\|_1 \\
		&= \left\|\mathcal{N}^{\otimes \left\lceil rn \right\rceil} \circ \mathcal{N}_n\left(\rho^{\otimes n}\right) -\mathcal{N}^{\otimes \left\lceil rn \right\rceil}\left( \phi^{\otimes \left\lceil rn \right\rceil}\right) \right\|_1 \\
		&\leqq \left\|\mathcal{N}_n\left(\rho^{\otimes n}\right) -\phi^{\otimes \left\lceil rn \right\rceil} \right\|_1 \\
		&<\epsilon
	\end{aligned}
\end{equation}
holds, and this implies that $r \in \mathcal{R}\left(\rho\to\psi\right)$. 

Then, we prove~\eqref{ineq:lem5_assert2}.
Note that~\eqref{ineq:lem5_assert2} is equivalent to $r'_\textup{conv}\left(\phi\to\rho\right) \leqq r'_\textup{conv}\left(\psi\to\rho\right)$ because of Theorem~\ref{prop5}. It suffices to show that $\mathcal{R}'\left(\psi\to\rho\right) \subseteq \mathcal{R}'\left(\phi\to\rho\right)$. Suppose that $r \in \mathcal{R}'\left(\psi\to\rho\right)$. 
Then, there exists a sequence of free operations $\left(\mathcal{N}'_n \in \mathcal{O}\left(\mathcal{H}_1^{\otimes \left \lfloor rn \right \rfloor \to \mathcal{H}_2^{\otimes n }}\right) : n \in \mathbb{N}\right)$ such that for arbitrary $\epsilon > 0$, 
\begin{equation}\label{ineq:lem5_2}
	\left\|\mathcal{N}'_n\left(\psi^{\otimes \left\lfloor rn \right\rfloor}\right) - \rho^{\otimes n} \right\|_1 < \epsilon
\end{equation}
holds for an infinitely large subset of $\mathbb{N}$. As $\phi \succeq \psi$, there exists a free operation $\mathcal{N}'$ such that 
\begin{equation}
\mathcal{N}'\left(\phi\right) = \psi. 
\end{equation}
Define a sequence of free operations $\left(\mathcal{M}'_n \in \mathcal{O}\left(\mathcal{H}_1^{\otimes \left \lfloor rn \right \rfloor} \to \mathcal{H}_2^{\otimes n }\right) : n \in \mathbb{N}\right)$ as
\begin{equation}
	\mathcal{M}'_n \coloneqq \mathcal{N}'_n \circ \mathcal{N}'^{\otimes \left\lfloor rn \right\rfloor}. 
\end{equation}
Then, for any $n$ satisfying~\eqref{ineq:lem5_2}, 
\begin{equation}
	\begin{aligned}
		&\left\|\mathcal{M}'_n\left(\phi^{\otimes \left\lfloor rn \right\rfloor}\right) - \rho^{\otimes n}  \right\|_1 \\
		&= \left\|\mathcal{N}'_n \circ \mathcal{N}'^{\otimes \left\lfloor rn \right\rfloor}\left(\phi^{\otimes \left\lfloor rn \right\rfloor}\right)  - \rho^{\otimes n} \right\|_1 \\
		&\leqq \left\|\mathcal{N}'_n\left(\psi^{\otimes \left\lfloor rn \right\rfloor}\right) -\rho^{\otimes n} \right\|_1 \\
		&<\epsilon
	\end{aligned}
\end{equation}
holds and this implies that $r \in \mathcal{R}'\left(\phi\to\rho\right)$. 
\end{proof}

Next, we investigate the other relation between the asymptotic conversion and the exact one-shot conversion, \textit{i.e.}, the asymptotic state conversion between two equivalent states. Asymptotically, we may achieve conversion between states that are not convertible to each other in the one-shot state conversion. One may wonder whether we can achieve a better asymptotic conversion rate between states that are equivalent under the one-shot conversion. 
The following proposition shows that the asymptotic conversion rate for two equivalent states is equal to $1$ in a QRT without catalytically replicable states.
\begin{proposition}\label{prop15}
	Let $\mathcal{H}$ be a quantum system. Let $\psi,\phi \in \mathcal{S}\left(\mathcal{H}\right)$ be quantum states such that $\psi \sim \phi$. Suppose that $r_\textup{conv}\left(\psi\to\psi\right)=1$. Then, it holds that
	\begin{equation}
		r_\textup{conv}\left(\psi\to\phi\right) = r_\textup{conv}\left(\phi\to\psi\right) = 1.
	\end{equation}
\end{proposition}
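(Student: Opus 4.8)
The plan is to obtain the result by sandwiching each of the two rates between matching lower and upper bounds equal to $1$. The lower bounds follow immediately from the hypothesis $\psi\sim\phi$, which supplies exact one-shot conversions in both directions that can simply be tensored; the upper bounds follow from the submultiplicativity inequality~\eqref{conversion_rate_ineq} applied to the round trip $\psi\to\phi\to\psi$, where the assumption $r_\textup{conv}\left(\psi\to\psi\right)=1$ caps the product of the two rates.

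First I would establish the lower bounds. Since $\psi\sim\phi$ gives $\psi\succeq\phi$, there is a free operation $\mathcal{N}\in\mathcal{O}\left(\mathcal{H}\right)$ with $\mathcal{N}\left(\psi\right)=\phi$; by the tensor-product axiom, $\mathcal{N}^{\otimes n}\in\mathcal{O}\left(\mathcal{H}^{\otimes n}\to\mathcal{H}^{\otimes n}\right)$ and $\mathcal{N}^{\otimes n}\left(\psi^{\otimes n}\right)=\phi^{\otimes n}$ exactly. Taking $\mathcal{N}_n=\mathcal{N}^{\otimes n}$ shows that the rate $r=1$, for which $\left\lceil rn\right\rceil=n$, is achievable with zero error, so $1\in\mathcal{R}\left(\psi\to\phi\right)$ and hence $r_\textup{conv}\left(\psi\to\phi\right)\geqq 1$. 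The symmetric argument using $\phi\succeq\psi$ gives $r_\textup{conv}\left(\phi\to\psi\right)\geqq 1$.

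Then I would apply the inequality~\eqref{conversion_rate_ineq} with the choice $\rho=\omega=\psi$ and $\sigma=\phi$, which yields
\[
1 = r_\textup{conv}\left(\psi\to\psi\right) \geqq r_\textup{conv}\left(\psi\to\phi\right)\, r_\textup{conv}\left(\phi\to\psi\right).
\]
Combining this with the two lower bounds, both factors on the right-hand side are at least $1$ while their product is at most $1$; this forces $r_\textup{conv}\left(\psi\to\phi\right)=r_\textup{conv}\left(\phi\to\psi\right)=1$, as desired.

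The argument is short, so there is no serious computational obstacle; the only point requiring care is recognizing that one should route through $\phi$ and back to $\psi$ in~\eqref{conversion_rate_ineq} rather than attempting a direct estimate of each rate. The hypothesis $r_\textup{conv}\left(\psi\to\psi\right)=1$ is precisely what makes the closure work: by Theorem~\ref{thm_duplicability} this rate is either $1$ or $+\infty$, and in the catalytically replicable case the inequality would only give the vacuous bound $+\infty\geqq r_\textup{conv}\left(\psi\to\phi\right)\, r_\textup{conv}\left(\phi\to\psi\right)$, so ruling out catalytic replication of $\psi$ is exactly the ingredient that pins both rates to $1$.
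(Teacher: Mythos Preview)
Your proof is correct and follows essentially the same approach as the paper: both establish the lower bounds $r_\textup{conv}\left(\psi\to\phi\right)\geqq 1$ and $r_\textup{conv}\left(\phi\to\psi\right)\geqq 1$ from the one-shot conversions guaranteed by $\psi\sim\phi$, then use~\eqref{conversion_rate_ineq} with the round trip $\psi\to\phi\to\psi$ together with $r_\textup{conv}\left(\psi\to\psi\right)=1$ to force equality. Your additional commentary on why the hypothesis rules out the catalytically replicable case is a nice observation but not part of the paper's own proof.
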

\begin{proof}
	Since $r_\textup{conv}\left(\psi\to\psi\right) = 1$, it follows that
	\begin{equation}
		r_\textup{conv}\left(\psi\to\phi\right)r_\textup{conv}\left(\phi\to\psi\right) \leqq r_\textup{conv}\left(\psi\to\psi\right) = 1
	\end{equation}
from~\eqref{conversion_rate_ineq}.
On the other hand, since $\psi \sim \phi$, there exist free operations $\mathcal{M} \in \mathcal{O}\left(\mathcal{H}\right)$ and $\mathcal{N} \in \mathcal{O}\left(\mathcal{H}\right)$ such that
	\begin{align}
		\mathcal{M}\left(\psi\right) &= \phi\\
		\mathcal{N}\left(\phi\right) &= \psi,  
	\end{align}
which implies that 
	\begin{align}
		r_\textup{conv}\left(\psi\to\phi\right) &\geqq 1\\
		r_\textup{conv}\left(\phi\to\psi\right) &\geqq 1. 
	\end{align}
Therefore, both $r_\textup{conv}\left(\psi\to\phi\right)$ and $r_\textup{conv}\left(\phi\to\psi\right)$ must be equal to $1$. 
\end{proof}

\section{Distillable Resource and Resource Cost}\label{Section4}
In this section, we analyze properties of the distillable resource $R_\textup{D}$ and the resource cost $R_\textup{C}$, which represent how many resources can be extracted from a state and how many resources are needed to generate a state respectively. As noted in Sec.~\ref{Subsection2_c}, maximally resourceful states are not necessarily unique in general quantum resource theories (QRTs). In Sec.~\ref{Subsection4_a}, we define the distillable resource $R_\textup{D}$ as how many resourceful states can be generated from a state in the worst-case scenario, and we define the resource cost $R_\textup{C}$ as how many resourceful states are needed to generate a state in the best-case scenario. Our definition formulates distillation and formation of a resource even in cases where maximally resourceful states are not unique. 
In Sec.~\ref{Subsection4_b}, we analyze distillation and formation of catalytically replicable states. 
In Sec.~\ref{Subsection4_c}, we prove weak subadditivity of the distillable resource and the resource cost. 
In Sec.~\ref{Subsection4_d}, we further investigate the resource cost, and prove that an upper bound of the resource cost is achievable by a maximally resourceful state if the number of non-equivalent maximally resourceful states is finite.
In Sec.~\ref{Subsection4_e}, generalizing the fact that the distillable entanglement is always smaller than the entanglement cost~\cite{Donald2002}, we prove that the distillable resource is smaller than the resource cost in general QRTs without catalytically replicable states.

\subsection{Definitions of Resource Cost and Distillable Resource}\label{Subsection4_a}
In this section, we provide a formulation of distillation and formation of a resource, and give the definitions of the distillable resource and the resource cost, generalizing those in known QRTs such as bipartite entanglement~\cite{Bennett1996,Hayden2001}, coherence~\cite{Winter2016}, and athermality~\cite{Horodecki2013}.
In contrast with the definition in these previous works, our definitions are applicable to general QRTs, where maximally resourceful states are not necessarily unique. Our definition of the distillable resource represents how many resources can be generated in the worst case, and the definition of the resource cost represents how many resources are needed to form a state in the best case. 

Our formulation of distillation and formation of a resource is as follows. 
For a quantum system $\mathcal{H}$ and a state $\psi\in\mathcal{S}\left(\mathcal{H}\right)$, distillation from the state $\psi$ is a task of extracting many copies of a state $\phi\in\mathcal{S}\left(\mathcal{H}\right)$ from many copies of $\psi$, where $\phi$ is a state that is the most difficult to generate from $\psi$. More formally, distillation is regarded as state conversion from $\psi$ to a state $\phi$ for which $r_\textup{conv}\left(\psi\to\phi\right)$ takes a minimum value. 
Similarly, formation of $\psi$ is a task of generating many copies of $\psi$ from many copies of a state $\phi\in\mathcal{S}\left(\mathcal{H}\right)$ where $\phi$ is a state that can the most easily generate $\psi$. Formation is regarded as state conversion from $\phi$ to a state $\psi$, where $r_\textup{conv}\left(\phi\to\psi\right)$ takes a maximum value for $\phi$. 

The distillable resource $R_\textup{D}$ represents the amount of resource obtained by distillation; the resource cost $R_\textup{C}$ represents the amount of resource needed for formation of a state. 
Formally, the distillable resource of any state $\psi\in\mathcal{S}\left(\mathcal{H}\right)$ is defined as
\begin{equation}
  \label{eq:distillable_resource}
  R_\textup{D}\left(\psi\right)\coloneqq\inf_{\phi\in\mathcal{S}\left(\mathcal{H}\right)}\left\{r_\textup{conv}\left(\psi\to\phi\right) R_{\max}^{(\mathcal{H})}\right\},
\end{equation}
where $R_{\max}^{(\mathcal{H})} \geqq 0$ is a normalization constant. If the dimension of $\mathcal{H}$ is finite, we typically take $R_{\max}^{(\mathcal{H})}$ as the required number of qubits for representing the system $\mathcal{H}$; that is,
	\begin{equation}\label{typical_normalized_constant}
		R_{\max}^{(\mathcal{H})} = \log_2\left(\dim\mathcal{H}\right), 
	\end{equation}
where $\dim\mathcal{H}$ denotes the dimension of $\mathcal{H}$, which implies that $R_{\max}^{(\mathcal{H})}$ represents the maximum amount of a resource in $\mathcal{S}\left(\mathcal{H}\right)$.
Similarly, the resource cost of any state $\psi\in\mathcal{S}\left(\mathcal{H}\right)$ is defined as
\begin{equation}
  \label{eq:resource_cost}
  R_\textup{C}\left(\psi\right)\coloneqq\inf_{\phi\in\mathcal{S}\left(\mathcal{H}\right)}\left\{\frac{R_{\max}^{(\mathcal{H})}}{r_\textup{conv}\left(\phi\to\psi\right)}\right\}.
\end{equation}
Note that if we set the normalization constant $R_{\max}^{(\mathcal{H})} = \log_2\left(\dim\mathcal{H}\right)$ as mentioned in \eqref{typical_normalized_constant} in the QRT of bipartite entanglement, $R_\textup{D}$ and $R_\textup{C}$ reduce to the distillable entanglement and the entanglement cost~\cite{Bennett1996,Hayden2001}, respectively.

We obtain the following proposition for QRTs without catalytically replicable states, while QRTs with catalytically replicable states will be discussed in the next subsection.
This proposition provides general bounds of the distillable resource and the resource cost,
and we will also analyze achievability of the bound of the resource cost in Sec.~\ref{Subsection4_d}.
\begin{proposition}\label{proposition4_a_1}
	Let $\mathcal{H}$ be a quantum system, and let $\psi\in \mathcal{S}\left(\mathcal{H}\right)$ be a state. Suppose that $\mathcal{S}\left(\mathcal{H}\right)\setminus\mathcal{F}\left(\mathcal{H}\right) \neq \emptyset$. If $\psi$ is not a catalytically replicable state, it holds that 
\begin{align}
	\label{ineq:prop8_dist}
  &0\leqq R_\textup{D}\left(\psi\right)\leqq R_{\max}^{(\mathcal{H})},\\
	\label{ineq:prop8_cost}
  &0\leqq R_\textup{C}\left(\psi\right)\leqq R_{\max}^{(\mathcal{H})}.	
\end{align}
Especially, if $\psi$ is a free state, it holds that
	\begin{align}
	\label{ineq:prop8_dist_free}
	R_\textup{D}\left(\psi\right) &= 0,\\
	\label{ineq:prop8_cost_free}
   R_\textup{C}\left(\psi\right) &= 0.
	\end{align}
\end{proposition}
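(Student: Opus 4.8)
The plan is to dispatch the statement by the dichotomy supplied by Theorem~\ref{thm_duplicability}. Because $\psi$ is not catalytically replicable and $r_\textup{conv}\left(\psi\to\psi\right)\in\{1,+\infty\}$, exactly one of two situations occurs: either $\psi\in\mathcal{F}\left(\mathcal{H}\right)$, or $\psi\notin\mathcal{F}\left(\mathcal{H}\right)$ with $r_\textup{conv}\left(\psi\to\psi\right)=1$ (the remaining combination $r_\textup{conv}\left(\psi\to\psi\right)=+\infty$ together with $\psi\notin\mathcal{F}\left(\mathcal{H}\right)$ is precisely the excluded case). The two lower bounds $R_\textup{D}\left(\psi\right)\geqq 0$ and $R_\textup{C}\left(\psi\right)\geqq 0$ are immediate, since every term in the infima of~\eqref{eq:distillable_resource} and~\eqref{eq:resource_cost} is nonnegative, including under the conventions $1/0=+\infty$ and $1/(+\infty)=0$.

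For the upper bounds in the non-free situation, I would simply evaluate both infima at the choice $\phi=\psi$. This gives $R_\textup{D}\left(\psi\right)\leqq r_\textup{conv}\left(\psi\to\psi\right)R_{\max}^{(\mathcal{H})}=R_{\max}^{(\mathcal{H})}$ and $R_\textup{C}\left(\psi\right)\leqq R_{\max}^{(\mathcal{H})}/r_\textup{conv}\left(\psi\to\psi\right)=R_{\max}^{(\mathcal{H})}$, using $r_\textup{conv}\left(\psi\to\psi\right)=1$. This already settles~\eqref{ineq:prop8_dist} and~\eqref{ineq:prop8_cost} whenever $\psi$ is a resource state that is not catalytically replicable.

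It remains to treat the free case and, in particular, to prove the sharper equalities~\eqref{ineq:prop8_dist_free} and~\eqref{ineq:prop8_cost_free}, which subsume the bounds since $R_{\max}^{(\mathcal{H})}\geqq 0$. The cost equality is easy: a free state satisfies $r_\textup{conv}\left(\psi\to\psi\right)=+\infty$, so evaluating~\eqref{eq:resource_cost} at $\phi=\psi$ yields $R_{\max}^{(\mathcal{H})}/(+\infty)=0$, hence $R_\textup{C}\left(\psi\right)=0$. The heart of the argument is the distillation equality: I would fix a resource state $\phi_0\in\mathcal{S}\left(\mathcal{H}\right)\setminus\mathcal{F}\left(\mathcal{H}\right)$, which exists by hypothesis, and show $r_\textup{conv}\left(\psi\to\phi_0\right)=0$, so that the infimum defining $R_\textup{D}\left(\psi\right)$ vanishes. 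Suppose toward a contradiction that some rate $r>0$ is achievable. Along the subsequence on which the error in~\eqref{eq:conversion_rate_set} tends to $0$ one has $\left\lceil rn\right\rceil\geqq 1$ for large $n$; composing the free operation $\mathcal{N}_n$ with the partial trace onto a single output copy produces a free operation $\tilde{\mathcal{N}}_n$, and monotonicity of the trace distance under the partial trace forces $\tilde{\mathcal{N}}_n\left(\psi^{\otimes n}\right)$ to converge to $\phi_0$ in trace norm.

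The main obstacle is exactly this last step, which rests on three facts I would assemble carefully: that $\psi^{\otimes n}$ is free when $\psi$ is (from Proposition~\ref{proposition_free_states} and the second axiom applied to $\mathcal{N}^{\otimes n}$, where $\psi=\mathcal{N}\left(1\right)$); that a free operation applied to a free state yields a free state (from Proposition~\ref{proposition_free_states} and the first axiom, so that each $\tilde{\mathcal{N}}_n\left(\psi^{\otimes n}\right)$ is free); and that $\mathcal{F}\left(\mathcal{H}\right)$ is closed, as noted at the end of Sec.~\ref{Subsection2_b}. Together these exhibit $\phi_0$ as a trace-norm limit of free states, forcing $\phi_0\in\mathcal{F}\left(\mathcal{H}\right)$ and contradicting the choice of $\phi_0$. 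Hence no positive rate is achievable; since $r=0$ always lies in $\mathcal{R}\left(\psi\to\phi_0\right)$ via the trace, we get $r_\textup{conv}\left(\psi\to\phi_0\right)=0$ and therefore $R_\textup{D}\left(\psi\right)=0$, completing all four assertions.
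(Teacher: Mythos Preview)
Your proof is correct and follows essentially the same approach as the paper: split according to whether $\psi$ is free, use $r_\textup{conv}(\psi\to\psi)=1$ from Theorem~\ref{thm_duplicability} to bound both quantities by $R_{\max}^{(\mathcal{H})}$ in the non-free case, and for the free case invoke closedness of $\mathcal{F}(\mathcal{H})$ to force $r_\textup{conv}(\psi\to\phi_0)=0$ for any resource state $\phi_0$. The only cosmetic difference is that for $R_\textup{C}(\psi)=0$ in the free case the paper evaluates the infimum at $\phi=1\in\mathcal{S}(\mathbb{C})$ using $r_\textup{conv}(1\to\psi)=\infty$, whereas you evaluate at $\phi=\psi$ using $r_\textup{conv}(\psi\to\psi)=\infty$; your choice is arguably cleaner since it stays inside $\mathcal{S}(\mathcal{H})$ as the definition~\eqref{eq:resource_cost} requires.
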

\begin{proof} 
	Note that by the definitions \eqref{eq:distillable_resource} of $R_\textup{D}$ and \eqref{eq:resource_cost} of $R_\textup{C}$, $0 \leqq R_\textup{D}\left(\psi\right)$ and $0 \leqq R_\textup{C}\left(\psi\right)$ trivially hold.
	First, we prove the statement for a free state. Let $\psi \in \mathcal{F}\left(\mathcal{H}\right)$ be a free state. Since the set of free states is closed, for any resource state $\phi \in \mathcal{S}\left(\mathcal{H}\right)\setminus\mathcal{F}\left(\mathcal{H}\right)$, it holds that 
\begin{equation}
	r_\textup{conv}\left(\psi \to \phi\right) = 0. 
\end{equation}
Therefore, it holds that
\begin{equation}
		0\leqq R_\textup{D}\left(\psi\right) \leqq R_{\max}^{(\mathcal{H})}r_\textup{conv}\left(\psi \to \phi\right) = 0, 
	\end{equation}
which shows $R_\textup{D}\left(\psi\right) = 0$. 
On the other hand, by Proposition~\ref{proposition_free_states}, there exists a free operation $\mathcal{N} \in \mathcal{O}\left(\mathbb{C}\to\mathcal{H}\right)$ such that $\mathcal{N}\left(1\right) = \psi$. Therefore, it holds that
	\begin{equation}
		\mathcal{N}^{\otimes n}\left(1\right) = \psi^{\otimes n}
	\end{equation}
for any positive integer $n$, which implies that
	\begin{equation}
		r_\textup{conv}\left(1 \to \psi\right) = \infty. 
	\end{equation}
Therefore, it holds that
	\begin{equation}
		0\leqq R_\textup{C}\left(\psi\right) \leqq \frac{R_{\max}^{(\mathcal{H})}}{r_\textup{conv}\left(1 \to \psi\right)} = 0, 
	\end{equation}
which shows $R_\textup{C}\left(\psi\right) = 0$. 
	
Next, we prove~\eqref{ineq:prop8_dist} for a resource state $\psi \in \mathcal{S}\left(\mathcal{H}\right)\setminus\mathcal{F}\left(\mathcal{H}\right)$. By Theorem~\ref{thm_duplicability}, we have $r_\textup{conv}\left(\psi\to\psi\right) = 1$ because $\psi$ is not a catalytically replicable state. 
Then, from~\eqref{eq:distillable_resource}, we obtain
\begin{equation}
	\begin{aligned}
		R_\textup{D}\left(\psi\right) 
		&\leqq R_{\max}^{(\mathcal{H})}r_\textup{conv}\left(\psi\to\psi\right) \\
		&= R_{\max}^{(\mathcal{H})}. 
	\end{aligned}	
\end{equation}
We can show~\eqref{ineq:prop8_cost} by replacing $R_\textup{D}$ with $R_\textup{C}$ and $r_\textup{conv}\left(\psi\to\psi\right)$ with $1/r_\textup{conv}\left(\psi\to\psi\right)$ respectively in the proof of~\eqref{ineq:prop8_dist}. 
\end{proof}

In fact, from the relation between the preorder introduced by the free operations and the asymptotic conversion rate shown in Proposition~\ref{conversion_order}, we obtain the following theorem, which shows that it is sufficient to take the infimum over the maximally resourceful states in the definitions of $R_\textup{D}$ and $R_\textup{C}$, rather than the infimum over the whole set of states.
\begin{theorem}[Maximally Resourceful States are Sufficient for Distillable Resource and Resource Cost]\label{theorem11}
Let $\psi \in \mathcal{S}\left(\mathcal{H}\right)$ be an arbitrary state. It is sufficient to consider $\mathcal{G}\left(\mathcal{H}\right)$ instead of $\mathcal{S}\left(\mathcal{H}\right)$ when we take the infimum in the definitions~\eqref{eq:distillable_resource} and~\eqref{eq:resource_cost} of $R_\textup{D}$ and $R_\textup{C}$; that is, it holds that
\begin{align}
	\label{eq:thm6_assert2}
	R_\textup{D}\left(\psi\right) &= \inf_{\phi\in\mathcal{G}\left(\mathcal{H}\right)}\left\{r_\textup{conv}\left(\psi\to\phi\right) R_{\max}^{(\mathcal{H})}\right\},\\
	\label{eq:thm6_assert1}
	R_\textup{C}\left(\psi\right) &= \inf_{\phi \in \mathcal{G}\left(\mathcal{H}\right)} \left\{\frac{R_{\max}^{(\mathcal{H})}}{r_\textup{conv}\left(\phi\to\psi\right)}\right\}.
\end{align}
\end{theorem}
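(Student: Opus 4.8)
The plan is to show that restricting the infima in \eqref{eq:distillable_resource} and \eqref{eq:resource_cost} to the maximal set $\mathcal{G}(\mathcal{H})$ leaves the values $R_\textup{D}(\psi)$ and $R_\textup{C}(\psi)$ unchanged. Since $\mathcal{G}(\mathcal{H}) \subseteq \mathcal{S}(\mathcal{H})$, the infimum over the smaller set is automatically no smaller than the infimum over all of $\mathcal{S}(\mathcal{H})$; the content is the reverse inequality, namely that every $\phi \in \mathcal{S}(\mathcal{H})$ can be ``replaced'' by a suitable maximally resourceful state without worsening the relevant conversion rate. For this I would invoke Theorem~\ref{prop5}, which guarantees that for any $\phi \in \mathcal{S}(\mathcal{H})$ there exists $\phi^\ast \in \mathcal{G}(\mathcal{H})$ with $\phi \preceq \phi^\ast$, together with the monotonicity of the asymptotic conversion rate under the preorder established in Proposition~\ref{conversion_order}.

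For the \textbf{distillable resource} \eqref{eq:thm6_assert2}, I would argue as follows. Fix an arbitrary $\phi \in \mathcal{S}(\mathcal{H})$ and pick, via Theorem~\ref{prop5}, a maximal state $\phi^\ast \in \mathcal{G}(\mathcal{H})$ with $\phi^\ast \succeq \phi$. Applying \eqref{ineq:lem5_assert1} of Proposition~\ref{conversion_order} with the roles $(\rho,\phi_{\mathrm{there}},\psi_{\mathrm{there}}) = (\psi, \phi^\ast, \phi)$ gives $r_\textup{conv}(\psi \to \phi^\ast) \leqq r_\textup{conv}(\psi \to \phi)$. Hence for every $\phi$ there is a maximal $\phi^\ast$ whose distillation term is no larger, so the infimum over $\mathcal{G}(\mathcal{H})$ attains values at least as small as any term coming from a generic $\phi$; combined with the trivial direction this yields equality.

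For the \textbf{resource cost} \eqref{eq:thm6_assert1}, the structure is dual. Again fix $\phi \in \mathcal{S}(\mathcal{H})$ and choose $\phi^\ast \in \mathcal{G}(\mathcal{H})$ with $\phi^\ast \succeq \phi$. Now I would apply \eqref{ineq:lem5_assert2} of Proposition~\ref{conversion_order} with $(\rho,\phi_{\mathrm{there}},\psi_{\mathrm{there}}) = (\psi,\phi^\ast,\phi)$, which reads $r_\textup{conv}(\phi \to \psi) \leqq r_\textup{conv}(\phi^\ast \to \psi)$. Since the cost term is $R_{\max}^{(\mathcal{H})}/r_\textup{conv}(\phi \to \psi)$, a larger denominator means a smaller term, so $R_{\max}^{(\mathcal{H})}/r_\textup{conv}(\phi^\ast \to \psi) \leqq R_{\max}^{(\mathcal{H})}/r_\textup{conv}(\phi \to \psi)$, again showing that the maximal state $\phi^\ast$ does at least as well. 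Taking the infimum over $\mathcal{G}(\mathcal{H})$ thus recovers the full infimum.

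The \textbf{main subtlety} I would watch for is the handling of the division when conversion rates are $0$ or $\infty$, so that the monotonicity translates correctly into monotonicity of the cost term $R_{\max}^{(\mathcal{H})}/r_\textup{conv}(\cdot\to\psi)$ under the conventions already fixed in the paper (e.g. $1/0 = \infty$ from Theorem~\ref{thm:conversion_rates}); one must check that the inequality $r_\textup{conv}(\phi \to \psi) \leqq r_\textup{conv}(\phi^\ast \to \psi)$ still yields the desired comparison of reciprocals in the boundary cases, and that the infimum is well defined when some terms are infinite. A secondary point is purely formal: I would confirm that the map $\phi \mapsto \phi^\ast$ need not be canonical — we only need existence of \emph{some} dominating maximal state for each $\phi$, which is exactly what Theorem~\ref{prop5} supplies, and no further regularity (continuity, measurability) of this assignment is required since we work termwise at the level of the infimum.
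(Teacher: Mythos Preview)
Your proposal is correct and follows essentially the same approach as the paper: both arguments invoke Theorem~\ref{prop5} to dominate an arbitrary $\phi\in\mathcal{S}(\mathcal{H})$ by some $\phi^\ast\in\mathcal{G}(\mathcal{H})$ and then apply the monotonicity of the conversion rate from Proposition~\ref{conversion_order} to conclude that restricting the infimum to $\mathcal{G}(\mathcal{H})$ cannot increase it. The only cosmetic difference is that the paper rewrites the cost side via $r'_\textup{conv}=1/r_\textup{conv}$ before applying monotonicity, whereas you work directly with reciprocals; the content is identical.
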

\begin{proof}
To show~\eqref{eq:thm6_assert1}, it suffices to show that 
\begin{equation}\label{eq:thm6_assert1-2}
R_\textup{D}\left(\psi\right) = \inf_{\phi \in \mathcal{G}\left(\mathcal{H}\right)} \left\{r'_\textup{conv}\left(\phi\to\psi\right)R_{\max}^{(\mathcal{H})}\right\}. 
\end{equation}
By Proposition~\ref{conversion_order} and Theorem~\ref{prop5},
for any state $\rho \in \mathcal{S}\left(\mathcal{H}\right)$, there always exists a maximally resourceful state $\phi \in \mathcal{G}\left(\mathcal{H}\right)$ such that 
\begin{equation}
	r'_\textup{conv}\left(\phi\to\psi\right)R_{\max}^{(\mathcal{H})} \leqq r'_\textup{conv}\left(\rho\to\psi\right)R_{\max}^{(\mathcal{H})}. 
\end{equation}
Therefore,~\eqref{eq:thm6_assert1-2} holds. Equation~\eqref{eq:thm6_assert2} can be shown by replacing $R_\textup{C}$ with $R_\textup{D}$ and $r'_\textup{conv}\left(\rho\to\psi\right)$ with $r_\textup{conv}\left(\rho\to\psi\right)$ in~\eqref{eq:thm6_assert1-2}.
\end{proof}

\begin{remark}
Due to Theorem~\ref{theorem11}, the infimum in the definitions of the distillable resource and the resource cost is achieved in the following cases. 
Let 
	\begin{equation}\label{def:equivalence_class}
		\mathcal{G}\left(\mathcal{H}\right)/\sim \coloneqq \left\{C_\phi : \phi \in \mathcal{G}\left(\mathcal{H}\right)\right\}
	\end{equation}
be the set of equivalence classes of the maximally resourceful states, where 
	\begin{equation}
		C_\phi \coloneqq \left\{\psi \in \mathcal{G}\left(\mathcal{H}\right) : \psi \sim \phi\right\}
	\end{equation}
is the equivalence class of $\phi$. Suppose that the number of non-equivalent maximally resourceful states is finite; that is, $\left|\mathcal{G}\left(\mathcal{H}\right)/\sim\right| < \infty$. For example, in the QRT of bipartite entanglement, $\left|\mathcal{G}\left(\mathcal{H}\right)/\sim\right| = 1$; in the QRT of magic states for qutrits, $\left|\mathcal{G}\left(\mathcal{H}\right)/\sim\right| = 2$~\cite{Veitch2014}.
In these cases, the infimum is achievable by a maximally resourceful state because of Proposition~\ref{prop15}.
Thus, for these existing QRTs, we can actually replace the infimum in the definitions of the distillable resource~\eqref{eq:distillable_resource} and the resource cost~\eqref{eq:resource_cost} with the minimum, while further research is needed to clarify whether or not we can replace the infimum with the minimum for QRTs with infinitely many non-equivalent maximally resourceful states, \textit{i.e.}, $\left|\mathcal{G}\left(\mathcal{H}\right)/\sim\right| = \infty$.
\end{remark}

By using Theorem~\ref{theorem11}, we here prove that the more resourceful a state is, the larger the distillable resource and the resource cost of the state are. 

\begin{proposition}\label{prop_thm9}
	For a quantum system $\mathcal{H}$, let $\psi, \phi \in \mathcal{S}\left(\mathcal{H}\right)$ be quantum states such that $\phi\succeq\psi$. Then, it holds that 
	\begin{align}
		R_\textup{D}\left(\psi\right) &\leqq R_\textup{D}\left(\phi\right),\\
		R_\textup{C}\left(\psi\right) &\leqq R_\textup{C}\left(\phi\right).
	\end{align}
\end{proposition}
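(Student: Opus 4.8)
The plan is to derive both inequalities directly from the pointwise monotonicity of the asymptotic conversion rate established in Proposition~\ref{conversion_order}, exploiting the fact that $R_\textup{D}$ and $R_\textup{C}$ are defined as infima over a target (respectively, source) state. Since $\phi \succeq \psi$, Proposition~\ref{conversion_order} (applied with $\mathcal{H}_1 = \mathcal{H}_2 = \mathcal{H}$) gives, for every state $\rho \in \mathcal{S}\left(\mathcal{H}\right)$,
\begin{align}
	r_\textup{conv}\left(\psi\to\rho\right) &\leqq r_\textup{conv}\left(\phi\to\rho\right),\\
	r_\textup{conv}\left(\rho\to\phi\right) &\leqq r_\textup{conv}\left(\rho\to\psi\right).
\end{align}
The whole strategy is to turn each of these termwise inequalities into the claimed inequality between infima, so that no compactness or existence argument is needed.

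For the distillable resource I would use the first inequality. Multiplying by the nonnegative normalization constant $R_{\max}^{(\mathcal{H})}$ preserves the direction, so $r_\textup{conv}(\psi\to\rho)\,R_{\max}^{(\mathcal{H})} \leqq r_\textup{conv}(\phi\to\rho)\,R_{\max}^{(\mathcal{H})}$ holds for every $\rho \in \mathcal{S}(\mathcal{H})$. Taking the infimum over $\rho$ on both sides and recalling the definition~\eqref{eq:distillable_resource} immediately yields $R_\textup{D}(\psi) \leqq R_\textup{D}(\phi)$, since a termwise inequality is preserved under $\inf$.

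For the resource cost I would use the second inequality, now reading $\rho$ as the source state. Passing to reciprocals reverses the inequality, giving $R_{\max}^{(\mathcal{H})}/r_\textup{conv}(\rho\to\psi) \leqq R_{\max}^{(\mathcal{H})}/r_\textup{conv}(\rho\to\phi)$ for every $\rho$; taking the infimum over $\rho$ and using~\eqref{eq:resource_cost} then gives $R_\textup{C}(\psi) \leqq R_\textup{C}(\phi)$.

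The only point requiring care --- and the nearest thing to an obstacle --- is the reciprocal step for $R_\textup{C}$, because the conversion rates may equal $0$ (when no conversion is possible) or $+\infty$ (for free or catalytically replicable sources). I would dispose of this using the conventions $1/0 = +\infty$ and $1/\infty = 0$ already fixed in Theorem~\ref{thm:conversion_rates}, under which $t \mapsto 1/t$ is order-reversing on $[0,+\infty]$, so that the reciprocal inequality stays valid across every boundary case. Everything else is purely formal, reflecting the fact that monotonicity holds at the level of each individual rate before the infimum is ever taken.
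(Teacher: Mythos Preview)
Your proof is correct and rests on the same core ingredient as the paper's, namely Proposition~\ref{conversion_order}. The execution differs slightly: the paper first invokes Theorem~\ref{theorem11} to restrict the infimum to $\mathcal{G}(\mathcal{H})$ and then runs an $\epsilon$-approximation argument (pick a near-optimizer $\rho$ for $\phi$, bound via the rate inequality, let $\epsilon\to 0$), whereas you observe directly that a pointwise inequality between the functions being infimized passes to the infima. Your route is more economical---neither Theorem~\ref{theorem11} nor the $\epsilon$-step is actually needed here---and your explicit handling of the $0$ and $+\infty$ boundary cases in the reciprocal step for $R_\textup{C}$ is a point the paper leaves implicit.
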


\begin{proof}
Let $\epsilon$ be an arbitrary positive number. Due to Theorem~\ref{theorem11}, we can take a maximally resourceful state $\rho \in \mathcal{G}\left(\mathcal{H}\right)$ such that
	\begin{equation}
		R_\textup{D}\left(\phi\right) + \epsilon \geqq R_{\max}^{(\mathcal{H})}r_\textup{conv}\left(\phi\to\rho\right). 
	\end{equation}
Then, by Proposition~\ref{conversion_order}, 
	\begin{align}
		R_\textup{D}\left(\psi\right) 
		&= \inf_{\sigma \in \mathcal{G}\left(\mathcal{H}\right)} \left\{R_{\max}^{(\mathcal{H})}r_\textup{conv}\left(\psi\to\sigma\right)\right\}\\
		&\leqq R_{\max}^{(\mathcal{H})}r_\textup{conv}\left(\psi\to\rho\right)\\
		&\leqq R_{\max}^{(\mathcal{H})}r_\textup{conv}\left(\phi\to\rho\right)\\
		&\leqq R_\textup{D}\left(\phi\right) + \epsilon
	\end{align}
holds. As we can take an arbitrarily small $\epsilon$, $R_\textup{D}\left(\psi\right) \leqq R_\textup{D}\left(\phi\right)$ holds.

Similarly, due to Theorem~\ref{theorem11}, we can take a maximally resourceful state $\rho \in \mathcal{G}\left(\mathcal{H}\right)$ such that
	\begin{equation}
		R_\textup{C}\left(\phi\right) + \epsilon \geqq \frac{R_{\max}^{(\mathcal{H})}}{r_\textup{conv}\left(\rho\to\phi\right)}. 
	\end{equation}
Then, by Proposition~\ref{conversion_order}, 
	\begin{align}
		R_\textup{C}\left(\psi\right) 
		&= \inf_{\sigma \in \mathcal{G}\left(\mathcal{H}\right)} \left\{\frac{R_{\max}^{(\mathcal{H})}}{r_\textup{conv}\left(\sigma\to\psi\right)}\right\}\\
		&\leqq \frac{R_{\max}^{(\mathcal{H})}}{r_\textup{conv}\left(\rho\to\psi\right)}\\
		&\leqq \frac{R_{\max}^{(\mathcal{H})}}{r_\textup{conv}\left(\rho\to\phi\right)}\\
		&\leqq R_\textup{C}\left(\phi\right) + \epsilon
	\end{align}
holds. As we can take an arbitrarily small $\epsilon$, $R_\textup{C}\left(\psi\right) \leqq R_\textup{C}\left(\phi\right)$ holds.
\end{proof}

\subsection{Distillable Resource and Resource Cost of Catalytically Replicable States}\label{Subsection4_b}
In this section, we analyze the distillable resource and the resource cost of a catalytically replicable state. As the conversion rate between a catalytically replicable state is infinite, we obtain a counter-intuitive result, which shows that an infinitely large number of a resource can be distilled from a catalytically replicable state and that a catalytically replicable state can be generated without any cost.

The following proposition shows that the resource cost needed to form a catalytically replicable state is equal to zero. Moreover, if the distillable resource of a catalytically replicable state is nonzero, an infinite amount of a resource can be distilled from the state. 
\begin{proposition}\label{proposition4_b_1}
Let $\psi \in \mathcal{S}\left(\mathcal{H}\right)$ be a state satisfying $r_\textup{conv}\left(\psi \to \psi \right) = \infty$. Then,
\begin{equation}
	R_\textup{C}\left(\psi\right) = 0.
\end{equation}
holds. Moreover, if $R_\textup{D}\left(\psi\right) > 0$, 
\begin{equation}
R_\textup{D}\left(\psi\right) = \infty
\end{equation}
holds.
\end{proposition}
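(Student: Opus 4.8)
The plan is to treat the two assertions separately: the claim $R_\textup{C}(\psi)=0$ follows immediately by evaluating the defining infimum at a convenient point, while the dichotomy for $R_\textup{D}(\psi)$ rests on the submultiplicativity of conversion rates recorded in \eqref{conversion_rate_ineq}.

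For the resource cost, I would simply bound the infimum in \eqref{eq:resource_cost} using the particular choice $\phi=\psi$. Since $r_\textup{conv}(\psi\to\psi)=\infty$ by hypothesis, the corresponding term is $R_{\max}^{(\mathcal{H})}/\infty=0$; as every term in the infimum is non-negative, the infimum is bounded below by $0$ and this choice attains $0$, so $R_\textup{C}(\psi)=0$ with no further work.

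For the distillable resource, the key step is to establish that for every $\phi\in\mathcal{S}(\mathcal{H})$ the rate $r_\textup{conv}(\psi\to\phi)$ equals either $0$ or $+\infty$. To see this, suppose toward a contradiction that $0<r_\textup{conv}(\psi\to\phi)<\infty$ for some $\phi$. Applying \eqref{conversion_rate_ineq} with the intermediate state taken to be $\psi$ itself gives $r_\textup{conv}(\psi\to\phi)\geqq r_\textup{conv}(\psi\to\psi)\,r_\textup{conv}(\psi\to\phi)$, and dividing through by the positive finite quantity $r_\textup{conv}(\psi\to\phi)$ yields $1\geqq r_\textup{conv}(\psi\to\psi)=\infty$, a contradiction. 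Hence the rate is always $0$ or $\infty$, which is the analogue of Theorem~\ref{thm_duplicability} for conversions $\psi\to\phi$ rather than $\psi\to\psi$.

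With this dichotomy in hand I would conclude as follows. The hypothesis $R_\textup{D}(\psi)>0$ forces $R_{\max}^{(\mathcal{H})}>0$, since otherwise every term in \eqref{eq:distillable_resource} vanishes. If some $\phi$ satisfied $r_\textup{conv}(\psi\to\phi)=0$, the infimum \eqref{eq:distillable_resource} would equal $0$, contradicting $R_\textup{D}(\psi)>0$; therefore $r_\textup{conv}(\psi\to\phi)=\infty$ for every $\phi$, and with $R_{\max}^{(\mathcal{H})}>0$ each term equals $+\infty$, giving $R_\textup{D}(\psi)=\infty$. I expect the only real subtlety to be bookkeeping with extended-real arithmetic, namely isolating the case $0<r_\textup{conv}(\psi\to\phi)<\infty$ before dividing and excluding $R_{\max}^{(\mathcal{H})}=0$; there is no analytic or topological obstacle, as the entire argument is an algebraic consequence of the definitions and the submultiplicativity relation.
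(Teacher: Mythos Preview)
Your proof is correct and follows essentially the same approach as the paper's: both bound the cost infimum at $\phi=\psi$ and use the submultiplicativity inequality~\eqref{conversion_rate_ineq} with intermediate state $\psi$ for the distillable-resource part. The only difference is organizational---you first establish the $0/\infty$ dichotomy for $r_\textup{conv}(\psi\to\phi)$ over all $\phi$ and then conclude, whereas the paper picks a near-infimizer $\phi$ directly and shows $r_\textup{conv}(\psi\to\phi)R_{\max}^{(\mathcal{H})}\geqq r_\textup{conv}(\psi\to\psi)\,R_\textup{D}(\psi)=\infty$; your explicit handling of the cases $R_{\max}^{(\mathcal{H})}=0$ and $0<r_\textup{conv}(\psi\to\phi)<\infty$ is slightly more careful with the extended-real arithmetic.
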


\begin{proof}
	Note that $0 \leqq R_\textup{C}\left(\psi\right)$ and $0 \leqq R_\textup{D}\left(\psi\right)$ hold by the definitions. Since $r_\textup{conv}\left(\psi\to\psi\right) = \infty$, 
\begin{equation}
	\begin{aligned}
		R_\textup{C}\left(\psi\right) 
		&\leqq \frac{R_{\max}^{(\mathcal{H})}}{r_\textup{conv}\left(\psi\to\psi\right)} \\
		&= 0
	\end{aligned}
\end{equation}
holds. Therefore, it holds that $R_\textup{C}\left(\psi\right) = 0$.

Recall that for quantum states $\rho$, $\sigma$ and $\omega$, it holds that $r_\textup{conv}\left(\rho\to\omega\right) \geqq r_\textup{conv}\left(\rho\to\sigma\right)r_\textup{conv}\left(\sigma\to\omega\right)$ as shown in~\eqref{conversion_rate_ineq}. 
Take an arbitrary positive number $\epsilon$. Let $\phi \in \mathcal{S}\left(\mathcal{H}\right)$ be a state such that
\begin{equation}
R_\textup{D}\left(\psi\right) + \epsilon  \geqq r_\textup{conv}\left(\psi\to\phi\right)R_{\max}^{(\mathcal{H})} {\geqq R_\textup{D}\left(\psi\right)}.
\end{equation}
Then, it holds that
	\begin{align}
		R_\textup{D}\left(\psi\right) + \epsilon
		& \geqq r_\textup{conv}\left(\psi\to\phi\right)R_{\max}^{(\mathcal{H})} \\
		& \geqq  r_\textup{conv}\left(\psi\to\psi\right)r_\textup{conv}\left(\psi\to\phi\right)R_{\max}^{(\mathcal{H})}\\
		& {\geqq}  r_\textup{conv}\left(\psi\to\psi\right)R_\textup{D}\left(\psi\right)\\
		& = \infty. 
	\end{align} 
As we can take an arbitrarily small $\epsilon$, it holds that $R_\textup{D}\left(\psi\right) = \infty$. 
\end{proof}

In a QRT whose maximally resourceful states are catalytically replicable, there may be a state of which the distillable resource is infinite, and the resource cost is zero, as shown in the Example~\ref{ex5}. 
\begin{example}[Zero Resource Cost and Infinite Distillable Resource]\label{ex5}
As shown in Proposition~\ref{proposition4_a_1} and Proposition~\ref{proposition4_b_1}, the distillable resource of a catalytically replicable state may be infinity while that of a free state is zero. Consider the same setup as Example~\ref{ex4}. In this case,
\begin{align}
	R_\textup{D}\left(\ket{1}\bra{1}\right) &= \infty,\\
	R_\textup{C}\left(\ket{1}\bra{1}\right) &= 0
\end{align}
follows from $\mathcal{G}\left(\mathbb{C}^{2}\right) = \left\{\ket{1}\bra{1}\right\}$ and $r_\textup{conv}\left(\ket{1}\bra{1} \to \ket{1}\bra{1} \right) = \infty$. 
In contrast, for any free state $\psi$, 
Proposition~\ref{proposition4_a_1} shows that
\begin{align}
	R_\textup{D}\left(\psi\right) &= 0,\\
	R_\textup{C}\left(\psi\right) &= 0.
\end{align}
\end{example}

\subsection{Weak Subadditivity of Distillable Resource and Resource Cost}\label{Subsection4_c}
In this section, we prove that the distillable resource and the resource cost are \textit{weakly subadditive} if the Hilbert space $\mathcal{H}$ is finite-dimensional and the {normalization constant} is set as $R_{\max}^{(\mathcal{H})} = \log_2\left(\dim\mathcal{H}\right)$. The definitions of additivity and subadditivity are as follows. 
\begin{definition}[Additivity and Subadditivity]
	Let $f_\mathcal{H}$ be a family of functions from $\mathcal{S}\left(\mathcal{H}\right)$ to $\mathbb{R}$, where $\mathcal{H}$ is a quantum system. We may omit the subscript of $f_\mathcal{H}$ to write $f$ for brevity. 
	Then, $f$ is said to be {additive} if it holds that
	\begin{equation}
		f\left(\psi\otimes\phi\right) = f\left(\psi\right) + f\left(\phi\right)
	\end{equation}
for any states $\psi \in \mathcal{S}\left(\mathcal{H}\right)$ and $\phi \in \mathcal{S}\left(\mathcal{H}'\right)$. 
On the other hand, $f$ is said to be weakly additive if it holds that
	\begin{equation}\label{def:weakly_additive}
		f\left(\psi^{\otimes n}\right) = nf\left(\psi\right)
	\end{equation}
for any state $\psi \in \mathcal{S}\left(\mathcal{H}\right)$ and for any positive integer $n$. 

Similarly, $f$ is said to be {subadditive} if it holds that
	\begin{equation}
		f\left(\psi\otimes\phi\right) \leqq f\left(\psi\right) + f\left(\phi\right)
	\end{equation}
for any states $\psi \in \mathcal{S}\left(\mathcal{H}\right)$ and $\phi \in \mathcal{S}\left(\mathcal{H}'\right)$. 
On the other hand, $f$ is said to be weakly subadditive if it holds that
	\begin{equation}
		f\left(\psi^{\otimes n}\right) \leqq nf\left(\psi\right)
	\end{equation}
for any state $\psi \in \mathcal{S}\left(\mathcal{H}\right)$ and for any positive integer $n$. 
\end{definition}

Note that in some cases, \textit{e.g.}, in Ref.~\cite{Donald2002}, 
the property mentioned in \eqref{def:weakly_additive}, 
which we call \textit{weak additivity} here, 
is reffered to as \textit{additivity}. 
However, in this {paper}, we follow the convention of Ref.~\cite{Chitambar2018}.

The proof of weak subadditivity exploits the following proposition. 
\begin{proposition}\label{lem9}
	Let $\mathcal{H}$ and $\mathcal{H}'$ be quantum systems. Let $\psi \in \mathcal{S}\left(\mathcal{H}\right)$ and $\phi \in \mathcal{S}\left(\mathcal{H}'\right)$ be quantum states. Then for any $n\in\mathbb{N}$, it holds that
\begin{equation}
	r_\textup{conv}\left(\psi\to\phi\right) = r_\textup{conv}\left(\psi^{\otimes n} \to \phi^{\otimes n}\right).
\end{equation}
\end{proposition}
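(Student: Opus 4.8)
The plan is to prove the two inequalities $r_\textup{conv}(\psi^{\otimes n}\to\phi^{\otimes n})\le r_\textup{conv}(\psi\to\phi)$ and $r_\textup{conv}(\psi^{\otimes n}\to\phi^{\otimes n})\ge r_\textup{conv}(\psi\to\phi)$ separately, working directly with the achievable-rate sets of~\eqref{eq:conversion_rate_set}. The guiding observation is that $(\psi^{\otimes n})^{\otimes m}=\psi^{\otimes mn}$ and $(\phi^{\otimes n})^{\otimes p}=\phi^{\otimes np}$, so that a rate-$r$ protocol for the \emph{blocked} task $\psi^{\otimes n}\to\phi^{\otimes n}$ and a rate-$r$ protocol for the \emph{unblocked} task $\psi\to\phi$ act on the very same tensor powers. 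The only difference between the two tasks is an integer round-off in the exponents, which I would absorb using the freeness of the partial trace (Axiom 4) together with the non-increase of the trace distance under CPTP maps.

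For $r_\textup{conv}(\psi^{\otimes n}\to\phi^{\otimes n})\le r_\textup{conv}(\psi\to\phi)$, I would build an unblocked protocol out of a blocked one. Starting from $r\in\mathcal{R}(\psi^{\otimes n}\to\phi^{\otimes n})$, i.e.\ free operations $\mathcal{M}_m\in\mathcal{O}(\mathcal{H}^{\otimes mn}\to{\mathcal{H}'}^{\otimes n\lceil rm\rceil})$ with error vanishing along a subsequence $m_j$, I would set the unblocked block length to $k=m_jn$, feed $\psi^{\otimes m_jn}$ into $\mathcal{M}_{m_j}$, and then discard $n\lceil rm_j\rceil-\lceil rm_jn\rceil$ output copies by a partial trace. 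This is legitimate because $n\lceil rm\rceil\ge rmn$ and, being an integer exceeding $rmn$, $n\lceil rm\rceil\ge\lceil rmn\rceil$. The resulting map lies in $\mathcal{O}(\mathcal{H}^{\otimes k}\to{\mathcal{H}'}^{\otimes\lceil rk\rceil})$ and has error bounded by that of $\mathcal{M}_{m_j}$, hence still vanishing, so $r\in\mathcal{R}(\psi\to\phi)$.

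For the reverse inequality the round-off must be handled more carefully. Using the floor formulation of the rate (which defines the same quantity by the Remark following Theorem~\ref{thm:conversion_rates}), I would take free operations $\mathcal{N}_k\in\mathcal{O}(\mathcal{H}^{\otimes k}\to{\mathcal{H}'}^{\otimes\lfloor rk\rfloor})$ with error vanishing along $k_j$, and choose the blocked block length $m_j=\lceil k_j/n\rceil$, so that $0\le m_jn-k_j<n$ is bounded independently of $j$. Tracing out the at most $n-1$ surplus input copies, applying $\mathcal{N}_{k_j}$, and partitioning the output $\phi^{\otimes\lfloor rk_j\rfloor}$ into $\lfloor\lfloor rk_j\rfloor/n\rfloor$ blocks of $\phi^{\otimes n}$ yields a blocked conversion whose rate $\lfloor\lfloor rk_j\rfloor/n\rfloor/m_j$ tends to $r$ as $j\to\infty$. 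Consequently every $r-\epsilon$ is an achievable rate for the blocked task, and letting $\epsilon\to0$ gives $r_\textup{conv}(\psi^{\otimes n}\to\phi^{\otimes n})\ge r_\textup{conv}(\psi\to\phi)$.

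The main obstacle is exactly this index bookkeeping: the vanishing-error subsequence of the unblocked protocol need not consist of multiples of $n$, so one cannot naively regroup $mn$ copies into $m$ blocks of size $n$. The resolution is to adapt the block lengths to the given subsequence, taking $k=m_jn$ in the first direction and $m_j=\lceil k_j/n\rceil$ in the second, so that the mismatch is always an $O(n)$ number of copies, negligible against the linearly growing block length and therefore immaterial to the asymptotic rate. Throughout, the freeness of the partial trace and the equivalence of the floor and ceiling conversion rates are the two tools that allow these bounded discrepancies to be discarded at no cost.
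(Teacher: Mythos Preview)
Your proof is correct. The first inclusion (blocked to unblocked) matches the paper's argument exactly: set $k=m_jn$ and trace out the $n\lceil rm_j\rceil-\lceil rm_jn\rceil$ surplus output copies.

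For the reverse inclusion, however, the paper takes a simpler route that sidesteps the index bookkeeping you identify as the main obstacle. Given $\mathcal{M}_m\in\mathcal{O}(\mathcal{H}^{\otimes m}\to\mathcal{H}'^{\otimes\lceil rm\rceil})$ with error below $\epsilon/n$ along a subsequence, the paper simply applies $\mathcal{M}_m^{\otimes n}$ in parallel to $(\psi^{\otimes n})^{\otimes m}=(\psi^{\otimes m})^{\otimes n}$. A telescoping bound then gives $\|\mathcal{M}_m^{\otimes n}((\psi^{\otimes m})^{\otimes n})-(\phi^{\otimes n})^{\otimes\lceil rm\rceil}\|_1<\epsilon$, so $r$ itself lies in $\mathcal{R}(\psi^{\otimes n}\to\phi^{\otimes n})$ directly, with no $\epsilon$-loss and no need to align the good subsequence with multiples of $n$. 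Your approach recovers the rate only after an $\epsilon\to0$ limit and requires invoking the floor/ceiling equivalence of the rate; the paper's parallelisation trick yields a one-line argument that establishes the set inclusion $\mathcal{R}(\psi\to\phi)\subseteq\mathcal{R}(\psi^{\otimes n}\to\phi^{\otimes n})$ outright.
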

\begin{proof}
	It suffices to show that $\mathcal{R}\left(\psi\to\phi\right) = \mathcal{R}\left(\psi^{\otimes n} \to \phi^{\otimes n}\right)$. 
First, assume $r \in \mathcal{R}\left(\psi^{\otimes n}\to\phi^{\otimes n}\right)$ to show $\mathcal{R}\left(\psi^{\otimes n} \to \phi^{\otimes n}\right) \subseteq\mathcal{R}\left(\psi\to\phi\right)$. 
Choose a positive number $\epsilon>0$. Then, there exists a sequence of free operations $\left(\mathcal{M}^{(n)}_m \in \mathcal{O}\left(\mathcal{H}^{\otimes nm} \to \mathcal{H}'^{\otimes n\left\lceil rm \right\rceil}\right) : m \in \mathbb{N}\right)$ such that
	\begin{equation}
	  \left\|\mathcal{M}^{(n)}_m\left({\left(\psi^{\otimes n}\right)}^{\otimes m} \right) - {\left(\phi^{\otimes n}\right)}^{\otimes \left\lceil rm \right\rceil}\right\|_1 < \epsilon
	\end{equation}
holds for an infinitely large subset of $\mathbb{N}$. Since $n\left\lceil rm \right\rceil \geqq \left\lceil rnm \right\rceil$ holds, we can define $\mathcal{N}^{(n)}_{m} \in \mathcal{O}\left(\mathcal{H}'^{\otimes n\left\lceil rm \right\rceil} \to \mathcal{H}'^{\otimes \left\lceil rnm \right\rceil}\right)$ as the partial trace {over} $n\left\lceil rm \right\rceil - \left\lceil rnm \right\rceil$ systems. Then, we have $\mathcal{N}^{(n)}_{m}\left(\phi^{\otimes n\left\lceil rm \right\rceil}\right) = \phi^{\otimes \left\lceil rnm \right\rceil}$. Therefore, it holds that
	\begin{equation}
		\begin{aligned}
		&\left\|\mathcal{N}^{(n)}_m\circ\mathcal{M}^{(n)}_m\left(\psi^{\otimes nm}\right) - \phi^{\otimes \left\lceil rnm \right\rceil}\right\|_1 \\
		&= \left\|\mathcal{N}^{(n)}_m\circ\mathcal{M}^{(n)}_m\left({\left(\psi^{\otimes n}\right)}^{\otimes m}\right) - \mathcal{N}^{(n)}_m\left(\phi^{\otimes n\left\lceil rm \right\rceil}\right)\right\|_1\\
		&\leqq\left\|\mathcal{M}^{(n)}_m\left({\left(\psi^{\otimes n}\right)}^{\otimes m}\right) - \phi^{\otimes n\left\lceil rm \right\rceil}\right\|_1\\
		&<\epsilon. 
		\end{aligned}
	\end{equation}
Therefore, for an integer $k = nm$ with a sufficiently large $m$, there exists a free operation $\mathcal{L}_k \coloneqq \mathcal{N}^{(n)}_m\circ\mathcal{M}^{(n)}_m$ such that
	\begin{equation}
		\begin{aligned}
		\left\|\mathcal{L}_k\left(\psi^{\otimes k}\right) - \phi^{\otimes \left\lceil rk \right\rceil}\right\|_1 <\epsilon.
		\end{aligned}
	\end{equation} 
Therefore, $r \in \mathcal{R}\left(\psi\to\phi\right)$, which implies  $\mathcal{R}\left(\psi^{\otimes n} \to \phi^{\otimes n}\right) \subseteq\mathcal{R}\left(\psi\to\phi\right)$. 

On the other hand, to show $\mathcal{R}\left(\psi\to\phi\right) \subseteq \mathcal{R}\left(\psi^{\otimes n} \to \phi^{\otimes n}\right)$, assume $r \in \mathcal{R}\left(\psi\to\phi\right)$. Choose a positive number $\epsilon$. Then, there exists a sequence of free operations $\left(\mathcal{M}_m \in \mathcal{O}\left(\mathcal{H}^{\otimes m} \to \mathcal{H}'^{\otimes \left\lceil rm \right\rceil}\right) : m \in \mathbb{N}\right)$ such that for a fixed positive integer $n$, 
	\begin{equation}\label{ineq:lem9_conv}
		\left\|\mathcal{M}_m\left(\psi^{\otimes m} \right) - \phi^{\otimes \left\lceil rm \right\rceil}\right\|_1 < \frac{\epsilon}{n}
	\end{equation}
holds for an infinitely large subset of $\mathcal{N}$. Therefore, for any $n$ satisfying~\eqref{ineq:lem9_conv}, it holds that 
\begin{equation}
  \left\|\mathcal{M}_m^{\otimes n}\left({\left(\psi^{\otimes m}\right)}^{\otimes n} \right) - {\left(\phi^{\otimes \left\lceil rm \right\rceil}\right)}^{\otimes n}\right\|_1 < \epsilon, 
\end{equation}
which implies that $\mathcal{R}\left(\psi\to\phi\right) \subseteq \mathcal{R}\left(\psi^{\otimes n} \to \phi^{\otimes n}\right)$. 
\end{proof}

Using Proposition~\ref{lem9}, we show Theorem~\ref{prop13}. 
The meaning of this theorem will be discussed after the proof.

\begin{theorem}[Weak Subadditivity of Distillable Resource and Resource Cost]\label{prop13}
  Let $\mathcal{H}$ be an arbitrary finite-dimensional system. Set the normalization constant as $R_{\max}^{(\mathcal{H})} = \log_2\left(\dim\mathcal{H}\right)$ as {stated} in~\eqref{typical_normalized_constant}. For any $n \in \mathbb{N}$ and for any state $\psi \in \mathcal{S}\left(\mathcal{H}\right)$,
  \begin{align}
	\label{ineq:prop9_dist}
    R_\textup{D}\left(\psi^{\otimes n}\right)&\leqq nR_\textup{D}\left(\psi\right),\\
	\label{ineq:prop9_cost}
    R_\textup{C}\left(\psi^{\otimes n}\right)&\leqq nR_\textup{C}\left(\psi\right).
  \end{align}
\end{theorem}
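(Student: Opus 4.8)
The plan is to reduce both inequalities to Proposition~\ref{lem9} by restricting the infima in the definitions of $R_\textup{D}$ and $R_\textup{C}$ to tensor-power states. The essential observation is that weak subadditivity asserts an \emph{upper} bound, and restricting an infimum to a subset of its feasible region can only enlarge the value—precisely the direction we need.

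First I would record how the normalization constant scales. Since $\dim\left(\mathcal{H}^{\otimes n}\right)=\left(\dim\mathcal{H}\right)^n$, the choice $R_{\max}^{(\mathcal{H})}=\log_2\left(\dim\mathcal{H}\right)$ gives
\begin{equation}
R_{\max}^{(\mathcal{H}^{\otimes n})}=\log_2\left(\dim\mathcal{H}^{\otimes n}\right)=n\log_2\left(\dim\mathcal{H}\right)=nR_{\max}^{(\mathcal{H})}.
\end{equation}

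For the distillable resource, the definition~\eqref{eq:distillable_resource} takes the infimum over all states in $\mathcal{S}\left(\mathcal{H}^{\otimes n}\right)$. Because $\left\{\phi^{\otimes n}:\phi\in\mathcal{S}\left(\mathcal{H}\right)\right\}$ is a subset of $\mathcal{S}\left(\mathcal{H}^{\otimes n}\right)$, restricting the infimum to this subset yields the upper bound
\begin{equation}
R_\textup{D}\left(\psi^{\otimes n}\right)\leqq\inf_{\phi\in\mathcal{S}\left(\mathcal{H}\right)}\left\{r_\textup{conv}\left(\psi^{\otimes n}\to\phi^{\otimes n}\right)R_{\max}^{(\mathcal{H}^{\otimes n})}\right\}.
\end{equation}
Applying Proposition~\ref{lem9} to replace $r_\textup{conv}\left(\psi^{\otimes n}\to\phi^{\otimes n}\right)$ with $r_\textup{conv}\left(\psi\to\phi\right)$, and substituting $R_{\max}^{(\mathcal{H}^{\otimes n})}=nR_{\max}^{(\mathcal{H})}$, the right-hand side equals $n\inf_{\phi}\left\{r_\textup{conv}\left(\psi\to\phi\right)R_{\max}^{(\mathcal{H})}\right\}=nR_\textup{D}\left(\psi\right)$, which gives~\eqref{ineq:prop9_dist}. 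The bound~\eqref{ineq:prop9_cost} follows by the same argument with the conversion rate reversed: restricting to $\Phi=\phi^{\otimes n}$ and using $r_\textup{conv}\left(\phi^{\otimes n}\to\psi^{\otimes n}\right)=r_\textup{conv}\left(\phi\to\psi\right)$ yields $R_\textup{C}\left(\psi^{\otimes n}\right)\leqq n\inf_{\phi}\left\{R_{\max}^{(\mathcal{H})}/r_\textup{conv}\left(\phi\to\psi\right)\right\}=nR_\textup{C}\left(\psi\right)$.

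I do not anticipate a genuine obstacle here: the content is carried entirely by Proposition~\ref{lem9}, and the only care needed is the direction of the infimum inequality together with the tacit fact that $\phi^{\otimes n}\in\mathcal{S}\left(\mathcal{H}^{\otimes n}\right)$ whenever $\phi\in\mathcal{S}\left(\mathcal{H}\right)$, which is already built into the definitions of the conversion rates on tensor-power states. It is worth emphasizing \emph{why} one obtains only subadditivity and not additivity: the infimum defining $R_\textup{D}\left(\psi^{\otimes n}\right)$ ranges over all states of $\mathcal{H}^{\otimes n}$, including non-product ones, so discarding the non-product competitors can only lower the optimum; the reverse inequality would require controlling those non-product targets and need not hold in general.
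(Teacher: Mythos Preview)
Your proposal is correct and follows essentially the same approach as the paper: both reduce to Proposition~\ref{lem9} by restricting the infimum in the definitions of $R_\textup{D}$ and $R_\textup{C}$ to tensor-power targets $\phi^{\otimes n}$, together with the scaling $R_{\max}^{(\mathcal{H}^{\otimes n})}=nR_{\max}^{(\mathcal{H})}$. The only cosmetic difference is that the paper picks a near-optimal $\phi\in\mathcal{G}(\mathcal{H})$ via Theorem~\ref{theorem11} and an $\epsilon$-argument, whereas you work directly with the infimum over $\mathcal{S}(\mathcal{H})$; the content is identical.
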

\begin{proof}
	First, we prove~\eqref{ineq:prop9_dist}. Let $n$ be a fixed positive integer, and let $\epsilon$ be an arbitrary positive number. Due to Theorem~\ref{theorem11}, we can take a maximally resourceful state $\phi \in \mathcal{G}\left(\mathcal{H}\right)$ such that
\begin{equation}
R_\textup{D}\left(\psi\right) + \frac{\epsilon}{n}  \geqq R_{\max}^{(\mathcal{H})} r_\textup{conv}\left(\psi\to\phi\right).
\end{equation} 
Since $\phi^{\otimes n} \in \mathcal{S}\left(\mathcal{H}^{\otimes n}\right)$, 
\begin{align}
	nR_\textup{D}\left(\psi\right) + \epsilon
	&= n R_{\max}^{(\mathcal{H})}r_\textup{conv}\left(\psi\to\phi\right) \\
	\label{eq:thm13_1}
	&= R_{\max}^{(\mathcal{H}^{\otimes n})}r_\textup{conv}\left(\psi^{\otimes n}\to\phi^{\otimes n}\right)\\
	&\geqq R_\textup{D}\left(\psi^{\otimes n}\right)
\end{align}
holds where~\eqref{eq:thm13_1} follows from Proposition~\ref{lem9}. As we can take an arbitrarily small $\epsilon$,~\eqref{ineq:prop9_dist} holds.
We can show~\eqref{ineq:prop9_dist} in a similar way by replacing $R_\textup{D}$ with $R_\textup{C}$ and $r_\textup{conv}\left(\psi\to\phi\right)$ with $1/r_\textup{conv}\left(\phi\to\psi\right)$. 
\end{proof}
In the statement of Theorem~\ref{prop13},~\eqref{ineq:prop9_dist} means that for a maximally resourceful state $\phi \in \mathcal{G}\left(\mathcal{H}\right)$, there may be $\phi_n \in \mathcal{G}\left(\mathcal{H}^{\otimes n}\right)$ that is harder to distill than $\phi^{\otimes n}$. On the other hand,~\eqref{ineq:prop9_cost} means that for a maximally resourceful state $\phi \in \mathcal{G}\left(\mathcal{H}\right)$, there may be a more resourceful state $\phi_n \in \mathcal{G}\left(\mathcal{H}^{\otimes n}\right)$ in resource formation than $\phi^{\otimes n}$ for $n\geqq 2$.

\subsection{Maximally Resourceful State Maximizing Resource Cost}\label{Subsection4_d}
In this section, we prove that the upper bound $R_{\max}^{(\mathcal{H})}$ of the resource cost $R_\textup{C}$ shown in Proposition~\ref{proposition4_a_1} is indeed achievable by a maximally resourceful state if the number of equivalence classes of the maximally resourceful states is finite and if there is no catalytically replicable state.
Note that this property holds even in infinite-dimensional cases; that is, $R_{\max}^{(\mathcal{H})}=\log_2\left(\dim\mathcal{H}\right)$ for finite-dimensional $\mathcal{H}$ is not assumed in this section.

Using Proposition~\ref{prop15}, we prove Theorem~\ref{thm16}. Recall the set of equivalence classes of the maximally resourceful states $\mathcal{G}\left(\mathcal{H}\right)/\sim$ defined in~\eqref{def:equivalence_class}.
Consider a QRT where the number of maximally resourceful states is finite up to the equivalence with regard to the preorder, that is, 
	\begin{equation}
		\left|\mathcal{G}\left(\mathcal{H}\right)/\sim\right| < \infty. 
	\end{equation}
In this case, the following theorem shows that the upper bound $R_{\max}^{(\mathcal{H})}$ of the resource cost given in Proposition~\ref{proposition4_a_1} is actually achievable by a maximally resourceful state.
\begin{theorem}[Maximally Resourceful State that Maximizes Resource Cost]\label{thm16}	
Suppose that there is no catalytically replicable state. 
Suppose further that the set of resource states is not empty; that is, $\mathcal{S}\left(\mathcal{H}\right)\setminus\mathcal{F}\left(\mathcal{H}\right) \neq \emptyset$. 
If $\left|\mathcal{G}\left(\mathcal{H}\right)/\sim\right| < \infty$ where $\mathcal{G}\left(\mathcal{H}\right)/\sim$ is defined in~\eqref{def:equivalence_class}, then there exists a {maximally resourceful} state $\phi \in \mathcal{G}\left(\mathcal{H}\right)$ such that 
\begin{equation}
	R_\textup{C}\left(\phi\right) = R_{\max}^{(\mathcal{H})}
\end{equation}
holds.
\end{theorem}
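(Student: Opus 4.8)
The plan is to reduce the statement to a finite combinatorial argument on the equivalence classes $\mathcal{G}\left(\mathcal{H}\right)/\sim$, using the chain-rule inequality~\eqref{conversion_rate_ineq} together with the absence of catalytic replication. First I would record a preliminary fact: since the set of resource states is nonempty, Proposition~\ref{proposition_free_max} gives $\mathcal{G}\left(\mathcal{H}\right) \cap \mathcal{F}\left(\mathcal{H}\right) = \emptyset$, so every maximally resourceful state is a resource state. Combined with the hypothesis that no catalytically replicable state exists and with Theorem~\ref{thm_duplicability} (which forces $r_\textup{conv}\left(\phi\to\phi\right) \in \{1,\infty\}$), this yields $r_\textup{conv}\left(\phi\to\phi\right) = 1$ for every $\phi \in \mathcal{G}\left(\mathcal{H}\right)$, which is exactly the hypothesis needed to invoke Proposition~\ref{prop15} on maximally resourceful states.

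Next I would show that $r_\textup{conv}$ descends to a well-defined function on pairs of equivalence classes. If $\sigma \sim \sigma'$ and $\phi \sim \phi'$ are maximally resourceful, then Proposition~\ref{prop15} gives $r_\textup{conv}\left(\sigma\to\sigma'\right) = r_\textup{conv}\left(\phi\to\phi'\right) = 1$, and the inequality~\eqref{conversion_rate_ineq}, applied in both directions through these intermediate conversions, forces $r_\textup{conv}\left(\sigma\to\phi\right) = r_\textup{conv}\left(\sigma'\to\phi'\right)$. Writing $R(C,C')$ for this common value, I note $R(C,C) = 1$, and by Theorem~\ref{theorem11},
\[
 R_\textup{C}\left(\phi\right) = \inf_{\sigma \in \mathcal{G}\left(\mathcal{H}\right)} \frac{R_{\max}^{(\mathcal{H})}}{r_\textup{conv}\left(\sigma\to\phi\right)} = \frac{R_{\max}^{(\mathcal{H})}}{\max_{C} R\left(C,C_\phi\right)},
\]
where the supremum over $\sigma$ collapses to a maximum over the finitely many classes because the rate depends only on the class of $\sigma$; the maximum is genuine since $r_\textup{conv}\left(\phi\to\phi\right) = 1 > 0$ supplies a finite term. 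Hence it suffices to exhibit a class $C_\phi$ with $R\left(C,C_\phi\right) \leq 1$ for every class $C$, since then the denominator equals $R\left(C_\phi,C_\phi\right) = 1$.

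The heart of the proof is to produce such a \emph{sink} class by contradiction. If none existed, every class $C$ would be strictly dominated, i.e.\ there would be a class $g(C)$ with $R\left(g(C),C\right) > 1$, and necessarily $g(C) \neq C$ because $R(C,C) = 1$. Iterating $g$ on the finite set $\mathcal{G}\left(\mathcal{H}\right)/\sim$ must eventually revisit a class, producing a cycle $C_{k_1}, C_{k_2}, \ldots, C_{k_p}, C_{k_1}$ along which each consecutive rate exceeds $1$. Applying~\eqref{conversion_rate_ineq} repeatedly around this cycle then gives $r_\textup{conv}\left(\phi\to\phi\right) \geq \prod R\left(\cdots\right) > 1$ for a representative $\phi$ of one of the cycle classes, contradicting $r_\textup{conv}\left(\phi\to\phi\right) = 1$ established in the first step.

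The main obstacle I anticipate is bookkeeping rather than conceptual: keeping the direction of the domination relation $R\left(g(C),C\right)>1$ consistent with the source-to-target orientation of the chain-rule inequality, so that composing the individual super-unit conversions around the cycle genuinely closes up into a single self-conversion of rate $>1$. I would also take care that the reduction of the infimum to $R_{\max}^{(\mathcal{H})}/\max_C R\left(C,C_\phi\right)$ is legitimate, noting in particular that classes $C$ with $R\left(C,C_\phi\right) = 0$ contribute the harmless value $+\infty$ and so do not disturb the infimum, which is therefore governed by the largest attained rate.
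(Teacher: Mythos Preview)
Your proposal is correct and follows essentially the same route as the paper: assume no maximally resourceful state attains $R_{\max}^{(\mathcal{H})}$, use Theorem~\ref{theorem11} to produce for each class a strictly dominating class, extract a cycle by finiteness, and derive $r_\textup{conv}(\phi\to\phi)>1$ via~\eqref{conversion_rate_ineq}, contradicting the absence of catalytic replication (through Proposition~\ref{proposition_free_max} and Theorem~\ref{thm_duplicability}). Your presentation is in fact a bit more explicit than the paper's --- you spell out why $r_\textup{conv}$ descends to $\mathcal{G}(\mathcal{H})/\sim$ and why the infimum in $R_\textup{C}$ becomes a maximum over finitely many classes --- but the underlying argument is the same.
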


\begin{proof}
	Assume that for all states $\phi \in \mathcal{G}\left(\mathcal{H}\right)$, $R_\textup{C}\left(\phi\right) < R_{\max}^{(\mathcal{H})}$. 
	Because of Theorem~\ref{theorem11}, this assumption implies that for all states $\phi \in \mathcal{G}\left(\mathcal{H}\right)$, there exists a maximally resourceful state $\rho \in \mathcal{G}\left(\mathcal{H}\right)$ such that $r_\textup{conv}\left(\rho\to\phi\right) > 1$. 
	According to Proposition~\ref{prop15}, $\rho$ must be in an equivalence class different from $C_\phi$. 
	Write this relation as $\phi \mapsto \rho$; that is, for $\phi \in \mathcal{G}\left(\mathcal{H}\right)/\sim$ and $\rho \in \mathcal{G}\left(\mathcal{H}\right)/\sim$, we write $\phi \mapsto \rho$ if $r_\textup{conv}\left(\rho\to\phi\right) > 1$. Since $\left|\mathcal{G}\left(\mathcal{H}\right)/\sim\right| < \infty$, there must exist a loop of elements in $\mathcal{G}\left(\mathcal{H}\right)/\sim$
	\begin{equation}
		\rho_0 \mapsto \rho_1 \mapsto \cdots \mapsto \rho_n \mapsto \rho_0. 
	\end{equation}
Therefore, Theorem~\ref{thm_duplicability} shows that
	\begin{equation}\label{eq:thm17_1}
		\begin{aligned}
		&r_\textup{conv}\left(\rho_0\to\rho_n\right)\times r_\textup{conv}\left(\rho_n\to\rho_{n-1}\right)\times\nonumber\\
		&\quad\cdots\times r_\textup{conv}\left(\rho_1\to\rho_0\right) \\
		&> 1.
		\end{aligned}
	\end{equation}
On the other hand, note that for any maximally resourceful state $\rho \in \mathcal{G}\left(\mathcal{H}\right)$, it holds that $r_\textup{conv}\left(\rho \to \rho\right) = 1$ because there is no catalytically replicable state and because $\rho \not\in \mathcal{F}\left(\mathcal{H}\right)$ due to Proposition~\ref{proposition_free_max}.
From~\eqref{conversion_rate_ineq}, it follows that
\begin{equation}
  \begin{aligned}
		&r_\textup{conv}\left(\rho_0\to\rho_n\right)\times \cdots\times r_\textup{conv}\left(\rho_1\to\rho_0\right) \\
		&\leqq  r_\textup{conv}\left(\rho_0\to\rho_0\right) \\
		&= 1,
  \end{aligned}
\end{equation}
which contradicts~\eqref{eq:thm17_1}.
Therefore, there exists a maximally resourceful state $\phi \in \mathcal{G}\left(\mathcal{H}\right)$ such that $R_\textup{C}\left(\phi\right) = R_{\max}^{(\mathcal{H})}$.
\end{proof}

\subsection{Condition for Distillable Resource Upper-Bounded by Resource Cost}\label{Subsection4_e}
In this section, we prove that the distillable resource is smaller than or equal to the resource cost if the QRT does not have any catalytically replicable state. 
The claim that the distillable resource is smaller than or equal to the resource cost was proved in the QRT of bipartite entanglement~\cite{Donald2002}. Progressing beyond this previous work, our proof does not make any assumption on the existence of additive measures, and hence is simpler and has more applicability than the existing technique.
\begin{theorem}[Condition for Distillable Resource Upper-Bounded by Resource Cost]
Let $\mathcal{H}$ be a quantum system. Suppose that $\mathcal{S}\left(\mathcal{H}\right)\setminus\mathcal{F}\left(\mathcal{H}\right) \neq \emptyset$. Further suppose that there is no catalytically replicable state; that is, $r_\textup{conv}\left(\phi\to\phi\right) = 1$ for any resource state $\phi \in \mathcal{S}\left(\mathcal{H}\right)\setminus\mathcal{F}\left(\mathcal{H}\right)$. Then, for any state $\psi \in \mathcal{S}\left(\mathcal{H}\right)$, it holds that
	\begin{equation}\label{ineq_theorem_dist_cost}
		R_\textup{D}\left(\psi\right) \leqq R_\textup{C}\left(\psi\right). 
	\end{equation}
\end{theorem}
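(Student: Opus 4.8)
The plan is to express both quantities as infima over the maximally resourceful states $\mathcal{G}\left(\mathcal{H}\right)$ by means of Theorem~\ref{theorem11}, and then, for a \emph{single} common maximally resourceful state $\phi$, to tie together the distillation rate $r_\textup{conv}\left(\psi\to\phi\right)$ and the formation rate $r_\textup{conv}\left(\phi\to\psi\right)$ through the submultiplicativity of conversion rates in~\eqref{conversion_rate_ineq}. The guiding idea is that the round trip $\phi\to\psi\to\phi$ cannot outperform the trivial self-conversion $\phi\to\phi$; since, under the stated hypotheses, that self-conversion rate equals exactly $1$, each forming rate is reciprocally bounded by the corresponding distilling rate.

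First I would record the self-conversion identity $r_\textup{conv}\left(\phi\to\phi\right)=1$ for every $\phi\in\mathcal{G}\left(\mathcal{H}\right)$. Because $\mathcal{S}\left(\mathcal{H}\right)\setminus\mathcal{F}\left(\mathcal{H}\right)\neq\emptyset$, Proposition~\ref{proposition_free_max} gives $\mathcal{G}\left(\mathcal{H}\right)\cap\mathcal{F}\left(\mathcal{H}\right)=\emptyset$, so any maximally resourceful $\phi$ is a resource state. By hypothesis there is no catalytically replicable state, hence $r_\textup{conv}\left(\phi\to\phi\right)\neq\infty$; combined with Theorem~\ref{thm_duplicability}, which forces $r_\textup{conv}\left(\phi\to\phi\right)\in\left\{1,\infty\right\}$, this yields $r_\textup{conv}\left(\phi\to\phi\right)=1$.

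Next, for a fixed $\phi\in\mathcal{G}\left(\mathcal{H}\right)$ I would apply~\eqref{conversion_rate_ineq} to the chain $\phi\to\psi\to\phi$, obtaining
\begin{equation}
1=r_\textup{conv}\left(\phi\to\phi\right)\geqq r_\textup{conv}\left(\phi\to\psi\right)\,r_\textup{conv}\left(\psi\to\phi\right).
\end{equation}
When $r_\textup{conv}\left(\phi\to\psi\right)>0$ this rearranges to $r_\textup{conv}\left(\psi\to\phi\right)\leqq 1/r_\textup{conv}\left(\phi\to\psi\right)$, so multiplying by $R_{\max}^{(\mathcal{H})}$ gives $r_\textup{conv}\left(\psi\to\phi\right)R_{\max}^{(\mathcal{H})}\leqq R_{\max}^{(\mathcal{H})}/r_\textup{conv}\left(\phi\to\psi\right)$; when $r_\textup{conv}\left(\phi\to\psi\right)=0$ the right-hand side is $+\infty$ and the bound is trivial. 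Since $\mathcal{G}\left(\mathcal{H}\right)\subseteq\mathcal{S}\left(\mathcal{H}\right)$, the definition~\eqref{eq:distillable_resource} makes the left-hand side an upper bound for $R_\textup{D}\left(\psi\right)$, so for every $\phi\in\mathcal{G}\left(\mathcal{H}\right)$ I get $R_\textup{D}\left(\psi\right)\leqq R_{\max}^{(\mathcal{H})}/r_\textup{conv}\left(\phi\to\psi\right)$. Taking the infimum over $\phi\in\mathcal{G}\left(\mathcal{H}\right)$ and invoking the characterization~\eqref{eq:thm6_assert1} of $R_\textup{C}$ from Theorem~\ref{theorem11} then closes the argument.

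The main obstacle is not the algebra but ensuring that the self-conversion identity $r_\textup{conv}\left(\phi\to\phi\right)=1$ genuinely applies to the states appearing in the formation infimum; this is precisely where the hypotheses $\mathcal{S}\left(\mathcal{H}\right)\setminus\mathcal{F}\left(\mathcal{H}\right)\neq\emptyset$ and the absence of catalytically replicable states become indispensable. If a maximally resourceful state were catalytically replicable, then $r_\textup{conv}\left(\phi\to\phi\right)=\infty$ and the chain-rule bound would degenerate---consistent with Example~\ref{ex5}, where $R_\textup{D}=\infty>0=R_\textup{C}$. The only remaining care needed is the bookkeeping of the degenerate rate values $0$ and $+\infty$ in the quotients, which I would dispose of by noting the bound is automatic whenever the denominator vanishes.
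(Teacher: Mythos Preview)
Your proof is correct and follows essentially the same approach as the paper: both rely on the submultiplicativity inequality~\eqref{conversion_rate_ineq} together with a self-conversion rate equal to~$1$, then pass to the infimum via Theorem~\ref{theorem11}. The only cosmetic difference is that you apply the chain rule to the loop $\phi\to\psi\to\phi$ and invoke $r_\textup{conv}(\phi\to\phi)=1$ for the maximally resourceful $\phi$, whereas the paper uses the loop $\psi\to\phi\to\psi$ and $r_\textup{conv}(\psi\to\psi)=1$ for the resource state $\psi$; your choice has the mild advantage of handling free $\psi$ uniformly rather than as a separate case.
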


\begin{proof}
	As shown in Proposition~\ref{proposition4_a_1}, for a free state $\psi \in \mathcal{F}\left(\mathcal{H}\right)$, it holds that $R_\textup{D}\left(\psi\right)  = R_\textup{C}\left(\psi\right) = 0$. Therefore,~\eqref{ineq_theorem_dist_cost} trivially holds for a free state. Then, let $\psi \in \mathcal{S}\left(\mathcal{H}\right)\setminus\mathcal{F}\left(\mathcal{H}\right)$ be an arbitrary resource state. 
Let $\epsilon > 0$ be an arbitrary positive number. Due to Theorem~\ref{theorem11}, we take a maximally resourceful state $\phi \in \mathcal{G}\left(\mathcal{H}\right)$ such that
	\begin{align}
		R_\textup{C}\left(\psi\right) + \epsilon \geqq \frac{R_{\max}^{(\mathcal{H})}}{r_\textup{conv}\left(\phi\to\psi\right)}.
	\end{align}
By the definition~\eqref{eq:distillable_resource} of the distillable resource, it holds that
	\begin{equation}
		R_\textup{D}\left(\psi\right) \leqq r_\textup{conv}\left(\psi\to\phi\right)R_{\max}^{(\mathcal{H})}.
	\end{equation}
As $r_\textup{conv}\left(\psi\to\phi\right)r_\textup{conv}\left(\phi\to\psi\right) \leqq r_\textup{conv}\left(\psi\to\psi\right) = 1$ shown in~\eqref{conversion_rate_ineq}, it holds that
	\begin{align}
		R_\textup{D}\left(\psi\right)
		&\leqq r_\textup{conv}\left(\psi\to\phi\right)R_{\max}^{(\mathcal{H})}\\
		&\leqq \frac{R_{\max}^{(\mathcal{H})}}{r_\textup{conv}\left(\phi\to\psi\right)}\\
		&\leqq R_\textup{C}\left(\psi\right)+\epsilon. 
	\end{align}
As we can take an arbitrarily small $\epsilon$, it holds that $R_\textup{D}\left(\psi\right) \leqq R_\textup{C}\left(\psi\right)$.
\end{proof}

\section{Resource Measures}\label{Section5}
In this section, we investigate a formulation of resource measures in general QRTs and clarify general properties of the resource measures. The resource measures quantify the amount of quantum resources, which is a central interest in QRTs~\cite{Chitambar2018}. In Sec.~\ref{Subsection5_a} we provide the definition of a resource measure. In Sec.~\ref{Subsection5_b}, progressing beyond the existing result on bipartite entanglement~\cite{Donald2002}, we show in the general setting that a resource measure is upper-bounded by the resource cost and lower-bounded by the distillable resource if it satisfies the same properties as those given in Ref.~\cite{Donald2002}. At the same time, we show that the QRT of magic for qutrits~\cite{Wang2018}, which has several non-equivalent maximally resourceful states, has no resource measure satisfying the {properties}. To overcome this problem in the axiomatic approach based on Ref.~\cite{Donald2002}, we here introduce a concept of \textit{consistency} of a resource measure in Sec.~\ref{Subsection5_c}. In contrast with the previous approach, a consistent resource measure exists in the case where multiple non-equivalent maximally resourceful states exist. Furthermore, we prove a similar uniqueness inequality to the previous approach; that is, the consistent resource measure is bounded by the distillable resource and the resource cost if it is normalized. In Sec.~\ref{Subsection5_d}, we provide the definition of the relative entropy of resource, and show that the regularized relative entropy of resource serves as a consistent resource measure.

\subsection{{Properties of} Resource Measures}\label{Subsection5_a}
In this section, we provide a definition of resource measure as discussed in previous works on studying resource measures in a wide class of QRTs, such as Refs.~\cite{Plenio2005,Brandao2015,Bromley2018,Gour2009,Liu2017,Regula2017}, and recall {conventionally studied properties of} a resource measure some of which are also discussed in {Refs.~\cite{Chitambar2018,Donald2002}}. 
A resource measure $R$ quantifies the amount of the resource of a state. It takes a state as an input and outputs a real number that represents the amount of the resource. 
To quantify the resource without contradicting the fact that free operations cannot generate resources by themselves, a resource measure must satisfy a property called \textit{monotonicity}; \textit{i.e.}, the amount of the resource quantified by a resource measure does not increase through application of free operations.
Formally, monotonicity is defined as follows. For quantum systems $\mathcal{H}^{(\mathrm{in})}$ and $\mathcal{H}^{(\mathrm{out})}$, any state $\psi \in \mathcal{S}\left(\mathcal{H}^{(\mathrm{in})}\right)$, and any free operation $\mathcal{N} \in \mathcal{O}\left(\mathcal{H}^{(\mathrm{in})} \to \mathcal{H}^{(\mathrm{out})}\right)$, it holds that
	\begin{equation}
	R_{\mathcal{H}^{(\mathrm{in})}}(\psi) \geqq R_{\mathcal{H}^{(\mathrm{out})}}(\mathcal{N}(\psi)). 
	\end{equation}
Here, we recall the definition of a resource measure. 
\begin{definition}[Resource Measure]
	A resource measure $R_\mathcal{H}$ is a family of real functions from $\mathcal{S}\left(\mathcal{H}\right)$ for a quantum system $\mathcal{H}$ to $\mathbb{R}$ satisfying monotonicity. We may omit the subscript of $R_\mathcal{H}$ to write $R$ for brevity.
\end{definition}
By monotonicity, a resource measure $R_\mathcal{H}$ for a quantum system $\mathcal{H}$ quantifies the resource consistently with the preorder introduced by free operations. 
For two states satisfying
	\begin{equation}
		\phi \succeq \psi,
	\end{equation}
it holds that
	\begin{equation}
		R_\mathcal{H}\left(\phi\right) \geqq R_\mathcal{H}\left(\psi\right).
	\end{equation}
Note that if two states satisfy
	\begin{equation}
		\phi \sim \psi
	\end{equation}
then we have
	\begin{equation}
		R_\mathcal{H}\left(\phi\right) = R_\mathcal{H}\left(\psi\right).
	\end{equation}
Furthermore, using a resource measure, we can evaluate the resource amounts of two different states that cannot be compared in terms of the preorder introduced by free operations. 

For example, in the QRT of bipartite entanglement, the distillable entanglement and the entanglement cost are known to be entanglement measures~\cite{Bennett1996,Hayden2001}. Generalizing this fact, we find a condition where the distillable resource and the resource cost become resource measures as shown in the following proposition. 
\begin{proposition}
Suppose that for any quantum system $\mathcal{H}$, there exists a maximally resourceful state $\phi \in \mathcal{G}(\mathcal{H})$ such that for any quantum system $\mathcal{H}'$ and for any maximally resourceful state $\phi'\in\mathcal{G}(\mathcal{H}')$, it holds that
	\begin{equation}\label{measure_eq1}
		r_\textup{conv}(\phi \to \phi') \geqq \frac{R_{\max}^{(\mathcal{H})}}{R_{\max}^{(\mathcal{H}')}}.
	\end{equation}
Then, the distillable resource $R_\textup{D}$ and the resource cost $R_\textup{C}$ satisfy monotonicity. 
\end{proposition}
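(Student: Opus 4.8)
The plan is to exploit the hypothesis \eqref{measure_eq1} to produce, in every system, a single ``golden'' maximally resourceful state that simultaneously attains both infima defining $R_\textup{D}$ and $R_\textup{C}$, and then to route the conversion-rate inequality \eqref{conversion_rate_ineq} through these golden states. First I would record two elementary facts used repeatedly. If $\mathcal{N}\in\mathcal{O}(\mathcal{H}\to\mathcal{K})$ is free with $\mathcal{N}(\psi)=\psi'$, then $r_\textup{conv}(\psi\to\psi')\geqq 1$: by the tensor-product axiom $\mathcal{N}^{\otimes n}$ is free and maps $\psi^{\otimes n}$ exactly to $\psi'^{\otimes n}$, so $1\in\mathcal{R}(\psi\to\psi')$. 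And specializing \eqref{measure_eq1} to $\mathcal{H}'=\mathcal{H}$ yields a state $\phi_0\in\mathcal{G}(\mathcal{H})$ with $r_\textup{conv}(\phi_0\to\phi)\geqq 1$ for every $\phi\in\mathcal{G}(\mathcal{H})$.

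The key preliminary step is to identify $\phi_0$ as the optimizer of both definitions. Using Theorem~\ref{theorem11} I take both infima over $\mathcal{G}(\mathcal{H})$. For distillation, the chain $\psi\to\phi_0\to\phi$ with $r_\textup{conv}(\phi_0\to\phi)\geqq 1$ gives $r_\textup{conv}(\psi\to\phi)\geqq r_\textup{conv}(\psi\to\phi_0)$ for all $\phi\in\mathcal{G}(\mathcal{H})$, so by \eqref{eq:distillable_resource}, $R_\textup{D}(\psi)=r_\textup{conv}(\psi\to\phi_0)\,R_{\max}^{(\mathcal{H})}$; symmetrically, the chain $\phi_0\to\phi\to\psi$ gives $r_\textup{conv}(\phi_0\to\psi)\geqq r_\textup{conv}(\phi\to\psi)$, so by \eqref{eq:resource_cost}, $R_\textup{C}(\psi)=R_{\max}^{(\mathcal{H})}/r_\textup{conv}(\phi_0\to\psi)$. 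Thus each measure collapses to a single conversion rate against the golden state.

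With this in hand I would prove monotonicity. Fix a free $\mathcal{N}\in\mathcal{O}(\mathcal{H}\to\mathcal{K})$, set $\psi'=\mathcal{N}(\psi)$, and let $\phi_0\in\mathcal{G}(\mathcal{H})$ and $\chi_0\in\mathcal{G}(\mathcal{K})$ be the golden states. For $R_\textup{D}$, apply \eqref{conversion_rate_ineq} to $\psi\to\chi_0\to\phi_0$ and use \eqref{measure_eq1} on $\chi_0\to\phi_0$ to obtain $r_\textup{conv}(\psi\to\phi_0)\,R_{\max}^{(\mathcal{H})}\geqq r_\textup{conv}(\psi\to\chi_0)\,R_{\max}^{(\mathcal{K})}$; then $\psi\to\psi'\to\chi_0$ together with $r_\textup{conv}(\psi\to\psi')\geqq 1$ gives $r_\textup{conv}(\psi\to\chi_0)\geqq r_\textup{conv}(\psi'\to\chi_0)$, and the collapsed formulas yield $R_\textup{D}(\psi)\geqq R_\textup{D}(\psi')$. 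For $R_\textup{C}$ the dual routing works: $\chi_0\to\phi_0\to\psi$ with \eqref{measure_eq1} gives $r_\textup{conv}(\chi_0\to\psi)\geqq (R_{\max}^{(\mathcal{K})}/R_{\max}^{(\mathcal{H})})\,r_\textup{conv}(\phi_0\to\psi)$, while $\chi_0\to\psi\to\psi'$ gives $r_\textup{conv}(\chi_0\to\psi')\geqq r_\textup{conv}(\chi_0\to\psi)$; substituting into $R_\textup{C}(\psi')=R_{\max}^{(\mathcal{K})}/r_\textup{conv}(\chi_0\to\psi')$ produces $R_\textup{C}(\psi')\leqq R_{\max}^{(\mathcal{H})}/r_\textup{conv}(\phi_0\to\psi)=R_\textup{C}(\psi)$.

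The main obstacle I anticipate is not the chaining but the bookkeeping around degenerate values: conversion rates may equal $0$ or $+\infty$, so the reciprocals in $R_\textup{C}$ and the products in \eqref{conversion_rate_ineq} must be read with the conventions $1/0=\infty$ and $1/\infty=0$, and the optimizer argument must be checked not to divide by zero silently (for instance, when $r_\textup{conv}(\phi_0\to\psi)=0$, forcing $R_\textup{C}(\psi)=\infty$, the claimed inequality still holds trivially). A secondary point to justify carefully is that $\phi_0$ genuinely \emph{attains} each infimum rather than merely bounding it, which is exactly what the specialization $\mathcal{H}'=\mathcal{H}$ of the hypothesis secures.
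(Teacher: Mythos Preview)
Your proof is correct and relies on the same core mechanism as the paper's: chaining the conversion-rate inequality~\eqref{conversion_rate_ineq} through the distinguished maximally resourceful state guaranteed by the hypothesis~\eqref{measure_eq1}, together with $r_\textup{conv}(\psi\to\mathcal{N}(\psi))\geqq 1$. The organization differs slightly: you first prove the extra fact that the golden state $\phi_0$ actually \emph{attains} both infima (so $R_\textup{D}$ and $R_\textup{C}$ collapse to single rates), whereas the paper works directly with a $\delta$-approximate optimizer $\phi_1\in\mathcal{G}(\mathcal{H}_1)$ and only invokes the golden state in the output system $\mathcal{H}_2$; your version avoids the $\delta$ bookkeeping at the price of that preliminary lemma, but the substance is the same.
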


\begin{proof}
Choose any $\delta > 0$. Let $\mathcal{H}_1$ and $\mathcal{H}_2$ be quantum systems. Let $\psi \in \mathcal{S}(\mathcal{H}_1)$ be a quantum state and $\mathcal{N} \in\mathcal{O}(\mathcal{H}_1 \to \mathcal{H}_2)$ be a free operation. Suppose that $\phi_2 \in \mathcal{G}(\mathcal{H}_2)$ is a maximally resourceful state satisfying the assumption given in \eqref{measure_eq1}. 

Now, suppose that $\phi_1 \in \mathcal{G}(\mathcal{H}_1)$ is a maximally resourceful state such that
	\begin{equation}\label{distillation_measure_0}
		R_\textup{D}(\psi) \geqq R_{\max}^{(\mathcal{H}_1)}r_\textup{conv}(\psi \to \phi_1) -\delta. 
	\end{equation}
Then, it holds that
	\begin{equation}\label{distillation_measure_1}
		\begin{aligned}
			R_\textup{D}(\mathcal{N}(\psi)) 
			&\leqq R_{\max}^{(\mathcal{H}_2)}r_\textup{conv}(\mathcal{N}(\psi) \to \phi_2)\\
			&\leqq R_{\max}^{(\mathcal{H}_2)}r_\textup{conv}(\psi \to \phi_2)\\
			&\leqq R_{\max}^{(\mathcal{H}_1)}r_\textup{conv}(\psi \to \phi_2)r_\textup{conv}(\phi_2 \to \phi_1)\\
			&\leqq R_{\max}^{(\mathcal{H}_1)}r_\textup{conv}(\psi \to \phi_1)\\
			&\leqq R_\textup{D}(\psi) + \delta. 
		\end{aligned}
	\end{equation}
The first inequality holds by definition. The second inequality follows from Proposition \ref{conversion_order}. The third inequality follows from \eqref{measure_eq1}. Using \eqref{conversion_rate_ineq}, we have the fourth inequality. Since we can take an arbitrarily small $\delta$, we have $R_\textup{D}(\mathcal{N}(\psi)) \leqq  R_\textup{D}(\psi)$. 

Next, suppose that $\phi_1 \in \mathcal{G}(\mathcal{H}_1)$ is a maximally resourceful state such that
	\begin{equation}\label{cost_measure_0}
		R_\textup{C}(\psi) \geqq \frac{R_{\max}^{(\mathcal{H}_1)}}{r_\textup{conv}(\phi_1 \to \psi)} - \delta. 
	\end{equation}
Then, {it holds that}
	\begin{equation}
		\begin{aligned}
			R_\textup{C}(\mathcal{N}(\psi)) 
			&\leqq \frac{R_{\max}^{(\mathcal{H}_2)}}{r_\textup{conv}(\phi_2 \to\mathcal{N}(\psi))}\\
			&\leqq \frac{R_{\max}^{(\mathcal{H}_2)}}{r_\textup{conv}(\phi_2 \to\psi)}\\
			&\leqq \frac{R_{\max}^{(\mathcal{H}_1)}r_\textup{conv}(\phi_2 \to\phi_1)}{r_\textup{conv}(\phi_2 \to\psi)}\\
			&\leqq \frac{R_{\max}^{(\mathcal{H}_1)}}{r_\textup{conv}(\phi_1 \to\psi)}\\
			&\leqq R_\textup{C}(\psi) + \delta. 
		\end{aligned}
	\end{equation}
Since we can take an arbitrarily small $\delta$, we have $R_\textup{C}(\mathcal{N}(\psi)) \leqq  R_\textup{C}(\psi)$. 
\end{proof}
Now, we recall several {properties of a resource measure that are conventionally studied}. 

\textit{Additivity}:
Strong superadditivity refers to
\begin{equation}
  R\left(\psi^{AB}\right)\geqq R\left(\psi^A\right)+R\left(\psi^B\right),
\end{equation}
and {superadditivity} refers to
\begin{equation}
  R\left(\psi\otimes\phi\right)\geqq R\left(\psi\right)+R\left(\phi\right).
\end{equation}
{Subadditivity} refers to
\begin{equation}
  R\left(\psi\otimes\phi\right)\leqq R\left(\psi\right)+R\left(\phi\right).
\end{equation}
{Additivity} refers to
\begin{equation}
  R\left(\psi\otimes\phi\right)= R\left(\psi\right)+R\left(\phi\right), 
\end{equation}
while {weak additivity} refers to 
\begin{equation}
  R\left(\psi^{\otimes n}\right)= nR\left(\psi\right). 
\end{equation}
Regularization of $R$ provides a measure that is additive for tensor product of the same states
\begin{equation}
  R^{\infty}\left(\psi\right)\coloneqq\lim_{n\to\infty}\frac{R\left(\psi^{\otimes n}\right)}{n},
\end{equation}
as long as the right-hand side exists.
The following proposition shows that {additivity} of a resource measure implies that free states have zero resource, which is a generalization of the statement shown for entanglement in Ref.~\cite{Donald2002} to general QRTs.

\begin{proposition}\label{prop_additive_free}
If a resource measure $R$ is {weakly additive}, $R\left(\phi\right) = 0$ for any free state $\phi$.
\end{proposition}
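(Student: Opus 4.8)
The plan is to show that a free state is indistinguishable from its own tensor powers as far as $R$ is concerned, and then play this against weak additivity to pin the value to zero. First I would invoke Proposition~\ref{proposition_free_states} to obtain, for the given free state $\phi\in\mathcal{F}(\mathcal{H})$, a free operation $\mathcal{N}_\phi\in\mathcal{O}(\mathbb{C}\to\mathcal{H})$ with $\mathcal{N}_\phi(1)=\phi$. Using this together with the axioms that $\id$ is free and that a tensor product of free operations is free, the map that appends $n-1$ copies of $\phi$, namely $\id_{\mathcal{H}}\otimes\mathcal{N}_\phi^{\otimes(n-1)}$ regarded as a channel from $\mathcal{H}\cong\mathcal{H}\otimes\mathbb{C}^{\otimes(n-1)}$ to $\mathcal{H}^{\otimes n}$, is a free operation sending $\phi$ to $\phi^{\otimes n}$. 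Monotonicity of $R$ then gives $R(\phi)\geqq R(\phi^{\otimes n})$.

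Next I would establish the reverse direction, which holds for any state: since the partial trace is free, discarding $n-1$ of the $n$ subsystems is a free operation taking $\phi^{\otimes n}$ to $\phi$, so monotonicity yields $R(\phi^{\otimes n})\geqq R(\phi)$. Combining the two inequalities shows $R(\phi^{\otimes n})=R(\phi)$ for every positive integer $n$.

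Finally I would feed in weak additivity, $R(\phi^{\otimes n})=nR(\phi)$. Substituting into the two inequalities gives $nR(\phi)\leqq R(\phi)$ and $nR(\phi)\geqq R(\phi)$ valid for all $n$; taking $n=2$ forces both $R(\phi)\leqq 0$ and $R(\phi)\geqq 0$, hence $R(\phi)=0$. I would keep both inequalities on purpose, since the definition of a resource measure assumes only monotonicity and not non-negativity, so neither bound can be dropped.

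I do not expect a serious obstacle here, as no quantitative estimate is involved. The only point requiring care is the explicit verification that appending copies of a free state is itself a free operation; this is exactly where Proposition~\ref{proposition_free_states} and the identity and tensor-product axioms enter, and it is what distinguishes the \emph{append} direction (which needs $\phi$ free) from the \emph{discard} direction (which holds for every state).
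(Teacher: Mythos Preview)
Your proof is correct and follows essentially the same approach as the paper: establish free interconvertibility between $\phi$ and $\phi^{\otimes n}$, use monotonicity to conclude $R(\phi)=R(\phi^{\otimes n})$, and then invoke weak additivity to force $R(\phi)=0$. The only cosmetic difference is that the paper first observes $\phi^{\otimes n}$ is itself free (via $\mathcal{N}_\phi^{\otimes n}(1)=\phi^{\otimes n}$) and then appeals to the definition of free states to get both conversions, whereas you construct the append and discard operations explicitly from the axioms; the substance is identical.
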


\begin{proof}
	Suppose that $\psi$ is a free state. Then, there exists a free operation $\mathcal{M}$ such that $\mathcal{M}\left(1\right) = \psi$. Therefore, for any $n \in \mathbb{N}$, $\mathcal{M}^{\otimes n}\left(1\right) = \psi^{\otimes n}$ holds, which implies $ \psi^{\otimes n}$ is also a free state. Then, there exist free operations $\mathcal{N}_1$ and $\mathcal{N}_2$ such that
\begin{equation}
	\begin{aligned}
		&\mathcal{N}_1\left(\psi^{\otimes n}\right) = \psi, \\
		&\mathcal{N}_2\left(\psi \right) = \psi^{\otimes n}
	\end{aligned}
\end{equation}
hold. Therefore, it holds that $R\left(\psi\right) = R\left(\psi^{\otimes n}\right)$. 
Since $R$ is {weakly additive}, $R\left(\psi\right) = n R\left(\psi\right)$ for any $n$, which implies $R\left(\psi\right) = 0$. 
\end{proof}

One conventional way of normalizing resource measures such as that in the entanglement theory is as follows, which we call \textit{conventional normalization}:
\begin{itemize}
\item For any free state $\sigma$, 
	\begin{equation}\label{conventional_normalization_zero}
  R\left(\sigma\right)=0.
\end{equation}
\item For any {maximally resourceful state} $\phi\in\mathcal{G}\left(\mathcal{H}\right)$, 
\begin{equation}\label{conventional_normalization_max}
  R\left(\phi\right)=R_{\max}^{(\mathcal{H})}.
\end{equation}
\end{itemize}
Because of monotonicity of a resource measure, a resource measure takes the least value for free states. From Proposition~\ref{prop_additive_free}, the least value is automatically set to zero for an additive measure. We can assume this normalization also for non-additive measures. Furthermore, we can set the greatest value of a resource measure to $R_{\max}^{(\mathcal{H})}$ in the same way that we normalize the distillable resource~\eqref{eq:distillable_resource} and the resource cost~\eqref{eq:resource_cost} with the {normalization constant} $R_{\max}^{(\mathcal{H})}$.  
In a finite-dimensional case with $R_{\max}^{(\mathcal{H})} = \log_2\left(\dim\mathcal{H}\right)$ shown in~\eqref{typical_normalized_constant},~\eqref{conventional_normalization_max} provides the normalization generalizing that of entanglement measures in the entanglement theory, but our definition is applicable to infinite-dimensional cases. 
We here remark that general QRTs do not necessarily have resource measures satisfying this conventional normalization, as we will prove in the next subsection.

\textit{Asymptotic continuity}:
For any quantum system $\mathcal{H}$, $R_{\mathcal{H}}$ is \textit{asymptotically continuous} if for any sequence of positive integers ${\left(n_i\right)}_{i\in\mathbb{N}}$, and any sequences of states ${\left(\phi_{n_i} \in \mathcal{S}\left(\mathcal{H}^{\otimes n_i}\right)\right)}_i$ and ${\left(\psi_{n_i} \in \mathcal{S}\left(\mathcal{H}^{\otimes n_i}\right)\right)}_i$ satisfying $\lim_{i \to \infty} \left\|\phi_{n_i}-\psi_{n_i}\right\|_1 =0$, it holds that 
\begin{equation}\label{asymptotic_continuity}
  \lim_{i\to\infty}\frac{\left|R_{\mathcal{H}^{\otimes n_i}}\left(\phi_{n_i}\right) - R_{\mathcal{H}^{\otimes n_i}}\left(\psi_{n_i}\right)\right|}{n_i} = 0.
\end{equation}
Our definition of asymptotic continuity is applicable to an infinite-dimensional system. If we take $R_{\max}^{(\mathcal{H})} = \log_2\left(\dim\mathcal{H}\right)$ as {stated} in~\eqref{typical_normalized_constant} for a finite-dimensional system $\mathcal{H}$, our definition~\eqref{asymptotic_continuity} corresponds to Condition (E3) in Ref.~\cite{Donald2002}. 
Note that our definition includes asymptotic continuity discussed in~\cite{Synak2006} (Definition 1) as a tighter bound applicable to a finite-dimensional system.  

\begin{remark}[Continuity and Asymptotic Continuity]
Since a resource measure is a family of functions each of which may be defined for different quantum systems, we employ asymptotic continuity of a family of functions as {a desired property} {of} resource measures rather than {continuity} of a single function $R_\mathcal{H}$ for a fixed quantum system $\mathcal{H}$ defined as follows. A function $R_\mathcal{H}: \mathcal{S}\left(\mathcal{H}\right) \to \mathbb{R}$ is \textit{continuous} if for any sequences of states ${\left(\phi_{n} \in \mathcal{S}\left(\mathcal{H}\right)\right)}_n$ and ${\left(\psi_{n} \in \mathcal{S}\left(\mathcal{H}\right)\right)}_n$ satisfying $\lim_{n \to \infty} \left\|\phi_{n}-\psi_{n}\right\|_1 =0$, it holds that 
	\begin{equation}\label{continuity}
		\lim_{n\to\infty} \left|R_\mathcal{H}\left(\phi_n\right) - R_\mathcal{H}\left(\psi_n\right)\right| = 0. 
	\end{equation}
Our definition of asymptotic continuity implies continuity as a special case.
\end{remark}

\subsection{Generalization of Uniqueness Inequality}\label{Subsection5_b}
In this section, we show that we have the inequality $R_\textup{D} \leqq R \leqq R_\textup{C}$ for a resource measure $R$ if $R$ satisfies conventional normalization, asymptotic continuity, and {weak additivity}. We call this inequality the uniqueness inequality. 
The uniqueness inequality is originally proved in the QRT of bipartite entanglement in finite-dimensional cases~\cite{Donald2002}. We show that the proof of this uniqueness inequality can be generalized to all the QRTs in our framework that covers infinite-dimensional cases.
At the same time, we also show a QRT in which no resource measure satisfies these {properties}; that is, the set of states satisfying the uniqueness inequality becomes empty. 

We prove the following uniqueness inequality for general QRTs in our framework.

\begin{proposition}[Uniqueness Inequality]\label{prop24}
  Let $\mathcal{H}$ be a quantum system. Suppose that there is no catalytically replicable state; that is, $r_\textup{conv}\left(\phi\to\phi\right) = 1$ for any resource state $\phi \in \mathcal{S}\left(\mathcal{H}\right)\setminus\mathcal{F}\left(\mathcal{H}\right)$. If a resource measure $R_\mathcal{H}$ satisfies the conventional normalization, asymptotic continuity, and {weak additivity}, 
then for any state $\psi \in \mathcal{S}\left(\mathcal{H}\right)$, $R_\mathcal{H}$ satisfies
  \begin{equation}\label{eq:prop14_claim}
         R_\textup{D}\left(\psi\right)\leqq R_{\mathcal{H}}\left(\psi\right)\leqq R_\textup{C}\left(\psi\right).
  \end{equation}
\end{proposition}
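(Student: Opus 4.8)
The plan is to prove the two halves of~\eqref{eq:prop14_claim} separately, adapting the Donald--Horodecki--Rudolph argument for entanglement~\cite{Donald2002} to the present general setting with $\liminf$-defined rates and an infimum over $\mathcal{G}\left(\mathcal{H}\right)$. First I dispose of the trivial case: if $\psi\in\mathcal{F}\left(\mathcal{H}\right)$ then $R_\textup{D}\left(\psi\right)=R_\textup{C}\left(\psi\right)=0$ by Proposition~\ref{proposition4_a_1}, while weak additivity forces $R_\mathcal{H}\left(\psi\right)=0$ by Proposition~\ref{prop_additive_free}, so~\eqref{eq:prop14_claim} holds with equalities. Hence I may assume $\psi$ is a resource state, for which the no-catalytic-replication hypothesis together with Theorem~\ref{thm_duplicability} gives $r_\textup{conv}\left(\psi\to\psi\right)=1$.

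For the upper bound $R_\mathcal{H}\left(\psi\right)\leqq R_\textup{C}\left(\psi\right)$, I fix an arbitrary $\phi\in\mathcal{G}\left(\mathcal{H}\right)$ and an achievable formation rate $r'\in\mathcal{R}'\left(\phi\to\psi\right)$, so by~\eqref{eq:another_conversion_rate_set} there are free operations $\mathcal{N}_n$ with $\left\|\mathcal{N}_n\left(\phi^{\otimes\lfloor r'n\rfloor}\right)-\psi^{\otimes n}\right\|_1\to 0$ along an infinite subset of $\mathbb{N}$. On this subsequence I combine three ingredients: weak additivity gives $R_\mathcal{H}\left(\psi^{\otimes n}\right)=nR_\mathcal{H}\left(\psi\right)$; monotonicity with conventional normalization~\eqref{conventional_normalization_max} gives $R_\mathcal{H}\left(\mathcal{N}_n\left(\phi^{\otimes\lfloor r'n\rfloor}\right)\right)\leqq R_\mathcal{H}\left(\phi^{\otimes\lfloor r'n\rfloor}\right)=\lfloor r'n\rfloor R_{\max}^{(\mathcal{H})}$; and asymptotic continuity~\eqref{asymptotic_continuity}, applied to the two sequences $\mathcal{N}_n\left(\phi^{\otimes\lfloor r'n\rfloor}\right)$ and $\psi^{\otimes n}$ on $\mathcal{H}^{\otimes n}$, gives $\left|R_\mathcal{H}\left(\psi^{\otimes n}\right)-R_\mathcal{H}\left(\mathcal{N}_n\left(\phi^{\otimes\lfloor r'n\rfloor}\right)\right)\right|=o(n)$. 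Dividing by $n$ and letting $n\to\infty$ yields $R_\mathcal{H}\left(\psi\right)\leqq r'R_{\max}^{(\mathcal{H})}$; sending $r'$ down to $r'_\textup{conv}\left(\phi\to\psi\right)=1/r_\textup{conv}\left(\phi\to\psi\right)$ (Theorem~\ref{thm:conversion_rates}) and then taking the infimum over $\phi\in\mathcal{G}\left(\mathcal{H}\right)$ through Theorem~\ref{theorem11} gives $R_\mathcal{H}\left(\psi\right)\leqq R_\textup{C}\left(\psi\right)$.

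For the lower bound $R_\textup{D}\left(\psi\right)\leqq R_\mathcal{H}\left(\psi\right)$, I run the symmetric distillation argument: fix $\phi\in\mathcal{G}\left(\mathcal{H}\right)$ and an achievable rate $r\in\mathcal{R}\left(\psi\to\phi\right)$, with free operations $\mathcal{M}_n$ satisfying $\left\|\mathcal{M}_n\left(\psi^{\otimes n}\right)-\phi^{\otimes\lceil rn\rceil}\right\|_1\to 0$ along a subsequence. Here monotonicity gives $R_\mathcal{H}\left(\mathcal{M}_n\left(\psi^{\otimes n}\right)\right)\leqq R_\mathcal{H}\left(\psi^{\otimes n}\right)=nR_\mathcal{H}\left(\psi\right)$, conventional normalization gives $R_\mathcal{H}\left(\phi^{\otimes\lceil rn\rceil}\right)=\lceil rn\rceil R_{\max}^{(\mathcal{H})}$, and asymptotic continuity applied on $\mathcal{H}^{\otimes\lceil rn\rceil}$ bounds the difference by $o\left(\lceil rn\rceil\right)=o(n)$. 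Dividing by $n$ and passing to the limit along the subsequence gives $rR_{\max}^{(\mathcal{H})}\leqq R_\mathcal{H}\left(\psi\right)$; taking the supremum over achievable $r$ yields $r_\textup{conv}\left(\psi\to\phi\right)R_{\max}^{(\mathcal{H})}\leqq R_\mathcal{H}\left(\psi\right)$ for every $\phi\in\mathcal{G}\left(\mathcal{H}\right)$, and hence $R_\textup{D}\left(\psi\right)\leqq R_\mathcal{H}\left(\psi\right)$ after taking the infimum via Theorem~\ref{theorem11}.

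The main obstacle is keeping everything finite and applying the asymptotic-continuity estimate on the correct exponent sequence. The no-catalytic-replication hypothesis is exactly what guarantees finiteness: if some $r_\textup{conv}\left(\psi\to\phi\right)$ were $+\infty$, the distillation bound would force $R_\mathcal{H}\left(\psi\right)=+\infty$, contradicting that $R_\mathcal{H}$ is real-valued, and indeed $r_\textup{conv}\left(\psi\to\psi\right)=1$ together with~\eqref{conversion_rate_ineq} precludes this. The remaining care is bookkeeping: because the rates are defined through a $\liminf$, all limits must be taken along the infinite subset of $\mathbb{N}$ on which the conversion error vanishes, which is precisely the regime covered by the definition~\eqref{asymptotic_continuity} of asymptotic continuity; and the normalization must be tracked with the exponent $\lfloor r'n\rfloor$ (respectively $\lceil rn\rceil$) rather than $n$, dividing by $n$ only at the end, using $\lfloor r'n\rfloor/n\to r'$ and $\lceil rn\rceil/n\to r$ so that the surviving ratios are exactly the achievable rates.
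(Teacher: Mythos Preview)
Your proof is correct and follows essentially the same Donald--Horodecki--Rudolph strategy as the paper: combine monotonicity, weak additivity, conventional normalization, and asymptotic continuity along the conversion sequence, divide by $n$, and pass to the limit. The only cosmetic differences are that you handle the free-state case explicitly up front and work with arbitrary achievable rates $r\in\mathcal{R}$, $r'\in\mathcal{R}'$ before taking sup/inf, whereas the paper works directly with $r=r_\textup{conv}$ and $r'=r'_\textup{conv}$; both routes lead to the same chain of inequalities.
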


\begin{proof}
First, we prove $R_\textup{D}\left(\psi\right)\leqq R_{\mathcal{H}}\left(\psi\right)$ for any state $\psi \in \mathcal{S}\left(\mathcal{H}\right)$. 
Let $\delta$ be an arbitrary positive number. 
Due to Theorem~\ref{theorem11}, we take a maximally resourceful state $\phi \in \mathcal{G}\left(\mathcal{H}\right)$ such that 
	\begin{equation}\label{uniqueness_distill_0}
		R_\textup{D}\left(\psi\right) \leqq r_\textup{conv}\left(\psi\to\phi\right)R_{\max}^{(\mathcal{H})}.
	\end{equation}
Let $r\coloneqq r_\textup{conv}\left(\psi\to\phi\right)$. 
For any positive integer $n$, it holds that
	\begin{equation}\label{uniqueness_distill_1}
		rR_{\max}^{(\mathcal{H})} \leqq\frac{\left\lceil rn\right\rceil}{n}R_{\max}^{(\mathcal{H})}. 
	\end{equation}
Then, by {conventional normalization }and {weak additivity}, it holds that
	\begin{equation}\label{uniqueness_distill_2}
		\begin{aligned}
			\frac{\left\lceil rn\right\rceil}{n}R_{\max}^{(\mathcal{H})}
			&= \frac{\left\lceil rn\right\rceil}{n} R_{\mathcal{H}}\left(\phi\right)\\
			&=\frac{R_{\mathcal{H}^{\otimes\left\lceil rn\right\rceil}}\left(\phi^{\otimes\left\lceil rn\right\rceil}\right)}{n}. 
		\end{aligned}
	\end{equation}
By the definition of $r_\textup{conv}\left(\psi\to\phi\right)$ shown in~\eqref{eq:conversion_rate}, there exist free operations $\left(\mathcal{N}_n \in \mathcal{O}\left(\mathcal{H}^{\otimes n} \to \mathcal{H}^{\otimes\left\lceil rn\right\rceil}\right)\right)$ such that for any $\epsilon$, it holds that
	\begin{equation}
		\left\|\mathcal{N}_n\left(\psi^{\otimes n}\right) - \phi^{\otimes\left\lceil rn\right\rceil}\right\|_1 < \epsilon, 
	\end{equation}
for an infinitely large subset of $\mathbb{N}$.
Because of asymptotic continuity, there exists sufficiently large $n$ such that
	\begin{equation}\label{uniqueness_distill_3}
		\frac{R_{\mathcal{H}^{\otimes\left\lceil rn\right\rceil}}\left(\phi^{\otimes\left\lceil rn\right\rceil}\right)}{n}\leqq\frac{R_{\mathcal{H}^{\otimes\left\lceil rn\right\rceil}}\left(\mathcal{N}_n\left(\psi^{\otimes n}\right)\right)}{n} + \delta.
	\end{equation}
By monotonicity and {weak additivity}, it holds that
	\begin{equation}\label{uniqueness_distill_4}
		\begin{aligned}
			\frac{R_{\mathcal{H}^{\otimes\left\lceil rn\right\rceil}}\left(\mathcal{N}_n\left(\psi^{\otimes n}\right)\right)}{n}
			&\leqq \frac{R_{\mathcal{H}^{\otimes n}}\left(\psi^{\otimes n}\right)}{n}\\
          &=\frac{nR_{\mathcal{H}}\left(\psi\right)}{n}\\
          &=R_{\mathcal{H}}\left(\psi\right).
		\end{aligned}
	\end{equation}
Therefore, by~\eqref{uniqueness_distill_0},~\eqref{uniqueness_distill_1},~\eqref{uniqueness_distill_2},~\eqref{uniqueness_distill_3}, and~\eqref{uniqueness_distill_4}, it holds that
	\begin{equation}
		R_\textup{D}\left(\psi\right) \leqq R_{\mathcal{H}}\left(\psi\right) + \delta.
	\end{equation}
Since we can take arbitrarily small $\delta$, it holds that $R_\textup{D}\left(\psi\right)\leqq  R_{\mathcal{H}}\left(\psi\right)$ holds. 

Next, we prove $R_\textup{C}\left(\psi\right)\geqq  R_{\mathcal{H}}\left(\psi\right)$ for any state $\psi \in \mathcal{S}\left(\mathcal{H}\right)$. 
Let $\delta$ be an arbitrary positive number. 
Due to Theorem~\ref{theorem11}, we take a maximally resourceful state $\phi \in \mathcal{G}\left(\mathcal{H}\right)$ such that 
	\begin{equation}\label{uniqueness_cost_0}
		\begin{aligned}
		R_\textup{C}\left(\psi\right) + \delta
		&\geqq \frac{R_{\max}^{(\mathcal{H})}}{r_\textup{conv}\left(\phi\to\psi\right)}\\
		& = r'_\textup{conv}\left(\phi\to\psi\right)R_{\max}^{(\mathcal{H})}.
		\end{aligned}
	\end{equation}
	Let $r\coloneqq r'_\textup{conv}\left(\phi\to\psi\right)$. For any positive integer $n$, it holds that
	\begin{equation}\label{uniqueness_cost_1}
		rR_{\max}^{(\mathcal{H})} \geqq\frac{\left\lfloor rn\right\rfloor}{n}R_{\max}^{(\mathcal{H})}.  
	\end{equation}
By the conventional normalization, {weak additivity}, and monotonicity, it holds that
	\begin{equation}\label{uniqueness_cost_2}
		\begin{aligned}
		\frac{\left\lfloor rn\right\rfloor}{n}R_{\max}^{(\mathcal{H})}
		&= \frac{\left\lfloor rn\right\rfloor}{n}R_{\mathcal{H}}\left(\phi\right)\\
		&= \frac{R_{\mathcal{H}^{\otimes\left\lfloor rn\right\rfloor}}\left(\phi^{\otimes\left\lfloor rn\right\rfloor}\right)}{n}\\
		&\geqq \frac{R_{\mathcal{H}^{\otimes n}}\left(\mathcal{M}_n\left(\phi^{\otimes\left\lfloor rn\right\rfloor}\right)\right)}{n},
		\end{aligned}
	\end{equation}
	for any free operation $\mathcal{M}_n$.
By the definition of $r'_\textup{conv}\left(\phi\to\psi\right)$ shown in~\eqref{eq:conversion_rate_cf}, there exist free operations $\left(\mathcal{M}_n \in \mathcal{O}\left(\mathcal{H}^{\otimes\left\lfloor rn\right\rfloor} \to \mathcal{H}^{\otimes n}\right)\right)$ such that for any $\epsilon$, it holds that
	\begin{equation}
		\left\|\mathcal{M}_n\left(\phi^{\otimes\left\lfloor rn\right\rfloor}\right) - \psi^{\otimes n}\right\|_1 < \epsilon, 
	\end{equation}
for an infinitely large subset of $\mathbb{N}$. 
Because of asymptotic continuity, there exists sufficiently large $n$ such that
	\begin{equation}\label{uniqueness_cost_3}
		\frac{R_{\mathcal{H}^{\otimes n}}\left(\mathcal{M}_n\left(\phi^{\otimes\left\lfloor rn\right\rfloor}\right)\right)}{n} \geqq\frac{R_{\mathcal{H}^{\otimes n}}\left(\psi^{\otimes n}\right)}{n} - \delta.
	\end{equation}
Therefore, by~\eqref{uniqueness_cost_0},~\eqref{uniqueness_cost_1},~\eqref{uniqueness_cost_2},~\eqref{uniqueness_cost_3}, and {weak additivity}, it holds that
	\begin{equation}
		\begin{aligned}
			R_\textup{C}\left(\psi\right) + \delta
			&\geqq \frac{R_{\mathcal{H}^{\otimes n}}\left(\psi^{\otimes n}\right)}{n} - \delta\\
			&= \frac{nR_{\mathcal{H}}\left(\psi\right)}{n} - \delta\\
			&= R_{\mathcal{H}}\left(\psi\right) -\delta.
		\end{aligned}
	\end{equation}
Since we can take arbitrarily small $\delta$, it holds that
	\begin{equation}
		R_\textup{C}\left(\psi\right) \geqq R_{\mathcal{H}}\left(\psi\right).
	\end{equation}
\end{proof}

\begin{remark}\label{remark_UI}
  If we can assume strong superadditivity of the resource measure $R$,
  the uniqueness inequality in Proposition~\ref{prop24} holds for $R$ satisfying the conventional normalization, {weak additivity}, {strong superadditivity}, and {lower semi-continuity} for each quantum system $\mathcal{H}$, instead of assuming asymptotic continuity.
  A function $R_\mathcal{H}: \mathcal{S}\left(\mathcal{H}\right) \to \mathbb{R}$ is \textit{lower semi-continuous} if for any fixed state $\psi\in \mathcal{S}\left(\mathcal{H}\right)$ and any sequence of states ${\left(\psi_{n} \in \mathcal{S}\left(\mathcal{H}\right)\right)}_n$ converging to $\psi$, \textit{i.e.}, $\lim_{n \to \infty} \left\|\psi-\psi_{n}\right\|_1 =0$, it holds that
  \begin{equation}\label{lower_semi_continuity}
    \liminf_{n\to\infty} R_\mathcal{H}\left(\psi_n\right)\geqq R_\mathcal{H}\left(\psi\right).
  \end{equation}
  {Strong} superadditivity and {lower semi-continuity} do not necessarily imply asymptotic continuity, and asymptotic continuity does not necessarily imply {strong superadditivity} and {lower semi-continuity} either; that is, these assumptions are independent.
  We prove the uniqueness inequality for $R$ satisfying the conventional normalization, {weak additivity}, {strong superadditivity}, and {lower semi-continuity} in Appendix~\ref{Appendix_B}.
  The combination of {strong superadditivity} and {lower semi-continuity} is first used in Ref.~\cite{Lami2019} to bound a resource cost in the context of a QRT of coherence, but our contribution is to generalize it to the full uniqueness inequality by completing proofs of both bounds, that is, the bounds for the distillable resource as well as the resource cost.
\end{remark}

Despite the general uniqueness inequality shown in Proposition~\ref{prop24}, we show a condition of QRTs where no resource measure satisfies the conventional normalization, asymptotic continuity, and {weak additivity} simultaneously. In these QRTs, Proposition~\ref{prop24} is not applicable.
Note that because of Proposition~\ref{prop15}, the condition~\eqref{eq:thm24_1} in the following theorem is not satisfied for QRTs with a unique maximally resourceful state, but may hold for QRTs with two or more different equivalence classes of maximally resourceful states; that is, 
	\begin{equation}
		\left|\mathcal{G}\left(\mathcal{H}\right)/\sim\right| \geqq 2. 
	\end{equation}
\begin{theorem}[Inconsistency of {Conventional Properties of Resource Measure}]\label{thm24}
Suppose that there exist maximally resourceful states $\phi_0, \phi_1 \in \mathcal{G}\left(\mathcal{H}\right)$ such that 
	\begin{equation}\label{eq:thm24_1}
		r_\textup{conv}\left(\phi_0 \to \phi_1\right) > 1. 
	\end{equation}
If $R_{\max}^{(\mathcal{H})} > 0$, then there exists no resource measure satisfying the conventional normalization, asymptotic continuity, and {weak additivity} simultaneously. 
\end{theorem}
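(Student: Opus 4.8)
The plan is to derive a contradiction directly from the three assumed properties, without routing through the uniqueness inequality of Proposition~\ref{prop24} (which, unlike the present statement, also assumes the absence of catalytically replicable states). The conventional normalization forces both maximally resourceful states to carry the same strictly positive amount of resource, namely $R_\mathcal{H}\left(\phi_0\right)=R_\mathcal{H}\left(\phi_1\right)=R_{\max}^{(\mathcal{H})}$ by~\eqref{conventional_normalization_max}, whereas the hypothesis $r_\textup{conv}\left(\phi_0\to\phi_1\right)>1$ says that many copies of $\phi_0$ can be converted into strictly more copies of $\phi_1$. Monotonicity together with weak additivity forbids this without a drop in the per-copy value, while asymptotic continuity forbids any such drop in the limit; these two requirements collide.

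Concretely, I would first fix a rate $r>1$ with $r\in\mathcal{R}\left(\phi_0\to\phi_1\right)$ and extract from~\eqref{eq:conversion_rate_set} a sequence of free operations $\mathcal{N}_n\in\mathcal{O}\left(\mathcal{H}^{\otimes n}\to\mathcal{H}^{\otimes\lceil rn\rceil}\right)$ with $\liminf_{n\to\infty}\|\mathcal{N}_n(\phi_0^{\otimes n})-\phi_1^{\otimes\lceil rn\rceil}\|_1=0$; passing to a subsequence, I may assume this trace distance genuinely tends to $0$. Weak additivity then gives $R_{\mathcal{H}^{\otimes\lceil rn\rceil}}(\phi_1^{\otimes\lceil rn\rceil})=\lceil rn\rceil R_{\max}^{(\mathcal{H})}$ and $R_{\mathcal{H}^{\otimes n}}(\phi_0^{\otimes n})=nR_{\max}^{(\mathcal{H})}$, while monotonicity applied to $\mathcal{N}_n$ yields $R_{\mathcal{H}^{\otimes\lceil rn\rceil}}(\mathcal{N}_n(\phi_0^{\otimes n}))\leqq nR_{\max}^{(\mathcal{H})}$. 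Combining these estimates,
\begin{equation}
\frac{R_{\mathcal{H}^{\otimes\lceil rn\rceil}}(\phi_1^{\otimes\lceil rn\rceil})-R_{\mathcal{H}^{\otimes\lceil rn\rceil}}(\mathcal{N}_n(\phi_0^{\otimes n}))}{\lceil rn\rceil}\geqq R_{\max}^{(\mathcal{H})}\left(1-\frac{n}{\lceil rn\rceil}\right),
\end{equation}
whose right-hand side converges to $R_{\max}^{(\mathcal{H})}\left(1-1/r\right)>0$, since $\lceil rn\rceil/n\to r>1$ and $R_{\max}^{(\mathcal{H})}>0$.

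Finally, the states $\mathcal{N}_n(\phi_0^{\otimes n})$ and $\phi_1^{\otimes\lceil rn\rceil}$ both lie in $\mathcal{S}(\mathcal{H}^{\otimes\lceil rn\rceil})$ and are trace-norm close along the chosen subsequence, so the definition~\eqref{asymptotic_continuity} of asymptotic continuity---normalized by the tensor power $\lceil rn\rceil$---forces exactly this quotient to tend to $0$, contradicting the strictly positive limit above. This rules out the simultaneous existence of all three properties and proves the theorem.

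The step I expect to require the most care is the bookkeeping of the normalization in asymptotic continuity: the relevant system is $\mathcal{H}^{\otimes\lceil rn\rceil}$ rather than $\mathcal{H}^{\otimes n}$, so the difference in~\eqref{asymptotic_continuity} must be divided by $\lceil rn\rceil$, and it is precisely the asymptotic gap between $\lceil rn\rceil$ and $n$ (equivalently, the surviving factor $1-1/r$) that produces the contradiction. A secondary subtlety is that~\eqref{eq:conversion_rate_set} only guarantees a $\liminf$ equal to zero, so one must first pass to the subsequence along which the conversion error actually vanishes before invoking asymptotic continuity.
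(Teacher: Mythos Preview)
Your proposal is correct and follows essentially the same approach as the paper's proof: assume all three properties, push $\phi_0^{\otimes n}$ through the near-optimal free operations to get close to $\phi_1^{\otimes\lceil rn\rceil}$, then combine monotonicity, weak additivity, and the normalization $R_\mathcal{H}(\phi_0)=R_\mathcal{H}(\phi_1)=R_{\max}^{(\mathcal{H})}$ to produce a per-copy gap that asymptotic continuity forbids. The only cosmetic differences are that you normalize by $\lceil rn\rceil$ (obtaining the positive limit $R_{\max}^{(\mathcal{H})}(1-1/r)$) whereas the paper normalizes by $n$ (obtaining the equivalent statement $1-\lceil rn\rceil/n\geqq-\epsilon/R_{\max}^{(\mathcal{H})}$), and that you are slightly more careful in choosing an achievable $r>1$ in $\mathcal{R}(\phi_0\to\phi_1)$ and in passing to the subsequence along which the error vanishes.
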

\begin{proof}
The proof is by contradiction. Assume that there exists a resource measure $R$ that satisfies all of the conventional normalization, asymptotic continuity, and {weak additivity}. Let $\phi_0, \phi_1 \in \mathcal{G}\left(\mathcal{H}\right)$ be maximally resourceful states such that  
	\begin{equation}
		r\coloneqq r_\textup{conv}\left(\phi_0 \to \phi_1\right) > 1. 
	\end{equation}
By the definition of $r_\textup{conv}\left(\phi_0\to\phi_1\right)$ shown in~\eqref{eq:conversion_rate}, there exist free operations $\left(\mathcal{N}_n \in \mathcal{O}\left(\mathcal{H}^{\otimes n} \to \mathcal{H}^{\otimes\left\lceil rn\right\rceil}\right)\right)$ such that for any $\epsilon$, it holds that
	\begin{equation}
		\left\|\mathcal{N}_n\left({\phi_0}^{\otimes n}\right) - {\phi_1}^{\otimes\left\lceil rn\right\rceil}\right\|_1 < \epsilon, 
	\end{equation}
for an infinitely large subset of $\mathbb{N}$. 
Then, by asymptotic continuity, for a fixed $\epsilon > 0$, there exists sufficiently large $n$ such that
	\begin{equation}\label{eq:theorem21_1}
		\frac{R_{\mathcal{H}^{\otimes \left\lceil rn \right\rceil}}\left(\mathcal{N}_n\left(\phi_0^{\otimes n}\right)\right)}{n} \geqq \frac{R_{\mathcal{H}^{\otimes \left\lceil rn \right\rceil}}\left(\phi_1^{\otimes \left\lceil rn \right\rceil}\right)}{n} - \epsilon. 
	\end{equation}
By monotonicity, it holds that 
	\begin{equation}
		\frac{R_{\mathcal{H}^{\otimes n}}\left(\phi_0^{\otimes n}\right)}{n} 
		\geqq \frac{R_{\mathcal{H}^{\otimes \left\lceil rn \right\rceil}}\left(\phi_1^{\otimes \left\lceil rn \right\rceil}\right)}{n} - \epsilon.
	\end{equation}	
Then, from {weak additivity}, it follows that
\begin{equation}
	R_\mathcal{H}\left(\phi_0\right) \geqq \frac{\left\lceil rn \right\rceil}{n} R_\mathcal{H}\left(\phi_1\right) - \epsilon. 
\end{equation}
Therefore, due to the conventional normalization, it is necessary that for any $\epsilon$ and $n$ {such that \eqref{eq:theorem21_1} holds}, it holds that
\begin{equation}
	R_{\max}^{(\mathcal{H})} - \frac{\left\lceil rn \right\rceil}{n} R_{\max}^{(\mathcal{H})} \geqq -\epsilon. 
\end{equation}
Since $R_{\max}^{(\mathcal{H})} > 0$, we have
\begin{equation}
	1 - \frac{\left\lceil rn \right\rceil}{n} \geqq - \frac{\epsilon}{R_{\max}^{(\mathcal{H})}}. 
\end{equation}
Since $r>1$, this inequality does not hold for sufficiently small $\epsilon$ or sufficiently large $n$, which implies that there is no such $R$. 
\end{proof}

The following example shows that the QRT of magic has no measure with the conventional normalization, asymptotic continuity, and {weak additivity}.
\begin{example}[QRT without Measure with Conventional Normalization, Asymptotic Continuity, and Weak Additivity]\label{example6}
	The QRT of magic for qutrits~\cite{Veitch2014} shown in Example~\ref{example_QRT_noneq_max} does not have any conventionally normalized, asymptotically continuous and {weakly additive} measure. It has been proved that the asymptotic conversion rate from the Strange state to the Norrell state is larger than 1~\cite{Wang2018}. Therefore, from Theorem~\ref{thm24}, it follows that no measure can satisfy the conventional normalization, asymptotic continuity and {weak additivity} simultaneously in this QRT\@. 
\end{example}

\subsection{Consistency of Resource Measures}\label{Subsection5_c}
In this section, we introduce consistent resource measures in place of resource measures in the previous axiomatic approach in Secs.~\ref{Subsection5_a} and~\ref{Subsection5_b}. As Theorem~\ref{thm24} and Example~\ref{example6} suggest, the conventional normalization, asymptotic continuity and {weak additivity} do not necessarily hold simultaneously with monotonicity in general QRTs. 
On the other hand, a consistent resource measure is compatible with the state conversion rate. We prove that the uniqueness inequality~\eqref{eq:prop14_claim} also holds for a consistent resource measure that is appropriately normalized. 
Note that this normalization respects the state conversion rate, and hence can be different from {conventional normalization }given by~\eqref{conventional_normalization_zero} and~\eqref{conventional_normalization_max}.

First, we introduce a definition of a consistent resource measure. A consistent resource measure quantifies the amount of a resource without contradicting the state conversion rate.
\begin{definition}[Consistent Resource Measure]
	For quantum systems $\mathcal{H}$ and $\mathcal{H}'$, a resource measure $R$ is called a consistent resource measure if for any states $\psi \in \mathcal{S}\left(\mathcal{H}\right)$ and $\phi \in \mathcal{S}\left(\mathcal{H}'\right)$, it holds that 
	\begin{equation}
		R_\mathcal{H}\left(\psi\right)r_\textup{conv}\left(\phi \to \psi \right) \leqq R_{\mathcal{H}'}\left(\phi\right).
	\end{equation} 
\end{definition}

The following proposition shows that a consistent resource measure must be {weakly additive} for any non-catalytically replicable resource states. 
\begin{proposition}\label{prop_weak_add}
	Let $R$ be a consistent resource measure. Then, for any resource state $\psi \in \mathcal{S}\left(\mathcal{H}\right)\setminus\mathcal{F}\left(\mathcal{H}\right)$ that is not catalytically replicable and for any positive integer $n$, it holds that
	\begin{equation}
		R\left(\psi^{\otimes n}\right) = nR\left(\psi\right).
	\end{equation}
\end{proposition}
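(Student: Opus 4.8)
The plan is to apply the defining inequality of a consistent resource measure twice, once in each direction between $\psi$ and $\psi^{\otimes n}$, feeding in the two asymptotic conversion rates $r_\textup{conv}\left(\psi\to\psi^{\otimes n}\right)$ and $r_\textup{conv}\left(\psi^{\otimes n}\to\psi\right)$. Since $\psi$ is a resource state that is not catalytically replicable, Theorem~\ref{thm_duplicability} forces $r_\textup{conv}\left(\psi\to\psi\right)=1$, and combining this with Proposition~\ref{lem9} applied with $\phi=\psi$ also yields $r_\textup{conv}\left(\psi^{\otimes n}\to\psi^{\otimes n}\right)=1$. These two identities are the backbone for pinning down the two mixed rates exactly.

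First I would establish one-line lower bounds on the two rates using only the identity channel and free partial traces. For $r_\textup{conv}\left(\psi^{\otimes n}\to\psi\right)$, observe that $m$ copies of $\psi^{\otimes n}$ are literally $nm$ copies of $\psi$ and $\lceil nm\rceil=nm$, so the identity channel realizes rate $n$ with zero error; hence $r_\textup{conv}\left(\psi^{\otimes n}\to\psi\right)\geqq n$. For $r_\textup{conv}\left(\psi\to\psi^{\otimes n}\right)$, note that for every $r<1/n$ and all sufficiently large $m$ one has $n\lceil rm\rceil\leqq m$, so discarding $m-n\lceil rm\rceil$ systems by a free partial trace produces $\psi^{\otimes n\lceil rm\rceil}=\left(\psi^{\otimes n}\right)^{\otimes\lceil rm\rceil}$ exactly; taking the supremum over $r<1/n$ gives $r_\textup{conv}\left(\psi\to\psi^{\otimes n}\right)\geqq 1/n$. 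I would then pin both rates to their exact values using the composition relation~\eqref{conversion_rate_ineq} with $\rho=\omega=\psi^{\otimes n}$ and $\sigma=\psi$, which together with $r_\textup{conv}\left(\psi^{\otimes n}\to\psi^{\otimes n}\right)=1$ gives
\begin{equation}
1\geqq r_\textup{conv}\left(\psi^{\otimes n}\to\psi\right)\,r_\textup{conv}\left(\psi\to\psi^{\otimes n}\right)\geqq n\cdot\frac{1}{n}=1,
\end{equation}
so equality holds throughout; combined with the two lower bounds this forces $r_\textup{conv}\left(\psi^{\otimes n}\to\psi\right)=n$ and $r_\textup{conv}\left(\psi\to\psi^{\otimes n}\right)=1/n$.

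The conclusion is then immediate. Applying the consistency inequality $R_\mathcal{H}(\psi)\,r_\textup{conv}(\phi\to\psi)\leqq R_{\mathcal{H}'}(\phi)$ with source $\psi$ and target $\psi^{\otimes n}$ gives $\tfrac{1}{n}R\left(\psi^{\otimes n}\right)\leqq R(\psi)$, i.e.\ $R\left(\psi^{\otimes n}\right)\leqq nR(\psi)$; applying it with source $\psi^{\otimes n}$ and target $\psi$ gives $nR(\psi)\leqq R\left(\psi^{\otimes n}\right)$. Combining the two yields $R\left(\psi^{\otimes n}\right)=nR(\psi)$. Because the two rates enter as the exact multiplicative constants $1/n$ and $n$, this last step requires no information about the sign of $R(\psi)$.

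The main obstacle I anticipate is the rigorous justification of the conversion rates, rather than the consistency step, which is purely algebraic. In particular, the lower bound $r_\textup{conv}\left(\psi\to\psi^{\otimes n}\right)\geqq 1/n$ needs care: for the finitely many small $m$ with $n\lceil rm\rceil>m$ the target space is larger than the input, so one must either argue that these indices do not affect the limit inferior in the definition~\eqref{eq:conversion_rate_set} or exhibit some free operation of the required type, which is delicate precisely in QRTs with no free states, where size-increasing free operations need not exist. I would resolve this by emphasizing that only the behaviour along the subsequence $m\in n\mathbb{N}$ — where the identity and partial trace act exactly — contributes to the $\liminf$, so that the set of achievable rates contains every $r<1/n$ and its supremum is exactly $1/n$.
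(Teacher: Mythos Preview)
Your proposal is correct and follows essentially the same approach as the paper: bound the two mixed conversion rates below by $1/n$ and $n$ using the identity and partial trace, cap their product by $1$ via~\eqref{conversion_rate_ineq}, deduce the exact values, and then apply consistency in both directions. The only cosmetic difference is that the paper pivots the composition inequality around $\psi$ directly, using $r_\textup{conv}(\psi\to\psi)=1$ from the hypothesis, whereas you pivot around $\psi^{\otimes n}$ and invoke Proposition~\ref{lem9} to obtain $r_\textup{conv}(\psi^{\otimes n}\to\psi^{\otimes n})=1$; the paper's choice saves that extra step, but both routes are valid. The paper also glosses over the small-$m$ technicality you flag, simply asserting the lower bounds ``since the identity map is a free operation'', so your level of care there already matches or exceeds the original.
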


\begin{proof}
	By~\eqref{conversion_rate_ineq}, it holds that
	\begin{equation}
		\begin{aligned}
		&r_\textup{conv}\left(\psi\to\psi^{\otimes n}\right)r_\textup{conv}\left(\psi^{\otimes n}\to\psi\right) \\
		&\leqq r_\textup{conv}\left(\psi\to\psi\right) \\
		&= 1.
		\end{aligned}
	\end{equation}
Since the identity map is a free operation, we have
	\begin{align}
		r_\textup{conv}\left(\psi\to\psi^{\otimes n}\right) &\geqq {\frac{1}{n}},\\
		r_\textup{conv}\left(\psi^{\otimes n}\to\psi\right) &\geqq {n}.
	\end{align}
Thus, we have
	\begin{align}
		r_\textup{conv}\left(\psi\to\psi^{\otimes n}\right) &= \frac{1}{n},\\
		r_\textup{conv}\left(\psi^{\otimes n}\to\psi\right) &= n.
	\end{align}
By the definition of a consistent resource measure combined with the equations above, it holds that
	\begin{align}
		nR\left(\psi\right)&\leqq R\left(\psi^{\otimes n}\right),\\
		\frac{1}{n}R\left(\psi^{\otimes n}\right) &\leqq R\left(\psi\right).
	\end{align}
Therefore, it holds that
		\begin{equation}
		R\left(\psi^{\otimes n}\right) = nR\left(\psi\right).
	\end{equation}
\end{proof}

We prove that the uniqueness inequality holds for a consistent resource measure that satisfies normalizations in the following propositions.  

\begin{proposition}\label{prop22}
Let $R_\mathcal{H}$ be a consistent resource measure. Suppose that $0\leqq R_\mathcal{H}\left(\psi\right) \leqq R_{\max}^{(\mathcal{H})}$ for any state $\psi \in \mathcal{S}\left(\mathcal{H}\right)$. Then, $R_\mathcal{H}$ satisfies
  \begin{equation}\label{eq:prop22_claim}
         R_\mathcal{H}\left(\psi\right)\leqq R_\textup{C}\left(\psi\right).
  \end{equation}
\end{proposition}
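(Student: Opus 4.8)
The plan is to unfold the definition of the resource cost in~\eqref{eq:resource_cost} and bound each term of the infimum directly from the defining inequality of a consistent resource measure, so that neither asymptotic continuity nor weak additivity is needed. Fix $\psi\in\mathcal{S}\left(\mathcal{H}\right)$ and let $\phi\in\mathcal{S}\left(\mathcal{H}\right)$ be any state with $r_\textup{conv}\left(\phi\to\psi\right)>0$. Applying the consistency inequality to the pair $\left(\psi,\phi\right)$, both living in $\mathcal{H}$, gives $R_\mathcal{H}\left(\psi\right)\,r_\textup{conv}\left(\phi\to\psi\right)\leqq R_\mathcal{H}\left(\phi\right)$, and the assumed upper bound $R_\mathcal{H}\left(\phi\right)\leqq R_{\max}^{(\mathcal{H})}$ then yields $R_\mathcal{H}\left(\psi\right)\leqq R_{\max}^{(\mathcal{H})}/r_\textup{conv}\left(\phi\to\psi\right)$ after dividing by the positive quantity $r_\textup{conv}\left(\phi\to\psi\right)$.

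Taking the infimum of the right-hand side over $\phi\in\mathcal{S}\left(\mathcal{H}\right)$ recovers exactly $R_\textup{C}\left(\psi\right)$ as defined in~\eqref{eq:resource_cost}, so the claimed bound $R_\mathcal{H}\left(\psi\right)\leqq R_\textup{C}\left(\psi\right)$ follows. The only point needing care is the treatment of states $\phi$ with $r_\textup{conv}\left(\phi\to\psi\right)=0$: for these the corresponding term $R_{\max}^{(\mathcal{H})}/r_\textup{conv}\left(\phi\to\psi\right)$ is $+\infty$ under the convention $1/0=\infty$, so they never lower the infimum and may simply be discarded. Since $\psi$ itself satisfies $r_\textup{conv}\left(\psi\to\psi\right)\geqq 1>0$ (because $\id\in\mathcal{O}\left(\mathcal{H}\right)$), the set of states with positive rate is nonempty and the infimum is genuinely determined by those states, for which the displayed inequality is valid; hence restricting attention to them is harmless.

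I expect this argument to be essentially immediate once consistency and the normalization hypothesis $R_\mathcal{H}\left(\phi\right)\leqq R_{\max}^{(\mathcal{H})}$ are combined, so there is no substantive obstacle; the sole subtlety is the division-by-zero convention just mentioned. Optionally one may invoke Theorem~\ref{theorem11} to observe that it suffices to run the same estimate with $\phi$ ranging only over the maximally resourceful states $\mathcal{G}\left(\mathcal{H}\right)$, but this refinement is not required to obtain the bound.
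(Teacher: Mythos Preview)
Your proof is correct and follows essentially the same approach as the paper: both combine the consistency inequality $R_\mathcal{H}(\psi)\,r_\textup{conv}(\phi\to\psi)\leqq R_\mathcal{H}(\phi)$ with the normalization hypothesis $R_\mathcal{H}(\phi)\leqq R_{\max}^{(\mathcal{H})}$ and then pass to the infimum defining $R_\textup{C}$. The paper phrases this via an $\epsilon$-argument invoking Theorem~\ref{theorem11} to select a near-optimal $\phi\in\mathcal{G}(\mathcal{H})$, whereas you bound each term of the infimum directly and explicitly address the $r_\textup{conv}(\phi\to\psi)=0$ case; these are cosmetic differences.
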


\begin{proof}
	Let $\epsilon$ be an arbitrary positive number. Due to Theorem~\ref{theorem11}, we take a maximally resourceful state $\phi \in \mathcal{G}\left(\mathcal{H}\right)$ such that
	\begin{equation}
		R_\textup{C}\left(\psi\right) + \epsilon \geqq \frac{R_{\max}^{(\mathcal{H})}}{r_\textup{conv}\left(\phi\to\psi\right)}.
	\end{equation}
Then, it holds that
  \begin{align}
	R_\textup{C}\left(\psi\right) + \epsilon
	& \geqq \frac{R_{\max}^{(\mathcal{H})}}{r_\textup{conv}\left(\phi\to\psi\right)}\\
	& \geqq \frac{R_\mathcal{H}\left(\phi\right)}{r_\textup{conv}\left(\phi\to\psi\right)}\\
	& \geqq R_\mathcal{H}\left(\psi\right), 
  \end{align}
where the second inequality follows from the definition of consistent resource measures. 
As we can take an arbitrarily small $\epsilon$, $R_\mathcal{H}\left(\psi\right)\leqq R_\textup{C}\left(\psi\right)$ holds. 
\end{proof}

\begin{proposition}\label{prop23}
	Let $R_\mathcal{H}$ be a consistent resource measure. Suppose that there exists a maximally resourceful state $\phi \in \mathcal{G}\left(\mathcal{H}\right)$ such that $R_\mathcal{H}\left(\phi\right) = R_{\max}^{(\mathcal{H})}$. Then, $R_\mathcal{H}$ satisfies
\begin{equation}\label{eq:prop23_claim}
         R_\mathcal{H}\left(\psi\right)\geqq R_\textup{D}\left(\psi\right)
  \end{equation}
for any state $\psi \in \mathcal{S}\left(\mathcal{H}\right)$.
\end{proposition}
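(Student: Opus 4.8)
The plan is to read off the claim directly from the defining inequality of a consistent resource measure, in exact parallel with the proof of Proposition~\ref{prop22}; no continuity, additivity, or normalization beyond the single hypothesis on the witnessing state will be needed. The guiding idea is that consistency lower-bounds $R_\mathcal{H}\left(\psi\right)$ by the conversion rate times the measure of the target, while the definition~\eqref{eq:distillable_resource} of $R_\textup{D}$ upper-bounds the distillable resource by the same conversion rate times $R_{\max}^{(\mathcal{H})}$; since the witnessing state is normalized to $R_{\max}^{(\mathcal{H})}$, these two bounds meet.

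First I would fix, using the hypothesis, a maximally resourceful state $\phi \in \mathcal{G}\left(\mathcal{H}\right)$ with $R_\mathcal{H}\left(\phi\right) = R_{\max}^{(\mathcal{H})}$. Applying the definition of a consistent resource measure with $\phi$ playing the role of the target state and $\psi$ playing the role of the source state (both living in $\mathcal{S}\left(\mathcal{H}\right)$) gives
\begin{equation}
R_\mathcal{H}\left(\phi\right)\, r_\textup{conv}\left(\psi \to \phi\right) \leqq R_\mathcal{H}\left(\psi\right),
\end{equation}
and substituting $R_\mathcal{H}\left(\phi\right) = R_{\max}^{(\mathcal{H})}$ turns this into $R_{\max}^{(\mathcal{H})}\, r_\textup{conv}\left(\psi \to \phi\right) \leqq R_\mathcal{H}\left(\psi\right)$.

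Next I would bound the distillable resource from above. Because $\phi \in \mathcal{G}\left(\mathcal{H}\right) \subseteq \mathcal{S}\left(\mathcal{H}\right)$ is an admissible point of the infimum defining $R_\textup{D}$ in~\eqref{eq:distillable_resource}, we have $R_\textup{D}\left(\psi\right) \leqq r_\textup{conv}\left(\psi \to \phi\right)\, R_{\max}^{(\mathcal{H})}$. Chaining this with the previous bound yields
\begin{equation}
R_\textup{D}\left(\psi\right) \leqq R_{\max}^{(\mathcal{H})}\, r_\textup{conv}\left(\psi \to \phi\right) \leqq R_\mathcal{H}\left(\psi\right),
\end{equation}
which is precisely~\eqref{eq:prop23_claim}.

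There is no genuine obstacle; the argument is a short chain of inequalities, and it even goes through gracefully when $r_\textup{conv}\left(\psi \to \phi\right) = \infty$, since then both extremes are forced to $+\infty$. The one point to watch is the bookkeeping on the direction of the consistency inequality: one must use the normalized maximal state $\phi$ as the \emph{target} and $\psi$ as the \emph{source}, so that $R_\mathcal{H}\left(\psi\right)$ sits on the larger side. Note also that, in contrast with Theorem~\ref{theorem11}, the infimum over $\mathcal{G}\left(\mathcal{H}\right)$ is not invoked here: a single witness $\phi$ of the normalization suffices, since the same state serves simultaneously in the consistency inequality and as a feasible point of the infimum in~\eqref{eq:distillable_resource}.
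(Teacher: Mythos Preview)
Your proof is correct and follows essentially the same approach as the paper: both use the single witnessing state $\phi$ first as a feasible point in the infimum defining $R_\textup{D}$ and then in the consistency inequality (with $\psi$ as source and $\phi$ as target), chaining $R_\textup{D}(\psi)\leqq r_\textup{conv}(\psi\to\phi)R_{\max}^{(\mathcal{H})}=r_\textup{conv}(\psi\to\phi)R_\mathcal{H}(\phi)\leqq R_\mathcal{H}(\psi)$. The only difference is the order in which you present the two inequalities, which is immaterial.
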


\begin{proof}
	By the definition of the distillable resource, it holds that
	\begin{equation}
		R_\textup{D}\left(\psi\right)  \leqq r_\textup{conv}\left(\psi\to{\phi}\right)R_{\max}^{(\mathcal{H})}.
	\end{equation}
Then, it holds that
  \begin{align}
	R_\textup{D}\left(\psi\right) 
	&\leqq  r_\textup{conv}\left(\psi\to{\phi}\right)R_{\max}^{(\mathcal{H})}\\
	&=  r_\textup{conv}\left(\psi\to{\phi}\right)R_\mathcal{H}\left({\phi}\right)\\
	&\leqq R_\mathcal{H}\left(\psi\right). 
  \end{align}
Therefore, $R_\textup{D}\left(\psi\right)\leqq R_\mathcal{H}\left(\psi\right)$ holds. 
\end{proof}

From Proposition~\ref{prop22} and Proposition~\ref{prop23}, we obtain the following corollary. 
	\begin{corollary}\label{coroll24}
		Let $R_\mathcal{H}$ be a consistent resource measure. Suppose that $R_\mathcal{H}$ satisfies the following assumptions:
	\begin{itemize}
		\item For any state $\psi \in \mathcal{S}\left(\mathcal{H}\right)$, $0\leqq R_\mathcal{H}\left(\psi\right) \leqq R_{\max}^{(\mathcal{H})}$;
		\item There exists $\phi \in \mathcal{G}\left(\mathcal{H}\right)$ such that $R_\mathcal{H}\left(\phi\right) = R_{\max}^{(\mathcal{H})}$.
	\end{itemize}
Then, it holds that
	\begin{equation}
		R_\textup{D}\left(\psi\right) \leqq R_\mathcal{H}\left(\psi\right)\leqq R_\textup{C}\left(\psi\right),
	\end{equation}
for any state $\psi \in \mathcal{S}\left(\mathcal{H}\right)$.
	\end{corollary}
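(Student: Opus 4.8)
The plan is to observe that this corollary is an immediate consequence of the two immediately preceding propositions, so the proof merely consists of invoking each of them under the stated hypotheses and concatenating the resulting bounds. First I would note that the first assumption, namely $0 \leqq R_\mathcal{H}(\psi) \leqq R_{\max}^{(\mathcal{H})}$ for every state $\psi \in \mathcal{S}(\mathcal{H})$, is exactly the hypothesis of Proposition~\ref{prop22}. Applying that proposition therefore yields the upper bound $R_\mathcal{H}(\psi) \leqq R_\textup{C}(\psi)$ for any $\psi$.

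Next I would invoke the second assumption, the existence of a maximally resourceful state $\phi \in \mathcal{G}(\mathcal{H})$ with $R_\mathcal{H}(\phi) = R_{\max}^{(\mathcal{H})}$, which is precisely the hypothesis of Proposition~\ref{prop23}. That proposition then supplies the lower bound $R_\textup{D}(\psi) \leqq R_\mathcal{H}(\psi)$ for any $\psi$. Chaining the two bounds produces the claimed sandwich $R_\textup{D}(\psi) \leqq R_\mathcal{H}(\psi) \leqq R_\textup{C}(\psi)$.

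Since both ingredients have already been established, there is no genuine obstacle at this step: the substance lies entirely in Propositions~\ref{prop22} and~\ref{prop23}, and the corollary simply records that their two hypotheses can be imposed simultaneously to bound a consistent resource measure between the distillable resource and the resource cost. I would keep the written proof to a single sentence citing both propositions, since the boundedness assumption and the normalization assumption decouple cleanly into the two halves of the inequality and require no further interaction.
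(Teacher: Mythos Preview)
Your proposal is correct and matches the paper's approach exactly: the paper presents this corollary with no proof beyond the sentence ``From Proposition~\ref{prop22} and Proposition~\ref{prop23}, we obtain the following corollary,'' which is precisely the decomposition you describe. There is nothing to add.
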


\begin{remark}
The second condition of Corollary~\ref{coroll24} can be replaced by the existence of a state $\rho \in \mathcal{S}\left(\mathcal{H}\right)$ (which is not necessarily maximally resourceful) such that $R_\mathcal{H}\left(\rho\right) = R_{\max}^{(\mathcal{H})}$. 
Suppose a resource measure $R_\mathcal{H}$ satisfies the following conditions: 
	\begin{itemize}
		\item The measure $R_\mathcal{H}$ is normalized in such a way that there exists a positive real number $C$ such that $0\leqq R_\mathcal{H}\left(\psi\right) \leqq C$ for any state $\psi \in \mathcal{S}\left(\mathcal{H}\right)$;
		\item The upper-bound of $R_\mathcal{H}$ is achieved by some state; that is, there exists $\rho \in \mathcal{S}\left(\mathcal{H}\right)$ such that $R_\mathcal{H}\left(\rho\right) = C$. 
	\end{itemize}
Then, there exists a maximally resourceful state $\phi \in \mathcal{G}\left(\mathcal{H}\right)$ such that $R_\mathcal{H}\left(\phi\right) = C$ by monotonicity of the resource measure.
\end{remark}

To observe whether a consistent resource measure satisfies asymptotic continuity, take any two states $\phi, \psi \in \mathcal{S}\left(\mathcal{H}\right)$ such that $r_\textup{conv}\left(\phi\to\psi\right) \leqq 1$. 
Consider a consistent resource measure $R$ satisfying $0\leqq R_\mathcal{H}\left(\psi\right) \leqq R_{\max}^{(\mathcal{H})}$ for any $\psi \in \mathcal{S}(\mathcal{H})$.
By the definition of a consistent resource measure, it holds that
	\begin{equation}\label{eq:continuity_rate}
		\frac{R_\mathcal{H}\left(\psi\right) - R_\mathcal{H}\left(\phi\right)}{R_{\max}^{(\mathcal{H})}} \leqq 1 - r_\textup{conv}\left(\phi\to\psi\right). 
	\end{equation}
Assume that the quantum system $\mathcal{H}$ is finite-dimensional, and take $R_{\max}^{(\mathcal{H})} = \log_2(\dim\mathcal{H})$ as {stated} in~\eqref{typical_normalized_constant}. 
Inequality~\eqref{eq:continuity_rate} suggests that a consistent resource measure has asymptotic continuity if $1 - r_\textup{conv}\left(\phi_{n_i}\to\psi_{n_i}\right)$ converges to zero for any sequence of positive integers ${(n_i)}_{i\in\mathbb{N}}$ and any sequences of states $\left(\phi_{n_i} \in \mathcal{S}(\mathcal{H}^{\otimes n_i})\right)$ and $\left(\psi_{n_i} \in \mathcal{S}(\mathcal{H}^{\otimes n_i})\right)$ satisfying $\lim_{i\to\infty}\|\phi_{n_i} - \psi_{n_i}\|_1 = 0$. However, in general, $1 - r_\textup{conv}\left(\phi_{n_i}\to\psi_{n_i}\right)$ is not necessarily small even if $\|\phi_{n_i} - \psi_{n_i}\|_1$ is small, because the convertibility of two states under free operations is not related to the distance between the two states. Therefore, a consistent resource measure is not necessarily asymptotically continuous. 
More generally, {the difference in the amount of resources present in two states is not necessarily related to the distance between the states since the preorder and the distance are independent.} 
Thus, we do not assume asymptotic continuity in the definition of a consistent resource measure.

\subsection{Example of Consistent Resource Measures}\label{Subsection5_d}
In this section, we show an example of {a} consistent resource measure, which is known as the regularized relative entropy of resource and widely used in known QRTs such as bipartite entanglement~\cite{Vedral1997}, coherence~\cite{Baumgratz2014} and magic states~\cite{Veitch2014}. 
We give the definition of the relative entropy of resource $R_{\textup{R}}$ in our framework. 
	\begin{definition}[Relative Entropy of Resource]
		The relative entropy of resource $R_\textup{R}$ is defined as
	\begin{equation}
		R_\textup{R}\left(\psi\right) \coloneqq \inf_{\phi \in \mathcal{F}\left(\mathcal{H}\right)} D\left(\psi\|\phi\right), 
	\end{equation}
where $D\left(\cdot\|\cdot\right)$ is the quantum relative entropy defined as $D\left(\psi\|\phi\right) =\tr \psi \log_2 \psi - \tr \psi \log_2 \phi$.
	\end{definition}
The relative entropy of resource $R_\textup{R}$ is {subadditive} 
since the set of free operations is closed under tensor product. 
Therefore, by {subadditivity} of $R_\textup{R}$, the regularized relative entropy of resource defined as 
	\begin{equation}
		R^\infty_\textup{R}\left(\rho\right) \coloneqq \lim_{n\to\infty}\frac{R_\textup{R}\left(\rho^{\otimes n}\right)}{n}
	\end{equation}
 exists~\cite{Donald2002}.

We show that the regularized relative entropy of resource serves as a consistent resource measure for a finite-dimensional convex QRT in which $\mathcal{F}\left(\mathcal{H}\right)$ for each $\mathcal{H}$ contains at least one full-rank state. 
Consider a convex QRT that is defined for finite-dimensional systems and satisfies the axioms of QRTs given in Sec.~\ref{Section2}. 
It has been shown that if the set of free states $\mathcal{F}\left(\mathcal{H}\right)$ for each $\mathcal{H}$ contains at least one full-rank state, the relative entropy of resource is asymptotically continuous~\cite{Synak2006,Brandao2010}.
On the other hand, in Refs.~\cite{Horodecki2002, Chitambar2018}, 
it is shown that 
\begin{equation}
	r_{\textup{conv}}(\phi \to \psi) \leqq \frac{f^\infty(\phi)}{f^\infty(\psi)}
\end{equation}
holds for an asymptotically continuous resource measure $f$ and its regularization $f^\infty$.
Therefore, the regularized relative entropy of resource $R^\infty_\textup{R}$ is a consistent resource measure for a convex QRT that has a full-rank free state in each dimension.
The QRT of bipartite entanglement~\cite{Chitambar2014}, coherence~\cite{Winter2016} and magic~\cite{Veitch2014} are known as convex QRTs with full-rank free states. In these QRTs, the regularized relative entropy of resource works as a consistent resource measure. 

We remark that this proof of the existence of a consistent measure is not applicable to non-convex QRTs because the relative entropy of resource for a non-convex set of free states can be discontinuous~\cite{Weis2012}. 
However, as mentioned in the last paragraph of Sec.~\ref{Subsection5_c},
asymptotic continuity is a sufficient but not necessary condition for a resource measure to be consistent.
Therefore, there may be a consistent resource measure that is not asymptotically continuous.
Thus, there may be a consistent resource measure even in QRTs that are not convex, not finite-dimensional, 
or {do not contain} full-rank free states while further research is needed to explicitly construct consistent resource measures in these QRTs.

\section{Conclusion}\label{Section6}
We have formulated and investigated quantum state conversion and resource measures in a framework of general QRTs to figure out general properties of quantum resources. 
{Our framework is based on minimal assumptions,} and hence covers a broad range of QRTs including those with non-unique maximally resourceful states, non-convexity, and infinite dimension. In our general framework, the existence of maximally resourceful states is no longer trivial, but we proved that there always exists a maximally resourceful state in the general QRTs.

To clarify general properties of resource manipulation, we investigated one-shot and asymptotic state conversions, which are central tasks in QRTs. We discovered {the existence of catalytically replicable states, which are resources that are infinitely replicable by free operations}. In addition, we introduced the distillable resource and the resource cost in our framework without assuming uniqueness of maximally resourceful states. We showed that the distillable resource and the resource cost are weakly subadditive. Furthermore, we showed that the distillable resource is always smaller or equal to the resource cost if there is no catalytically replicable state. 

As for quantification of quantum resources, we proved that the conventional normalization, asymptotic continuity, and {weak additivity} are incompatible with each other in general QRTs with non-unique maximally resourceful states. Motivated by this incompatibility, we introduced a consistent resource measure, which is consistent with the asymptotic state conversion rate. Moreover, we proved {that} a normalized consistent resource measure is bounded by the distillable resource and the resource cost, generalizing the previous work on the uniqueness inequality in the entanglement theory to general QRTs.

Owing to the generality, our formulations and results broaden potential applications of QRTs in the following future research directions. Since we formulated a framework of QRTs applicable to non-convex QRTs where randomness can be regarded as a resource, it would be interesting to find further applications of non-convex QRTs, such as analyses of random-number generation~\cite{Miguel2017} and quantum $t$-design~\cite{Cleve2009}.
In addition, since our framework forms a basis of QRTs on infinite-dimensional quantum systems, our results provide a foundation for applying QRTs to quantum field theory.
Since we discovered a counter-intuitive phenomenon of catalytically replicable resources, it is interesting to find more situations where catalytically replicable states arise.
Furthermore, while we showed that the regularized relative entropy serves as a consistent resource measure in convex finite-dimensional QRTs that have full-rank free states, construction of a consistent resource measure for all the QRTs in our framework including non-convex or infinite-dimensional QRTs is still open.
Finally, extension of our framework to dynamic resource~\cite{Theurer2019,Gour2019a,Liu_YC2020,Gour_Wilde2018,Li2018,Liu_ZW2019b,Takagi2020} would also be an interesting future direction.

We established general and fruitful structures of QRTs disclosing universal properties of quantum resources.
Owing to the broad applicability of our formulations, our results open a way to quantitative understandings of complicated quantum-mechanical phenomena that are sometimes hard to analyze, through a unified approach using our general formulation of QRTs.

\begin{acknowledgments}
We acknowledge Debbie Leung, Yui Kuramochi, Toshihiko Sasaki, and Masato Koashi for the helpful advice and discussion. 
K.\ K.\ was supported by Mike and Ophelia Lazaridis, and research grants by NSERC\@.
H.\ Y.\ was supported by CREST (Japan  Science and Technology Agency) JPMJCR1671, Cross-ministerial Strategic Innovation Promotion   Program (SIP) (Council for Science, Technology and Innovation (CSTI)), and JSPS Overseas Research Fellowships.
\end{acknowledgments}

\appendix
\setcounter{section}{0}

\section{\label{Appendix_A}Equivalence of Compactness in Weak Operator Topology and Trace Norm Topology}
In this section, we prove the following lemma, which shows that compactness in the weak operator topology is equivalent to that in the trace norm topology on a set of density operators. We exploit this lemma in the proof of Theorem~\ref{prop5} in Sec.~\ref{Subsection2_c}.
\begin{lemma}\label{l1}
For any set of density operators $K \subset \mathcal{D}\left(\mathcal{H}\right)$, $K$ is {compact in terms of the weak operator topology} if and only if
$K$ is compact in the trace norm topology.
\end{lemma}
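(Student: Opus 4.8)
The plan is to prove the two implications separately; the forward direction is immediate, while the reverse rests on a single convergence lemma. First, the trace norm topology is finer than the weak operator topology on $\mathcal{T}(\mathcal{H})$: if $\|T_i - T\|_1 \to 0$, then $|\tr[(T_i - T)A]| \leqq \|T_i - T\|_1 \|A\|_\infty \to 0$ for every $A \in \mathcal{B}(\mathcal{H})$, so WOT-convergence follows. Hence the identity $\iota\colon (K, \text{trace norm}) \to (K, \text{WOT})$ is continuous, and if $K$ is trace-norm compact then $K = \iota(K)$ is WOT-compact as a continuous image of a compact set. This settles the ``only if'' direction.

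For the converse, I would reduce everything to the claim that \emph{if a net $(\rho_i)$ in $\mathcal{D}(\mathcal{H})$ converges in the WOT to some $\rho \in \mathcal{D}(\mathcal{H})$, then $\|\rho_i - \rho\|_1 \to 0$}. Granting this, the claim says precisely that the identity map $(K, \text{WOT}) \to (K, \text{trace norm})$ preserves limits of nets and is therefore continuous; so if $K$ is WOT-compact, its continuous image, namely $K$ equipped with the trace norm topology, is compact. Equivalently, any net in $K$ has a WOT-convergent subnet with limit in $K$, which by the claim also converges in trace norm, so $K$ is trace-norm compact.

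To prove the claim I would combine a finite-rank truncation with the \emph{gentle measurement lemma}. Fix $\epsilon > 0$. Since $\rho$ is a density operator, its spectral decomposition furnishes a finite-rank projection $P$ with $\tr[\rho P] > 1 - \epsilon$. As $P$ and $PBP$ are bounded operators, WOT-convergence yields: (i) $\tr[\rho_i P] \to \tr[\rho P]$, whence $\tr[\rho_i(\mathbbm{1} - P)] < 2\epsilon$ for large $i$; and (ii) $\tr[(P\rho_i P)B] = \tr[\rho_i(PBP)] \to \tr[(P\rho P)B]$ for all $B \in \mathcal{B}(P\mathcal{H})$. Because $P\mathcal{H}$ is finite-dimensional, where the weak and norm topologies coincide, (ii) gives $\|P\rho_i P - P\rho P\|_1 \to 0$. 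The gentle measurement lemma, $\|\sigma - P\sigma P\|_1 \leqq 2\sqrt{\tr[\sigma(\mathbbm{1} - P)]}$ for any density operator $\sigma$, then controls the truncation errors for both $\rho$ and $\rho_i$ by $O(\sqrt{\epsilon})$, and a triangle inequality gives $\limsup_i \|\rho_i - \rho\|_1 = O(\sqrt{\epsilon})$. Letting $\epsilon \to 0$ proves the claim.

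The main obstacle is the claim itself, which is a genuinely infinite-dimensional statement: in finite dimensions all these topologies coincide and the lemma is vacuous. The crux is excluding escape of mass to infinity, and the decisive point is that WOT-convergence tested against the \emph{single} finite-rank operator $P$ already forces the tail mass $\tr[\rho_i(\mathbbm{1} - P)]$ to be small, i.e.\ a uniform tightness that fails for families such as $\{\ket{n}\bra{n}\}$, which are neither WOT- nor trace-norm compact. Throughout I would work with nets rather than sequences, since $K$ need not be WOT-metrizable when $\mathcal{H}$ is non-separable.
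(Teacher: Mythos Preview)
Your proof is correct and takes a genuinely different route from the paper's. The paper's argument for the hard direction (WOT-compact $\Rightarrow$ trace-norm compact) invokes two external results: the Eberlein--\v{S}mulian theorem to identify weak compactness with weak \emph{sequential} compactness, and then Dell'Antonio's lemma (that a sequence of density operators converging weakly to a density operator converges in trace norm) as a black box. You instead work with nets throughout and prove the convergence claim from scratch via finite-rank spectral truncation and the gentle measurement lemma --- essentially reproving Dell'Antonio's result by hand, and for nets rather than sequences. Your approach is more self-contained and avoids the functional-analytic machinery of Eberlein--\v{S}mulian; the paper's is terser but relies on two citations. Your remark that one must use nets because $K$ need not be WOT-metrizable for non-separable $\mathcal{H}$ is a fair point, though the paper sidesteps this precisely by invoking Eberlein--\v{S}mulian.

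One cosmetic slip: you have the ``if'' and ``only if'' labels reversed relative to the statement as written (WOT-compact \emph{if and only if} trace-norm compact makes ``if'' the easy direction, trace-norm $\Rightarrow$ WOT). This does not affect the mathematics.
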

\begin{proof}
Since the trace norm topology is stronger than the weak operator topology,
the \lq\lq{}if\rq\rq{} part is obvious.
Assume that $K$ is {compact in terms of the weak operator topology} to show the \lq\lq{}only if\rq\rq{} part.
To show the compactness in the trace norm topology, take an arbitrary sequence
${(\psi_n)}_{n \in \mathbb{N}}$ in $K$.
According to the Eberlein-\v{S}mulian theorem (\textit{e.g.}~\cite{dunfordschwartzvol1}, Theorem~V.6.1),
in the weak operator topology,
the condition of the compactness coincides with that of the sequential compactness.
Therefore,
there exists a subsequence ${(\psi_{n(k)})}_{k \in \mathbb{N}}$
converging to some $\psi \in K  \subset \mathcal{D}\left(\mathcal{H}\right)$ {in terms of the weak opertor topology}.
Moreover, according to~\cite{Dell'Antonio1967} (Lemma~2),
a sequence in $\mathcal{D}\left(\mathcal{H}\right)$ convergent to a density operator in terms of the weak operator topology
is in fact convergent to that density operator in terms of the trace norm topology.
Thus, ${(\psi_{n(k)})}_{k \in \mathbb{N}}$ is convergent to $\psi$ in terms of the trace norm topology.
Therefore $K$ is sequentially compact, hence compact, in the trace norm topology.
\end{proof}

\section{\label{Appendix_B} Uniqueness Inequality Based on Strong Superadditivity and Lower Semi-Continuity}
In this section, we prove the uniqueness inequality under the assumptions of {strong superadditivity} and {lower semi-continuity}, instead of asymptotic continuity as mentioned in Remark~\ref{remark_UI}. 
In Ref.~\cite{Lami2019}, {strong superadditivity} and {lower semi-continuity} are used to bound a resource cost in the context of a QRT of coherence, and the extension for general resources was mentioned. 
Our proof further generalizes this statement to the full uniqueness inequality for general QRTs by completing proofs of both bounds, that is, the bounds for the distillable resource as well as the resource cost.

\begin{proposition}
Let $\mathcal{H}$ be a quantum system. Suppose that there is no catalytically replicable state; that is, $r_\textup{conv}\left(\phi\to\phi\right) = 1$ for any resource state $\phi \in \mathcal{S}\left(\mathcal{H}\right)\setminus\mathcal{F}\left(\mathcal{H}\right)$. If a resource measure $R_\mathcal{H}$ satisfies the conventional normalization, {weak additivity}, {strong superadditivity}, and {lower semi-continuity}, 
then for any state $\psi \in \mathcal{S}\left(\mathcal{H}\right)$, $R_\mathcal{H}$ satisfies
  \begin{equation}\label{eq:prop14_claim_appendix}
         R_\textup{D}\left(\psi\right)\leqq R_{\mathcal{H}}\left(\psi\right)\leqq R_\textup{C}\left(\psi\right).
  \end{equation}	
\end{proposition}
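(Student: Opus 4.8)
The plan is to mirror the two-part structure of the proof of Proposition~\ref{prop24}, establishing $R_\textup{D}(\psi) \leq R_\mathcal{H}(\psi)$ and $R_\mathcal{H}(\psi) \leq R_\textup{C}(\psi)$ separately, but to replace every appeal to asymptotic continuity by a combination of strong superadditivity and lower semi-continuity. The key device is to reduce an approximate conversion between \emph{tensor-power} states to a statement about their \emph{single-copy marginals}: strong superadditivity lets me lower-bound the measure of an $n$-partite state by the sum of the measures of its $n$ single-system marginals, and lower semi-continuity then controls each marginal uniformly because all marginals lie in one common trace-norm neighborhood of a single fixed state.

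For the lower bound $R_\textup{D}(\psi) \leq R_\mathcal{H}(\psi)$, I would first use Theorem~\ref{theorem11} to fix a maximally resourceful $\phi \in \mathcal{G}(\mathcal{H})$ and an achievable rate $r \in \mathcal{R}(\psi \to \phi)$, together with free operations $\mathcal{N}_n \colon \mathcal{H}^{\otimes n} \to \mathcal{H}^{\otimes \lceil rn \rceil}$ such that $\sigma_n \coloneqq \mathcal{N}_n(\psi^{\otimes n})$ satisfies $\liminf_n \|\sigma_n - \phi^{\otimes \lceil rn \rceil}\|_1 = 0$. Monotonicity and weak additivity give $R_\mathcal{H}(\sigma_n) \leq n R_\mathcal{H}(\psi)$, while strong superadditivity yields $R_\mathcal{H}(\sigma_n) \geq \sum_{i=1}^{\lceil rn \rceil} R_\mathcal{H}\!\left((\sigma_n)_i\right)$, each single-system marginal $(\sigma_n)_i$ lying within $\|\sigma_n - \phi^{\otimes \lceil rn \rceil}\|_1$ of $\phi$ in trace norm since the partial trace is contractive. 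Invoking lower semi-continuity at $\phi$ in its neighborhood form---for every $\eta > 0$ there is a trace-norm ball around $\phi$ on which $R_\mathcal{H} > R_\mathcal{H}(\phi) - \eta = R_{\max}^{(\mathcal{H})} - \eta$---all $\lceil rn \rceil$ marginals exceed $R_{\max}^{(\mathcal{H})} - \eta$ simultaneously along the subsequence where the approximation error drops below the radius of that ball. Combining gives $n R_\mathcal{H}(\psi) \geq \lceil rn \rceil (R_{\max}^{(\mathcal{H})} - \eta)$; dividing by $n$, letting $n \to \infty$, then $\eta \to 0$, and maximizing over $r$ yields $R_\mathcal{H}(\psi) \geq r_\textup{conv}(\psi \to \phi) R_{\max}^{(\mathcal{H})} \geq R_\textup{D}(\psi)$.

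The upper bound $R_\mathcal{H}(\psi) \leq R_\textup{C}(\psi)$ is symmetric. Here I would take $\phi \in \mathcal{G}(\mathcal{H})$ and $r \in \mathcal{R}'(\phi \to \psi)$ with free operations $\mathcal{M}_n \colon \mathcal{H}^{\otimes \lfloor rn \rfloor} \to \mathcal{H}^{\otimes n}$, setting $\sigma_n \coloneqq \mathcal{M}_n(\phi^{\otimes \lfloor rn \rfloor})$ so that $\liminf_n \|\sigma_n - \psi^{\otimes n}\|_1 = 0$. Monotonicity and weak additivity give $R_\mathcal{H}(\sigma_n) \leq \lfloor rn \rfloor R_{\max}^{(\mathcal{H})}$, while strong superadditivity together with lower semi-continuity at $\psi$ give $R_\mathcal{H}(\sigma_n) \geq \sum_{i=1}^n R_\mathcal{H}\!\left((\sigma_n)_i\right) \geq n (R_\mathcal{H}(\psi) - \eta)$ along the appropriate subsequence. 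Dividing by $n$ and passing to the limit yields $r R_{\max}^{(\mathcal{H})} \geq R_\mathcal{H}(\psi)$; taking the infimum over achievable $r$ and then over $\phi \in \mathcal{G}(\mathcal{H})$ produces $R_\mathcal{H}(\psi) \leq R_\textup{C}(\psi)$.

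I expect the main obstacle to be justifying the uniform control of the marginals. Lower semi-continuity is only a pointwise $\liminf$ condition, and a naive attempt to apply it fails because the approximating states are indexed by the same parameter $n$ that fixes the number of copies, so there is no single fixed target system on which to take a limit. The resolution---and the crux of the argument---is that strong superadditivity collapses the $n$-copy estimate into $n$ single-copy estimates at a common point ($\phi$ or $\psi$), after which the neighborhood formulation of lower semi-continuity supplies a bound that is automatically uniform over all marginals at once. I would also note, mirroring Proposition~\ref{prop24}, that the hypothesis of no catalytically replicable state is precisely what forces $r_\textup{conv}(\psi \to \psi) = 1$ and keeps both $R_\textup{D}(\psi)$ and $R_\textup{C}(\psi)$ finite, so that the chain of inequalities is well defined and the real-valued measure $R_\mathcal{H}$ can sit between them.
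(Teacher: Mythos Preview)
Your proposal is correct and follows essentially the same route as the paper's proof: reduce the $n$-copy estimate to single-copy marginals via strong superadditivity, use contractivity of the partial trace to place every marginal in a small trace-norm ball around the target state, and then invoke lower semi-continuity at that single point to bound all marginals uniformly. The paper phrases the last step by picking the argmin marginal and applying lower semi-continuity to it, whereas you use the equivalent neighborhood formulation directly; these are the same idea, and the remaining chain of inequalities (monotonicity, weak additivity, conventional normalization) is identical.
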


\begin{proof}
First, we prove $R_\textup{D}\left(\psi\right)\leqq R_{\mathcal{H}}\left(\psi\right)$ for any state $\psi \in \mathcal{S}\left(\mathcal{H}\right)$. 
Let $\delta$ be an arbitrary positive number. 
Due to Theorem~\ref{theorem11}, we take a maximally resourceful state $\phi \in \mathcal{G}\left(\mathcal{H}\right)$ such that 
	\begin{equation}\label{eq:uniqueness_distill_0}
		R_\textup{D}\left(\psi\right) \leqq r_\textup{conv}\left(\psi\to\phi\right)R_{\max}^{(\mathcal{H})}\leqq R_\textup{D}\left(\psi\right) + \delta.
	\end{equation}
Let $r\coloneqq r_\textup{conv}\left(\psi\to\phi\right)$.
For any positive integer $n$, it holds that
	\begin{equation}\label{eq:uniqueness_distill_1}
		rR_{\max}^{(\mathcal{H})} \leqq\frac{\left\lceil rn\right\rceil}{n}R_{\max}^{(\mathcal{H})}.
	\end{equation}
Then, by {conventional normalization} and {weak additivity}, it holds that
	\begin{equation}\label{eq:uniqueness_distill_2}
		\begin{aligned}
			\frac{\left\lceil rn\right\rceil}{n}R_{\max}^{(\mathcal{H})}
			&= \frac{\left\lceil rn\right\rceil}{n} R_{\mathcal{H}}\left(\phi\right).
		\end{aligned}
	\end{equation}
By the definition of $r_\textup{conv}\left(\psi\to\phi\right)$ shown in~\eqref{eq:conversion_rate}, there exist free operations $\left(\mathcal{N}_n \in \mathcal{O}\left(\mathcal{H}^{\otimes n} \to \mathcal{H}^{\otimes\left\lceil rn\right\rceil}\right)\right)$ such that for any $\epsilon$, it holds that
	\begin{equation}
	  \label{eq:uniqueness_distill_trace_norm}
		\left\|\mathcal{N}_n\left(\psi^{\otimes n}\right) - \phi^{\otimes\left\lceil rn\right\rceil}\right\|_1 < \epsilon, 
	\end{equation}
for an infinitely large subset of $\mathbb{N}$.
{Note that $\mathcal{N}_n\left(\psi^{\otimes n}\right)\in\mathcal{S}\left(\mathcal{H}^{\otimes\lceil rn\rceil}\right)$ is a state on a system composed of $\lceil rn\rceil$ subsystems. 
For $k\in\left\{1,\ldots,\lceil rn\rceil\right\}$, let $\tilde{\phi}_{\epsilon}^{(k)} \in \mathcal{S}\left(\mathcal{H}\right)$ denote a state obtained by tracing out $(\lceil rn\rceil-1)$ subsystems except the $k$th one for $\mathcal{N}_n\left(\psi^{\otimes n}\right)$.}
{Strong} superadditivity yields
\begin{equation}
  R_{\mathcal{H}^{\otimes\lceil rn\rceil}}\left(\mathcal{N}_n\left(\psi^{\otimes n}\right)\right)\geqq\sum_{k=1}^{\lceil rn\rceil}R_{\mathcal{H}}\left(\tilde{\phi}_{\epsilon}^{(k)}\right).
\end{equation}
Then defining
\begin{equation}
  \tilde{\phi}_{\epsilon}\coloneqq\argmin_{\tilde{\phi}\in\left\{\tilde{\phi}_{\epsilon}^{(k)}:k\in\left\{1,\ldots,\lceil rn\rceil\right\}\right\}}R_{\mathcal{H}}\left(\tilde{\phi}\right),
\end{equation}
we have
\begin{equation}\label{eq:uniqueness_distill_strong_additivity}
  R_{\mathcal{H}^{\otimes\lceil rn\rceil}}\left(\mathcal{N}_n\left(\psi^{\otimes n}\right)\right)\geqq\lceil rn\rceil R\left(\tilde{\phi}_{\epsilon}\right).
\end{equation}
From~\eqref{eq:uniqueness_distill_trace_norm}, monotonicity of the trace distance implies
\begin{equation}
  \left\|\tilde{\phi}_{\epsilon} - \phi\right\|_1 < \epsilon.
\end{equation}
Because of {lower semi-continuity}, we can take a sufficiently small $\epsilon>0$ such that
\begin{equation}\label{eq:uniqueness_distill_5}
  R_{\mathcal{H}}\left(\phi\right)\leqq R_{\mathcal{H}}\left(\phi_\epsilon\right)+\delta.
\end{equation}
Using~\eqref{eq:uniqueness_distill_strong_additivity}, we obtain
\begin{equation}\label{eq:uniqueness_distill_3}
  \frac{\left\lceil rn\right\rceil}{n} R_{\mathcal{H}}\left(\phi_\epsilon\right)\leqq \frac{R_{\mathcal{H}^{\otimes\lceil rn\rceil}}\left(\mathcal{N}_n\left(\psi^{\otimes n}\right)\right)}{n}.
\end{equation}
By monotonicity and {weak additivity}, it holds that
\begin{equation}\label{eq:uniqueness_distill_4}
  \begin{aligned}
    \frac{R_{\mathcal{H}^{\otimes\left\lceil rn\right\rceil}}\left(\mathcal{N}_n\left(\psi^{\otimes n}\right)\right)}{n}
		&\leqq \frac{R_{\mathcal{H}^{\otimes n}}\left(\psi^{\otimes n}\right)}{n}\\
		&=\frac{nR_{\mathcal{H}}\left(\psi\right)}{n}\\
		&= R_{\mathcal{H}}\left(\psi\right).
  \end{aligned}
\end{equation}
Therefore, by~\eqref{eq:uniqueness_distill_0},~\eqref{eq:uniqueness_distill_1},~\eqref{eq:uniqueness_distill_2},~\eqref{eq:uniqueness_distill_5},~\eqref{eq:uniqueness_distill_3}, and~\eqref{eq:uniqueness_distill_4}, it holds that
\begin{align}
  R_\textup{D}\left(\psi\right) &\leqq R_{\mathcal{H}}\left(\psi\right) + \frac{\lceil rn\rceil}{n}\delta\\
				&\leqq R_{\mathcal{H}}\left(\psi\right) + \left(\frac{R_\textup{D}\left(\psi\right)+\delta}{R_{\max}^{(\mathcal{H})}}+\frac{1}{n}\right)\delta\\
				&\leqq R_{\mathcal{H}}\left(\psi\right) + \left(1+\frac{\delta}{R_{\max}^{(\mathcal{H})}}+1\right)\delta,
\end{align}
where the last inequality follows from Proposition~\ref{proposition4_a_1} and $\frac{1}{n}\leqq 1$.
Since we can take arbitrarily small $\delta$, it holds that $R_\textup{D}\left(\psi\right)\leqq  R_{\mathcal{H}}\left(\psi\right)$.

Next, we prove $R_\textup{C}\left(\psi\right)\geqq  R_{\mathcal{H}}\left(\psi\right)$ for any state $\psi \in \mathcal{S}\left(\mathcal{H}\right)$.
Let $\delta$ be an arbitrary positive number. 
Due to Theorem~\ref{theorem11}, we take a maximally resourceful state $\phi \in \mathcal{G}\left(\mathcal{H}\right)$ such that 
	\begin{equation}\label{eq:uniqueness_cost_0}
		\begin{aligned}
		R_\textup{C}\left(\psi\right) + \delta
		&\geqq \frac{R_{\max}^{(\mathcal{H})}}{r_\textup{conv}\left(\phi\to\psi\right)}\\
		& = r'_\textup{conv}\left(\phi\to\psi\right)R_{\max}^{(\mathcal{H})}\\
		&\geqq R_\textup{C}\left(\psi\right).
		\end{aligned}
	\end{equation}
	Let $r\coloneqq r'_\textup{conv}\left(\phi\to\psi\right)$. For any positive integer $n$, it holds that
	\begin{equation}\label{eq:uniqueness_cost_1}
		rR_{\max}^{(\mathcal{H})} \geqq\frac{\left\lfloor rn\right\rfloor}{n}R_{\max}^{(\mathcal{H})}.  
	\end{equation}
By the conventional normalization, {weak additivity}, and monotonicity, it holds that
	\begin{equation}\label{eq:uniqueness_cost_2}
		\begin{aligned}
		\frac{\left\lfloor rn\right\rfloor}{n}R_{\max}^{(\mathcal{H})}
		&= \frac{\left\lfloor rn\right\rfloor}{n}R_{\mathcal{H}}\left(\phi\right)\\
		&= \frac{R_{\mathcal{H}^{\otimes\left\lfloor rn\right\rfloor}}\left(\phi^{\otimes\left\lfloor rn\right\rfloor}\right)}{n}\\
		&\geqq \frac{R_{\mathcal{H}^{\otimes n}}\left(\mathcal{M}_n\left(\phi^{\otimes\left\lfloor rn\right\rfloor}\right)\right)}{n},
		\end{aligned}
	\end{equation}
for any free operation $\mathcal{M}_n$.
By the definition of $r'_\textup{conv}\left(\phi\to\psi\right)$ shown in~\eqref{eq:conversion_rate_cf}, there exist free operations $\left(\mathcal{M}_n \in \mathcal{O}\left(\mathcal{H}^{\otimes\left\lfloor rn\right\rfloor} \to \mathcal{H}^{\otimes n}\right)\right)$ such that for any $\epsilon$, it holds that
	\begin{equation}\label{eq:uniqueness_cost_trace_norm}
		\left\|\mathcal{M}_n\left(\phi^{\otimes\left\lfloor rn\right\rfloor}\right) - \psi^{\otimes n}\right\|_1 < \epsilon, 
	\end{equation}
for an infinitely large subset of $\mathbb{N}$. 
{Note that $\mathcal{M}_n\left(\phi^{\otimes\left\lfloor rn\right\rfloor}\right)\in\mathcal{S}\left(\mathcal{H}^{\otimes n}\right)$ is a state on a system composed of $n$ subsystems. 
For each $k\in\left\{1,\ldots,n\right\}$, let $\tilde{\psi}_{\epsilon}^{(k)} \in \mathcal{S}\left(\mathcal{H}\right)$  denote a state obtained by tracing out $(n-1)$ subsystems except the $k$th one for $\mathcal{M}_n\left(\phi^{\otimes\left\lfloor rn\right\rfloor}\right)$.}
{Strong} superadditivity yields
\begin{equation}
  R_{\mathcal{H}^{\otimes n}}\left(\mathcal{M}_n\left(\phi^{\otimes\left\lfloor rn\right\rfloor}\right)\right)\geqq\sum_{k=1}^{n}R_{\mathcal{H}}\left(\tilde{\psi}_{\epsilon}^{(k)}\right).
\end{equation}
Then defining
\begin{equation}
  \tilde{\psi}_{\epsilon}\coloneqq\argmin_{\tilde{\psi}\in\left\{\tilde{\psi}_{\epsilon}^{(k)}:k\in\left\{1,\ldots,\lceil rn\rceil\right\}\right\}}R_{\mathcal{H}}\left(\tilde{\psi}\right),
\end{equation}
we have
\begin{equation}\label{eq:uniqueness_cost_strong_additivity}
  R_{\mathcal{H}^{\otimes n}}\left(\mathcal{M}_n\left(\phi^{\otimes\left\lfloor rn\right\rfloor}\right)\right)\geqq nR\left(\tilde{\psi}_{\epsilon}\right).
\end{equation}
Thus, we obtain
\begin{equation}\label{eq:uniqueness_cost_3}
  \frac{R_{\mathcal{H}^{\otimes n}}\left(\mathcal{M}_n\left(\phi^{\otimes\left\lfloor rn\right\rfloor}\right)\right)}{n}\geqq R\left(\tilde{\psi}_{\epsilon}\right).
\end{equation}
From~\eqref{eq:uniqueness_cost_trace_norm}, monotonicity of the trace distance implies
\begin{equation}
  \left\|\tilde{\psi}_{\epsilon} - \psi\right\|_1 < \epsilon.
\end{equation}
Because of {lower semi-continuity}, we can take a sufficiently small $\epsilon>0$ such that
\begin{equation}\label{eq:uniqueness_cost_5}
  R_{\mathcal{H}}\left(\psi_\epsilon\right)\geqq R_{\mathcal{H}}\left(\psi\right)-\delta.
\end{equation}
Therefore, by~\eqref{eq:uniqueness_cost_0},~\eqref{eq:uniqueness_cost_1},~\eqref{eq:uniqueness_cost_2},~\eqref{eq:uniqueness_cost_3}, and~\eqref{eq:uniqueness_cost_5}
it holds that
	\begin{equation}
		\begin{aligned}
			R_\textup{C}\left(\psi\right) + \delta
			&\geqq R_{\mathcal{H}}\left(\psi\right) -\delta.
		\end{aligned}
	\end{equation}
Since we can take arbitrarily small $\delta$, it holds that
	\begin{equation}
		R_\textup{C}\left(\psi\right) \geqq R_{\mathcal{H}}\left(\psi\right).
	\end{equation}
\end{proof}

\newpage
\bibliographystyle{unsrtnat}
\bibliography{GQRT}

\end{document}